\theoremstyle{theorem}
\newtheorem*{thm}{Theorem}
\newtheorem{fact}{Fact}
\newtheorem{corl}{Corollary}
\newtheorem{pro}{Proposition}
\newtheorem{lem}{Lemma}
\newtheorem*{cla}{Claim}
\newtheorem*{que}{Question}
\newtheorem*{queos}{Simpler Question}
\newtheorem*{condx}{Condition $\boldsymbol{(\ast)}$}
\newtheorem*{tam}{Tamed property}
\newtheorem*{stru}{Structural factorization}
\newtheorem*{rigi}{Rigidity property}
\theoremstyle{definition}
\newtheorem{defn}{Definition}
\newtheorem*{nott}{Notation}
\newtheorem{rem}{Remark}
\newtheorem{exe}{Example}
\newtheorem*{baby}{Extremely simple situation}
\def\Dsl{\,\raise.15ex\hbox{/}\mkern-13.5mu D}
\def\dsl{\,\raise.25ex\hbox{/}\mkern-10.5mu \partial}
\title{Swampland Geometry and\\ the Gauge Couplings 
}
\authors{Sergio Cecotti\footnote{e-mail: {\tt cecotti@sissa.it}}\vskip 9pt

\centerline{SISSA, via Bonomea 265, I-34100 Trieste, ITALY}
%\centerline{${}^{\pmb\star}$ Department of Mathematics, Harvard University, Cambridge MA 02138, USA}
%\centerline{${}^\ddagger$ Department of Physics Jefferson Physical Laboratory,}
%\centerline{Harvard University, Cambridge, MA 02138, USA}
}
\abstract{The purpose of this paper is two-fold.
First we review in detail the geometric aspects of the swampland program for supersymmetric 
4d effective theories using a new and unifying language we dub
``domestic geometry'', the generalization of special K\"ahler
geometry which does not require the underlying manifold to be K\"ahler or
have a complex structure. All 4d SUGRAs are described
by domestic geometry. As special K\"ahler geometries, domestic geometries carry 
\emph{formal} brane amplitudes: 
when the domestic geometry describes the supersymmetric low-energy limit of a
consistent quantum theory of gravity,   
its formal brane amplitudes have the right properties to be \emph{actual} branes. The main datum of 
the domestic geometry of a 4d SUGRA is
 its gauge coupling, seen as a map
from a manifold which satisfies the geometric Ooguri-Vafa conjectures to the Siegel variety;
to understand the properties of the quantum-consistent gauge couplings
we discuss several novel aspects of such ``Ooguri-Vafa'' manifolds, including their Liouville properties.

Our second goal is to present some novel speculation on the extension of the swampland program to \emph{non}-supersymmetric effective theories of gravity. The idea is that the domestic geometric description of
 the quantum-consistent effective theories extends, possibly with some qualifications,
also to the non-supersymmetric case.   
%
%
%We discuss gauge couplings in an effective theory of quantum gravity to motivate the proposal of a new swampland conjecture stating that the consistent complexified gauge couplings are given by a harmonic map twisted by the monodromy representation and
%tamed by the gauge symmetries.
%\textcolor{blue}{In the context of the swampland program, we present some preliminary comments on the gauge couplings, seen as morphisms between Ooguri-Vafa manifolds (i.e.\! Riemannian manifolds satisfying the Ooguri-Vafa swampland conjectures.
%Even though general Ooguri-Vafa manifolds are not K\"ahler nor have a natural complex structure, they are still Liouvillic, hence the ``power of holomorphy'' is still available even when no supersymmetry is present. This fact suggests a rather natural (tentative)
%precise conjecture on the gauge couplings in a low-energy effective theory of a consistent quantum gravity: they are a harmonic map twisted by the discrete part of the gauge group and tamed by its continuous parts. Explicit examples of consistent effective theories arising from superstring compactifications satisfy the proposed condition.} \textcolor{red}{As a by-product we show that
%the complex moduli of all Calabi-Yau 3- and 4-folds are generalized gravitational instantons.}
}
\begin{document}
\maketitle

\tableofcontents

\newpage 

\section{Introduction}

The swampland program \cite{Vaf,OoV}  (for reviews see \cite{Rev1,Rev2}) aims to characterize the sparse subset of effective field theories which arise
as low-energy limits of consistent theories of quantum gravity inside the much larger set of formal theories which ``look'' consistent from 
a low-energy perspective, but cannot be completed to a fully consistent theory of quantum gravity. A model which cannot be consistently completed is said to belong to the \emph{swampland.}

The program has taken the form of a rapidly growing list of conjectural necessary conditions (the ``swampland conjectures'' \cite{Vaf,OoV,Rev1,Rev2}) that all effective theories of quantum gravity should
obey. These conditions are motivated by general physical considerations (in particular the thermodynamics of black holes \cite{bsei}) as well as by lessons drawn from  the large supply of consistent effective theories of gravity which describe the light degrees of freedom in a controlled vacuum of superstring theory (the string lamppost principle (SLP) \cite{slp}). We have full analytic control on the quantum stability of a candidate string  vacuum only when it preserves some supersymmetry,
and we can write a precise effective Lagrangian only when its couplings are protected by a SUSY non-renormalization theorem: 
so all reliable examples at our disposal have \emph{extended} supersymmetry.\footnote{\ This limitation in the available explicit examples may \emph{a priori} lead to some bias in our understanding of quantum gravity.}

The best understood examples are the effective theories of Type IIB compactified to 4d on some Calabi-Yau (CY) 3-fold: these examples played a major role in the development of the swampland program
\cite{OoV}. The resulting low-energy 4d $\cn=2$ supergravities are described
(in the vector sector) by special K\"ahler geometry \cite{cec,stro}.
Special K\"ahler geometry is a very powerful tool to study the quantum consistency
of an $\cn=2$ supergravity; a significant part of the swampland program 
\cite{grimm1,grimm2,grimm3,klemm,grimm4,grimm5,swampIII,sg1,sg2} is dedicated to the detailed analysis of the ``motivic''
special K\"ahler manifolds which describe the moduli geometry of \emph{actual} CY manifolds (or motives).  In our opinion,  the swampland program has reached a rather satisfactory shape in this specific $\cn=2$ set-up and ``motivic''
special K\"ahler geometry must be seen as a successful \emph{model}
of swampland theory. It would be highly desirable to extend this model theory to broader contexts.

Supersymmetric effective theories are better behaved in many ways, and SUSY
 seems to play a crucial role in the consistency of many explicit examples. In some contexts  \cite{susyADS} it is hard to grasp how a model can possibly be consistent  \emph{without} being supersymmetric. 
 
 On the other hand, the real world does not look supersymmetric at low energy, and hence consistent non-supersymmetric effective theories of gravity ought to exist. An important goal of the swampland program is to say something less vague about the \textit{non}-SUSY effective theories of consistent quantum gravity. They are expected to be quite rare and remarkable animals, in a sense 
even more ``magical'' than their SUSY counterparts.
Morally speaking, these theories should enjoy  
``all the good properties'' of SUSY while avoiding its phenomenological drawbacks as the existence of (unobserved) super-partners. We should not expect such ``magical'' theories to share the properties which \emph{generically} hold 
for a non-SUSY field theory, because they are extremely \emph{non-}generic. From this point of view, the highly non-generic value of the real-world effective cosmological constant $\Lambda$ is hardly a surprise.

The author's own prejudice is that there should be a ``more general'' swampland principle, which under appropriate circumstances reduces to (or implies) SUSY, but which continues to make sense in some very restricted non-supersymmetric context where most of the ``good'' facts about SUSY still hold. The reader will find many echoes of the prejudice in this paper.
\medskip

The purpose of this paper is two-fold. First we present a systematic review
of the geometric aspects of the swampland program in the SUSY context
from a novel unifying perspective.
The two key concepts of this approach are:
\begin{itemize}
\item[\textit{(i)}] \emph{Ooguri-Vafa (OV) manifolds}, i.e.\! the Riemannian spaces with the correct properties (according to \cite{Vaf,OoV}) to be the scalars' spaces of a consistent effective theory;
\item[\textit{(ii)}] \emph{domestic geometry}, with its brane amplitudes and generalized entropy functions.
\end{itemize}
Domestic geometry is the direct generalization of special K\"ahler geometry which
does \emph{not} require the underlying manifold to carry a complex structure. 
All 4d supergravities are described by a domestic geometry
 in the same exact way that the vector-multiplet sector of 4d $\cn=2$ SUGRA is described 
by the usual special K\"ahler geometry. In particular, the vectors' couplings
have an universal expression in terms of domestic brane amplitudes.
Domestic geometry looks particularly deep and natural when the 
underlying manifold is OV. Combining the two ingredients \textit{(i)},\textit{(ii)}
one gets \emph{arithmetic} domestic geometry which is the obvious 
domestic generalization of the class of ``motivic'' special K\"ahler geometries which describe the vector-sector of quantum-consistent 4d $\cn=2$ supergravities.
This allows to phrase the swampland conditions for the general SUSY model 
in the same suggestive language used in the $\cn=2$ case \cite{swampIII}:
\begin{quote}
A 4d supergravity is described by a domestic geometry from which 
we compute \emph{formal} brane amplitudes. \textit{If the supergravity is \emph{not} in the swampland,
the formal amplitudes describe \emph{actual} branes.}
\end{quote}

\noindent OV manifolds have several nice properties. In particular, they are Liouvillic for the sub-harmonic functions.
This entails that the rigidity properties of ``motivic'' special K\"ahler geometry (the ``power of holomorphy''  \cite{power})
hold for all OV manifolds even if they do not carry any complex structure.
\medskip

Our second main purpose is to present some novel speculations about the
swampland conditions in the \emph{non-}SUSY case.
The space of light scalars $\mathscr{M}$ is (conjecturally) an Ooguri-Vafa manifold
in \emph{all} quantum-consistent
effective theories of gravity, supersymmetric or not. Arithmetic domestic geometry
is naturally defined on all OV manifolds, and its statements make
perfect sense 
for \emph{non}-SUSY effective Lagrangians.
One is then led to speculate that domestic geometry -- which applies
to all SUSY cases -- may also be relevant to describe vectors' couplings
 in quantum-consistent
\emph{non}-SUSY 4d effective theories. The speculation may be stated
at different levels of precision. 
Domestic geometry may be: \textit{(a)} just qualitatively
 valid, or \textit{(b)} semi-quantitatively correct, or even 
 \textit{(c)} exact. 
Besides its aesthetic geometric appeal, and the evidence from SUSY examples, 
the speculation rests on
some heuristic physical arguments, based on the idea of ``naturalness'' in the IR description,
which suggest that 
at least
version \textit{(b)} should hold.

\subparagraph{Warnings to the reader.} The materials presented in this paper are
of different nature and logical status. There are: \textit{(a)} mathematically rigorous
geometric constructions/results; \textit{(b)} physical statements which aim only
to a physicists' level of rigor; \textit{(c)} widely accepted swampland conjectures (which
are taken as ``facts'' in most of the recent literature); \textit{(d)} newly \emph{proposed}
swampland conjectures which are still open to discussion by the community; \textit{(e)}
 speculative statements which are meant as an invitation to further work
in the hope of re-formulating them as precise swampland conjectures in the near future. 
To make the distinction clear
to the reader,
statements related to \textit{(a)} are labelled \textbf{Theorem}, \textbf{Lemma}, and so.
Most statements in the logical classes \textit{(b)} and \textit{(c)} are dubbed \textbf{Facts}.
New conjectures and speculations are called by their proper name. 

In the paper we often use the phrase: \textsf{``[this statement is] true in all known examples
 of reliable quantum-consistent effective theories of gravity''} or similar ones.  Since there is
 an ongoing debate in the community about which examples should be considered
 ``reliable'', we need to specify the class of examples we have in mind. Our notion of
 \emph{reliability} is the most conservative one, that is, we restrict to examples which are 
  under full analytic control and are pacifically considered ``reliable'' by all experts in the field.

\subparagraph{Organization of the paper.} The rest of this paper is organized as follows. In section 2 we discuss some general properties of 4d effective field theories and some technical aspects of the singularities and asymptotics of the moduli spaces. In section 3  we define the OV manifolds and discuss their basic properties. In section 4 we describe the gauge couplings seen as a map
from the moduli space to the Siegel variety. Section 5 contains the basics of 
domestic geometry; a large part of the section is a detailed review of
(generalized) $tt^*$ geometry which is the model which inspires
all constructions in domestic geometry.  Section 6 describes how 
domestic geometry applies to all 4d SUGRAs and may be used to reformulate the swampland conditions in the SUSY context in a more convenient way. Section 7 describes in more detail domestic geometry,
presenting the math arguments for the existence and uniqueness of the underlying tamed maps, and
rephrasing them as heuristic physical arguments in favour of our speculation that 
arithmetic domestic geometry is relevant for the \emph{non}-SUSY swampland program.
Some additional technicality is confined in the appendices.
 \medskip
 
 \begin{rem} The recent papers \cite{bbb1,bbb2} describe a different uniform geometric approach to supergravity, dubbed \emph{bosonic supergravity.} It would be interesting to understand geometric aspects of the swampland conditions from that perspective.  
 \end{rem}

%\medskip

%The paper is written in the ``verbose'' mode. Experts may jump over large parts of it. 

\section{Generalities on effective theories}\label{s:setup} 

\subsection{Our set-up}
To make the story a bit shorter, in this paper we consider only four-dimensional effective theories. Although methods and ideas apply to much wider contexts, we assume a vanishing cosmological constant, $\Lambda=0$, and focus on effective Lagrangians valid at parametrically small energies. In particular, we make the following two assumptions:
\begin{itemize}
\item[\bf A1] all visible IR gauge degrees of freedom (d.o.f.) are in their Coulomb phase. Locally at generic points in field space we may choose 
 an electro-magnetic duality frame, with respect to which these d.o.f.\! are  described by Abelian gauge vectors $A^a$ ($a=1,\dots, h$) with field strengths $F^a=dA^a$; 

\item[\bf A2] the (exactly) massless fields carry no electric or magnetic charge under the $A^a$'s. 
\end{itemize}  
 The light degrees of freedom then consist of 
a space-time metric $g_{\mu\nu}$, $h$ vector fields $A^a$, and $m$ massless real scalars $\phi^i$, together with
spin-1/2 fermions and possibly spin-3/2 gravitini (only in the SUSY case). In the Einstein frame (and the chosen electro-magnetic frame) the effective Lagrangian takes (locally in field space) the form
\begin{equation}\label{lag}
\mathscr{L}_\text{eff}= -\sqrt{-g}\Big(\frac{1}{2}R+\frac{1}{2}G(\phi)_{ij}\,\partial^\mu \phi^i\partial_\mu \phi^j -
\frac{i}{16\pi}\tau(\phi)_{ab}F^a_+F^b_++\frac{i}{16\pi}\bar\tau(\phi)_{ab}F^a_-F^b_-+\cdots\Big)
\end{equation}
where\footnote{\ We use the shorthand $F^a_\pm F^b_\pm\equiv (F^a_\pm)_{\mu\nu}(F^b_\pm)^{\mu\nu}$.} $F_\pm^a =\tfrac{1}{2}(1-\mp i\,\ast)F^a$ is the (anti-)self-dual part of the field-strength $F^a$. 
\medskip

The swampland program asks for a characterization of the field-dependent couplings
in the Lagrangians $\mathscr{L}_\text{eff}$ which describe low-energy limits  of consistent quantum gravities. 
In practice, one looks for necessary conditions they should satisfy, which 
are usually phrased as sufficient conditions for the model \eqref{lag} to sink in the swampland.
\medskip
  
In this paper we limit ourselves to the geometric part of the swampland program, i.e.\! to the two-derivative couplings
 $G(\phi)_{ij}$ and $\tau(\phi)_{ab}$. 
 \medskip
 
 The characterization of the consistent scalars' kinetic couplings $G(\phi)_{ij}$ takes advantage from their geometrical interpretation \cite{OoV}.
The scalar fields $\phi^i$ are seen as local coordinates on a connected ``manifold'' $\widetilde{\mathscr{M}}$, of dimension $m$,
endowed with the Riemannian metric $G_{ij}\equiv G(\phi)_{ij}$. 
The Riemannian metric is smooth at generic points of $\widetilde{\mathscr{M}}$.

\begin{rem} In a \emph{generic} non-SUSY effective field theory we do not expect 
a non-trivial space $\widetilde{\mathscr{M}}$ of exactly massless scalars since the
flat directions of the potential are usually lifted by quantum effects.
This needs not to apply in the present context, since quantum-consistent theories of gravity are highly non-generic. On the other side, the conclusions of this paper remain valid even if the scalars parametrizing $\widetilde{\mathscr{M}}$
are not exactly massless but only hierarchically lighter than the scale above which
the low-energy description breaks down.   
\end{rem}

 \subsection{The $U$-duality group}
 
A basic datum of the effective field theory is its \emph{gauge group}.
The discrete part of the gauge group is called the $U$-duality group
$\cg$. $\cg$ is a redundancy of the description, so it is tautologically
 an exact symmetry of the full quantum theory not just of its IR sector. 
 More precisely, by $\cg$ we mean the quotient of the full discrete gauge group of the underlying UV complete theory
 which acts faithfully on the light bosonic fields.
  In particular, $\cg$ acts on fluxes of forms of various degrees $k$
 \be\label{pppoo0}
 \frac{1}{2\pi}\oint_{\Sigma_k} F^{(k)}
 \ee
 which are integral by generalized Dirac quantization: 
 the fluxes \eqref{pppoo0} take value in a lattice $\Lambda$ endowed with
 a non-degenerated bilinear form $\Lambda\otimes \Lambda\to \Z$ (the generalized Dirac pairing).
 The completeness conjecture \cite{bsei} states that in a consistent
 theory all fluxes allowed by Dirac quantization are realized by some physical state.
  In 4d the electro-magnetic charges take value in a lattice
 $V_\Z\cong \Z^{2h}$ with a skew-symmetric principal\footnote{\ \emph{Principal} means that the 
 Dirac pairing $V_\Z\otimes V_\Z\to \Z$
 yields an identification $V_\Z\equiv V_\Z^\vee$.} Dirac pairing. The action of $\cg$ on
 the scalar fields and the electro-magnetic fluxes
defines a group embedding\footnote{\ $\mathrm{Iso}(\widetilde{\mathscr{M}})$ is the isometry group of
 $\widetilde{\mathscr{M}}$ with metric $G(\phi)_{ij}$.}
 \be
\cg\hookrightarrow \mathrm{Iso}(\widetilde{\mathscr{M}}\;)\times Sp(2h,\Z)
 \ee
The kernel of the map $\cg\to \mathrm{Iso}(\widetilde{\mathscr{M}}\;)$ is finite;\footnote{\ Indeed the kernel is both compact and discrete.}
since we always work modulo finite groups,\footnote{\ The symbol $\sim$ stands for equivalence up to commensurability \cite{morris} i.e.\! equality up to finite groups.} we shall not distinguish $\cg$ from its
image in $\mathrm{Iso}(\widetilde{\mathscr{M}}\;)$. The map
$\rho\colon
\cg\to Sp(2h,\Z)$ is called the \emph{monodromy representation}, and
its image $\Gamma\subset Sp(2h,\Z)$ the \emph{monodromy group.}
Hence the $U$-duality group $\cg$ is a group extension
\be\label{kkkz12m}
1\to \cg^0\to \cg \to \Gamma\to 1
\ee
that is,
%\footnote{\ See \cite{bourbaki}, chapter 1, \S.6, No.1.} 
 $\cg \cong \cg^0\rtimes \Gamma$.
%  via this isomorphism, we have a (non unique) embedding 
%\be
%s\colon\Gamma\to \cg\subset \mathrm{Iso}(\widetilde{\mathscr{M}}\;).
%\ee
A version of the $\pi_1$-conjecture \cite{OoV},
which holds true in all known examples, states that the $U$-duality group is isomorphic to 
a subgroup  
%says that
%$\Gamma$ contains a finite-index subgroup generated by 
%unipotent $2h\times 2h$ integral matrices. In particular, when $\Gamma$ does not contain a non-trivial unipotent element, 
%$\Gamma$ is finite.
 $\cg\subset GL(k,\Z)$ (for some $k$)
which is generated by $k\times k$ \emph{unipotent} matrices (modulo finite groups). Thus $\cg$ is a subgroup of the $\mathbb{Q}$-algebraic
group $GL(k,\mathbb{Q})$, and in all examples\footnote{\ The statement holds in all SUSY consistent theories
by a basic fact in domestic geometry, see \S.\ref{s:stru1}. The reliable examples are all supersymmetric.} its $\mathbb{Q}$-Zariski closure $\overline{\cg}^{\,\mathbb{Q}}$ is semi-simple. Hence, modulo finite groups,
$\cg\sim \cg^0\times \Gamma$.
%Let $d(\cdot,\cdot)\colon \widetilde{\mathscr{M}}\times\widetilde{\mathscr{M}}\to \R_{\geq0}$
%be the distance function defined by the metric $G_{ij}$; we identify $\widetilde{\mathscr{M}}$
%with the completion of 
%$(\widetilde{\mathscr{M}}_\text{reg},d(\cdot,\cdot))$ as a metric space.
 The moduli space is
 \be
 \mathscr{M}\equiv\widetilde{\mathscr{M}}/\cg,
 \ee
 that is, the space of inequivalent (effective) vacua. The covering space $\widetilde{\mathscr{M}}$
 may be assumed to be simply-connected with no loss. Then $\pi_1(\mathscr{M})\sim \cg$.
 
 \subsection{Singularities of moduli spaces}\label{s:cure}
 
 The purpose of this technical sub-section is to argue that
 we may be cavalier with the singularities 
 of the moduli space $\mathscr{M}$,
 and see in which sense
we may work
 \emph{as if} $\mathscr{M}$ was a good ($\equiv$ complete)
 Riemannian manifold. The sub-section may be omitted in a first reading.
 
 \subsubsection{Singularities from the action of $\cg$}\label{sing-action}
 
 We write $d(\cdot,\cdot)\colon \widetilde{\mathscr{M}}\times\widetilde{\mathscr{M}}\to \R_{\geq0}$
for the distance function defined by the metric $G_{ij}$ and replace $ \widetilde{\mathscr{M}}$ 
by its metric completion. 
Fix a non-trivial unipotent element $u\in \cg$; the $\pi_1$-conjecture \cite{OoV,Rev1} says that
the infimum of the  function
\be\label{infimum}
d_u\colon \widetilde{\mathscr{M}}\to \R_{\geq0}\qquad
d_u(x)\overset{\rm def}{=}d(ux,x)
\ee
is zero. Assuming the conjecture, we consider a sequence of regular points $\{x_i\}\subset \widetilde{\mathscr{M}}_\text{reg}$ such that $d_u(x_i)\searrow 0$.
We have two possibilities:
\begin{itemize}
\item[(1)] $x_i$ escapes to infinity, i.e.\! $d(x_1,x_i)\to\infty$. We say that ``$u$ has a fixed point at infinity''. The distance conjecture \cite{OoV} requires an infinite tower of
states to become exponentially light as we approach the fixed point at infinity. Their electro-magnetic charges belong to
$\mathsf{ker}(\rho(u)-1)$;
\item[(2)] $\{x_i\}$ remains inside some finite-radius ball $B(x_1,r)$, and hence contains a subsequence which converges to a finite-distance point $x_\ast \in\widetilde{\mathscr{M}}$ which is fixed by $u$.
\end{itemize} 
In the second case
the action of $\cg$ yields an obstruction to a smooth extension of the Riemannian metric to $x_\ast$:
indeed its isotropy sub-group $\cg_{x_\ast}\subset\cg\subset \mathrm{Iso}(\widetilde{\mathscr{M}})$ is discrete and contains the infinite group of isometries $u^\Z$, whereas the isotropy sub-group of a regular point in Riemannian geometry is always compact. Hence $x_\ast$ is 
 a finite-distance singularity where the curvature blows up. In a quantum-consistent effective theory  with $\cn\geq2$ supersymmetry
 all singularities of the \emph{completed} covering space $\widetilde{\mathscr{M}}$
are of this kind, i.e.\! fixed-points of non-trivial unipotent elements $u\in\cg$, while the moduli space $\mathscr{M}\equiv\widetilde{\mathscr{M}}/\cg$ has, in addition, orbifold quotient singularities.
Thus in the extended SUSY case all finite- and infinite-distance singularities of $\mathscr{M}$ 
are dictated by the action of the $U$-duality group $\cg$ -- they correspond to vacua where
a non-trivial subgroup of $\cg$ remains unbroken -- and
no other ``accidental'' singularity is present. 
The logic beyond the $\pi_1$ and distance conjectures suggests that this is the case for a general 
consistent effective gravity theory:
\emph{all} singular points are fixed by some non-trivial subgroup of the discrete gauge group $\cg$.
%
%
%Let us have a closer look to the two kinds of singularities.
%

We give a closed look to the two kinds of finite-distance singularities.
 
 \subparagraph{Quotient orbifold singularities.} In general the $U$-duality group $\cg$ does not act
 freely on the regular locus $\widetilde{\mathscr{M}}_\text{reg}\subset\widetilde{\mathscr{M}}$. The isotropy sub-group $\cg_{\tilde x}$ of a smooth point $\tilde x\in \widetilde{\mathscr{M}}_\text{reg}$ is finite. The image of such a point in the moduli space $\mathscr{M}$ is then a mild orbifold singularity.
Orbifold points correspond physically to vacua where a finite sub-group $\cg_{\tilde x}\subset\cg$ remains unbroken. In some stringy
 examples such points correspond to vacua
  with a non-Abelian enhancement of the effective continuous gauge symmetry,
i.e.\! loci where our standing assumptions \textbf{A1}, \textbf{A2} break down and $\mathscr{L}_\text{eff}$ ceases to be a complete description of the IR physics. 

When we have an embedding $\iota\colon \cg\to GL(k,\Z)$ we can ``repair'' the orbifold singularities in a cheap way.
Fix an integer $n\geq 3$; let $r_n\colon
GL(k,\Z)\to GL(k,\Z/n\Z)$ be the reduction $\mod n$, and write $\iota_n\equiv r_n\circ \iota$. 
Consider the exact sequence of groups
\be
1\to \cg_n \to \cg \xrightarrow{\,\iota_n\,} GL(k,\Z/n\Z)\to 1.
\ee
It follows from the Minkowski theorem that the matrix group $\cg_n$
is a normal subgroup of $\cg$ of \emph{finite index} which is torsion-free, in facts \emph{neat} \cite{morris}.
This implies that $\cg_n$ acts freely on $\widetilde{\mathscr{M}}_\text{reg}$ so that
 $\mathscr{M}_n\equiv \widetilde{\mathscr{M}}/\cg_n$ is a \emph{finite} Galois cover of $\mathscr{M}$,
with Galois group $\iota_n(\cg)\equiv \cg/\cg_n$, free of orbifold singularities.
The finite-quotient singularities may be cured by replacing $\mathscr{M}$ with 
$\mathscr{M}_n$ and $\cg$ with $\cg_n$: this is the standard strategy in the math literature 
when studying moduli spaces of projective varieties (in particular of Calabi-Yau 3-folds), and we adopt it.
From now on by $\cg$ we always mean a finite-index, neat, normal subgroup of
the actual $U$-duality group. Correspondingly $\mathscr{M}\equiv \widetilde{\mathscr{M}}/\cg$ is free of finite-quotient
singularities.

\subparagraph{Finite-distance curvature singularities.}   $\widetilde{\mathscr{M}}_\text{reg}$ is not geodesically complete in general;
that is, $\widetilde{\mathscr{M}}_\text{reg}$ may contain half-geodesics $\ell(t)$ originating from a smooth point $\phi_0\equiv \ell(0)$
which cannot be continued after some \emph{finite} value of the proper length $t$. Such a {finite-length maximal half-geodesic 
represents a physical transition -- which takes \emph{finite} time and costs \emph{finite} energy per unit volume  --
from our initial configuration $\phi_0$ to a physical situation where the IR description provided by  $\mathscr{L}_\text{eff}$ 
is no longer valid. This finite-time process should be perfectly regular from the viewpoint of the UV complete theory.
What happens is that $\ell(t)$ stops at a vacuum where ``new physics'' becomes relevant in the infra-red:
some additional degrees of freedom of the fundamental UV theory get massless, and 
our IR description breaks down. 
From the viewpoint of the UV fundamental theory, these singularities typically correspond to points where different branches of the
space of vacua meet each other (\emph{transition points}).
Since the process involves a finite energy density,
 one expects that there exists a refined effective Lagrangian $\mathscr{L}_\text{new}$ which includes the relevant
  ``new physics''
and is valid up to  some higher but still finite energy scale. $\mathscr{L}_\text{new}$ allows us to extend the IR description, and hence the physical process described by the half-geodesic $\ell(t)$, beyond
 the domain $\widetilde{\mathscr{M}}_\text{reg}$. This means that family of finite-length half-geodesics in $(\widetilde{\mathscr{M}}_\text{reg}, G_{ij})$ 
 ending at a given singular point $x_\ast$ is associated to
a \emph{finite} number of new states becoming massless; in addition their spins must be $\leq \tfrac{3}{2}$. This is to be contrasted with the case of a fixed point at $\infty$, where
an \emph{infinite} tower of states get light \cite{OoV}.

We expect all finite-distance curvature singularities to correspond to fixed points 
$x_\ast$ under a parabolic subgroup of $\cg_\ast\subset \cg$
(cfr.\! the discussion after eqn.\eqref{infimum}). We have a finite set of vectors $\{q\}\subset V_\Z$
which correspond to the electro-magnetic charges of the finitely-many states which becomes
massless at $x_\ast$. They should be invariant under $\cg_\ast$ (modulo finite groups),
i.e.\! $\rho(u)\,q=q$ for $u\in \cg_\ast$.  
At such a singularity, the metric is continuous
(in appropriate local coordinates) but the curvature invariants blow-up.
If our effective theory has $\cn\geq5$ supersymmetry, there is no matter supermultiplets which may become massless, so
no finite-distance curvature singularities may be present. The same holds for $\cn=3,4$ as a result of a SUSY non-renormalization theorem.\footnote{\ The curvature singularity is proportional to the contribution of the gauge coupling beta-function from the states (with e.m.\!\! charges $\{q\}$) which become massless at $x_\ast$, so it vanishes for $\cn=3,4$ where the only matter supermultiplets are vector-multiplets which yield a zero net contribution to the $\beta$-function.
Geometrically, this non-renormalization theorem corresponds to the fact that, for $\cn\geq3$ SUGRA, $\widetilde{\mathscr{M}}$ is a symmetric space,
whose curvature is parallel, $\nabla_i R_{jklm}=0$, so cannot blow up anywhere.}

\subsubsection{Example: Type IIB on a 3-CY}
The various kinds of singularities are well illustrated by the vector-scalars' moduli space $\mathscr{M}_v$ of Type IIB compactified on a Calabi-Yau 3-fold $X$. The regular locus of its covering space, $\widetilde{\mathscr{M}}_{v,\,\text{reg}}$, may be identified with the
moduli space of marked and polarized Calabi-Yau's in the smooth deformation class of $X$.
$\widetilde{\mathscr{M}}_{v,\,\text{reg}}$ is equipped with its Weil-Petersson (WP) K\"ahler metric $G_{i\bar j}$,
which is the metric appearing in the scalars' kinetic terms \cite{cec,stro}.
Singularities at finite distance in the WP metric correspond to points where there is a conifold transition (more generally, an extremal transition) to a CY with different Hodge numbers \cite{noncom}. According to a celebrated suggestion by Reid \cite{reid}, all CY
moduli spaces are expected to be connected through such transitions.  The full Type IIB string theory remains regular
at those transitions, but the low-energy description based on the effective $\cn=2$ supergravity breaks down. 
On $\widetilde{\mathscr{M}}_{v,\,\text{reg}}$ there is also another, better behaved, canonical K\"ahler metric $K_{i\bar j}$, the Hodge one (a.k.a.\! the $K$-metric \cite{ttA,ttB}),
which has the expression (with $n=\dim_\C\mathscr{M}_v$) \cite{ttA,ttB,guy,cec-insta}
\begin{align}\label{kkaq123}
K_{i\bar j}&= (n+3)G_{i\bar j}+R_{i\bar j}\geq 2 \,G_{i\bar j},\\
 R_{i\bar j}&\geq -(n+1)G_{i\bar j},\label{brrricssi}
\end{align}
where $R_{i\bar j}$ is the Ricci curvature of the WP metric $G_{i\bar j}$. It is important to notice that, while the Ricci curvature\footnote{\ More in detail, each holomorphic sectional curvature of $G_{i\bar j}$ is bounded below by $-2$.}
$R_{i\bar j}$ of the WP metric satisfies the \emph{lower} bound \eqref{brrricssi}, the Ricci curvature of the $K$-metric
$R^K_{i\bar j}$ satisfies an \emph{upper} bound \cite{guy}
\be\label{kasq1z2x}
R^K_{i\bar j}\leq -\frac{1}{(\sqrt{n}+1)^2+1}\,K_{i\bar j}.
\ee
From eqn.\eqref{kkaq123} we see that points at infinite distance in the WP metric are also at infinite distance in the $K$-metric.
The opposite statement is false:
in terms of the $K$-metric 
all finite-distance curvature singularities are pushed at infinite distance. Indeed,
at conifold points the metric $G_{i\bar j}$ remains bounded while the Ricci curvature blows up so that the $K$-metric blows up.
 
The Torelli space is the completion of $\widetilde{\mathscr{M}}_{v,\,\text{reg}}$
with respect to the $K$-metric \cite{torelli1,torelli2,torelli3}
\be
\widetilde{\mathscr{M}}^{\mspace{4mu}K}_v\equiv \big(\widetilde{\mathscr{M}}_{v,\,\text{reg}}\big)^{\begin{smallmatrix}K\text{-metric\phantom{m|}}\\\text{completion}\end{smallmatrix}}.
\ee
$\widetilde{\mathscr{M}}^{\mspace{4mu}K}_v$ is a smooth space diffeomorphic to $\R^{2m}$
\cite{torelli1,torelli2,torelli3}; since $\cg$ is torsion-less, 
 the space
\be
\mathscr{M}^K_v\equiv \widetilde{\mathscr{M}}^{\mspace{4mu}K}_v/\cg
\ee
is a version of a finite cover of the moduli space which is a smooth K\"ahler manifold, complete for the $K$-metric,
with a contractible universal cover. Its fundamental group $\pi_1(\mathscr{M}^K_v)\cong \cg$ satisfies (a refined version of) the $\pi_1$-conjecture \cite{OoV}. Unfortunately the nice manifold $\mathscr{M}^K_v\equiv \widetilde{\mathscr{M}}^{\mspace{4mu}K}_v/\cg$
is not complete for the physical WP metric. 

$\mathscr{M}^K_v$ is the natural space to parametrize the complex structures on a \emph{fixed} smooth-class of CY 3-folds: $\mathscr{M}^K_v$ does not talk to moduli spaces of CY's with different topologies which are pushed infinitely away in the $K$-metric. The physical moduli space of Type IIB compactified on $X$ instead
consists of several branches of vacua, with extremal transitions between them;
since the different branches are not
infinitely separated, the physically relevant metric $G_{i\bar j}$ on each branch cannot be complete.  
\medskip

There is another description of these moduli spaces more in the spirit of Algebraic Geometry. 
There is a compact projective variety $\bar{\mathscr{M}}_v$ and an effective divisor $D$ such that
\be\label{xxxrw}
\mathscr{M}^K_v=\bar{\mathscr{M}}_v\setminus D.
\ee
By Hironaka theorem, we can choose the pair $(\bar{\mathscr{M}}_v, D)$ so that $\bar{\mathscr{M}}_v$ is smooth and
$D=\sum_i D_i$ is a simple normal crossing divisor.  The points in the support of $D$ are at infinite distance in the $K$-metric.
With respect to the WP metric $D$ splits as $D=D_\infty+D_f$, where $D_f$ (resp.\! $D_\infty$) is the singular locus at finite (resp.\! infinite)
distance. 
%There is a simple criterion to decise whether an irreducible component $D_i$ of $D$ belongs to
%$D_\infty$ or $D_f$. 
%The monodromy $\gamma_i\in \cg\subset Sp(2m+2,\Z)$
%around the prime component $D_i$ is unipotent, that is, satisfies the equation
%\be
%(\gamma_i-1)^{k_i+1}=0,\qquad 1\leq k_i\leq 3.
%\ee
%Then $D_i$ is a component of $D_f$ iff $k_i=1$.
%Subtle phenomena appear at the intersection points
%\be
%D_f\cap D_\infty.
%\ee
%Such tricky aspects have been discussed in ref.[cite].
%
%\medskip

\subsubsection{Smoothing surgery}\label{surgery!}

Returning to the general case, the presence of finite-distance curvature singularities
in $\widetilde{\mathscr{M}}$ seems to be unavoidable when $m\geq2$, unless the low-energy theory is a $\cn\geq3$ supergravity or some truncation thereof.
In all other cases the moduli space cannot be both 
smooth and complete for the kinetic-terms metric $G_{ij}$.
 
Working with non-complete and/or non-smooth Riemannian spaces is technically inconvenient. We try to improve
the situation by smoothing out the singularities with some local surgery,
i.e.\! by modifying the metric $G_{ij}$ in the vicinity of the ``bad'' points.

We stress that the modified metric is meant to be a mere technical trick to simplify the analysis of the geometry
of the moduli space $\mathscr{M}$. 
However it is suggestive to phrase the surgery as it was an actual modification of 
the effective Lagrangian $\mathscr{L}_\text{eff}$. 
The modification would be almost ``harmless'', since the original Lagrangian
itself gave a poor description of the IR physics 
 near the singularity in field space,
so the local modification affects physical processes 
which were already outside the scope of $\mathscr{L}_\text{eff}$.
In a sense the locally modified Lagrangian is still a ``good''
 effective Lagrangian, and should satisfy the swampland consistency conditions
 as far as they do not involve the region near the 
 ``bad''  points in field space.

The allowed deformations of $\mathscr{L}_\text{eff}$ are quite restricted since
they should preserve all gauge symmetries.
In particular: \textbf{(1)} the deformation of the kinetic-terms metric should leave
the discrete gauge group $\cg$ as an \emph{exact} symmetry of the problem;
\textbf{(2)} the holonomy and isometry Lie algebras of $G_{ij}$,
 $\mathfrak{hol}(\mathscr{M})$ and $\mathfrak{iso}(\mathscr{M})$,
  should be preserved.

Under the hypothesis that all singularities are fixed by a subgroup $\cg_\ast$ of the neat group
$\cg$, 
it suffices to modify the metric on the universal cover $\widetilde{\mathscr{M}}$ in such a way
that $\cg$ is a freely acting group of isometries of the deformed metric. 
Since the Ricci curvature is expected to blow up at a fixed point of an unipotent isometry,
a perturbation of the form
\be\label{opp024x}
G_{ij}\to G^\epsilon_{ij}\equiv G_{ij}+\epsilon\, R_{ij},\quad \epsilon>0
\ee
suggests itself.
Except at loci where the curvature blows-up while the metric remains finite -- which is exactly the characterization of
finite-distance singularities --
the correction to the metric is negligible for $\epsilon$ very small, so \eqref{opp024x} is essentially a local modification of the geometry
around the finite-distance singular loci. $\cg$ is still an exact isometry of the perturbed metric, so we do not spoil the discrete gauge symmetry. More generally, the surgery \eqref{opp024x} does not spoil any symmetry the original geometry may have. 
Condition \textbf{(2)} is also satisfied.\footnote{\ Let us check that the holonomy algebra  $\mathfrak{hol}(\mathscr{M})$ is preserved.
We may assume $\widetilde{\mathscr{M}}$ to be irreducible without loss.
If the Riemannian metric $G_{ij}$ is locally symmetric, it is Einstein, and the modification \eqref{opp024x} is just
a change of overall normalization of the metric by a factor $1+O(\epsilon)$. Otherwise $\mathfrak{hol}(\mathscr{M})$
is one of the seven Berger holonomies \cite{besse}. For generic holonomy $\mathfrak{so}(m)$ there is nothing to show. If $G_{ij}$ is K\"ahler, so
is $G^\epsilon_{ij}$. For $\mathscr{M}$ a Calabi-Yau, hyperK\"ahler, quaternionic-K\"ahler, $G_2$- or $Spin(7)$-manifold,
$G_{ij}$ is either Ricci-flat or Einstein so the deformation \eqref{opp024x} is either trivial or a slight rescaling of the metric.} We can see the modification \eqref{opp024x} as the result of a 
 backward Ricci-flow \cite{ricciflow} of the metric  by the small time $t=-\epsilon/2$, so that, heuristically, it looks like
 a RG flow to a slightly larger energy scale, in line with the physical interpretation of the finite-distance singularities
 as loci where new physics comes in.
 In view of eqn.\eqref{kkaq123}, replacing the WP metric $G_{j\bar k}$ by the Hodge one $K_{j\bar k}$ on
 a 3-CY moduli space amounts to the surgery \eqref{opp024x}
 with $\epsilon=1/(m+1)$; indeed our proposed prescription \eqref{opp024x} is modelled
 on the standard math treatment of 3-CY complex moduli spaces.

The modification \eqref{opp024x} makes sense provided  the Ricci curvature is bounded below, $R_{ij}\geq -K G_{ij}$, so that
$G^\epsilon_{ij}$ is positive-definite for small $\epsilon$. The idea is that the modification replaces
a small region around the singular point $x_\ast$ by a cusp of infinite length but finite volume\footnote{\ The basic reason of the finiteness of the volume is that we mod out the infinite group $\cg_\ast$ which maps a small neighborhood of $x_\ast$ into itself.}
of order $O(\epsilon^k)$ for some $k>0$. When this happens, the singularity is pushed at infinite distance, and geodesic completeness is restored. Thus the volume conjecture still holds after the surgery, however the new ``spurious''
cusps are not associated to towers of light states as the genuine infinite ends of $\mathscr{M}$.
%In this case 
%\be
%\mathscr{M}^\prime= (\widetilde{\mathscr{M}}\setminus \widetilde{\mathscr{M}}_f)/\cg
%\ee
%is a smooth complete manifold for $G^\epsilon_{ij}$.
  
In  this paper we \emph{assume} that it is always possible to modify the metric locally at the singularities, while preserving $\cg$ and $\mathfrak{hol}(\mathscr{M})$,
 by replacing a neighborhood of the finite-distance singularity with a cusp of volume $O(\epsilon^k)$, so that the resulting Riemannian space is complete and smooth.
All our arguments below are meant to apply to the ``regularized'' moduli manifold so constructed, which we shall denote simply $\mathscr{M}$.  
In the known examples the assumption holds true.

After the modification the geometric swampland conjectures still hold if they were satisfied by the original 
$\mathscr{L}_\text{eff}$, with the only exception that the distance conjecture does not apply to the spurious infinite ends
introduced by blowing-up finite-distance singularities.

%\medskip
%
%As evidence, let us apply the regularization procedure \eqref{opp024x} to the case of Type IIB on a CY 3-fold.
%The K\"ahler metric $G^\epsilon_{i\bar j}$ is well-defined by eqn.\eqref{brrricssi} for $\epsilon$ positive and small.
%The points at infinite distance in the WP metric are also at infinite distance in the $G^\epsilon_{i\bar j}$ by the
%same inequality. Viceversa, near singularities of type $(f)$ the Ricci curvature blows up and 
%$G^\epsilon_{i\bar j}\approx \epsilon\, K_{i\bar j}$, and these singularities are at infinite distance since they are in the 
%$K$-metric. Thus $\mathscr{M}^\prime$ is just the usual Hodge-theoretical moduli space $\mathscr{M}_v^K$ with a K\"ahler metric which is a positive linear combination of the two canonical metrics $G_{i\bar j}$ and $K_{i\bar j}$.

\subsection{Behavior at infinity}

In some argument below we need some more technical aspect
of the geometry of $\mathscr{M}$ at infinite distance.
In this sub-section we sketch the main issues; it  may be omitted in a quick reading.

\subsubsection{Sign of curvature at infinity}
In their original paper \cite{OoV} Ooguri and Vafa conjectured that
the scalars' space $\mathscr{M}$ is non-compact of finite-volume. They also
conjectured that the moduli-space scalar curvature $R$ is negative at infinity.
In ref.\!\cite{counter} Trenner and Wilson constructed an explicit ``counter-example''\footnote{\ See \textbf{Theorem 3.2} of ref.\cite{counter}.} to the last statement in the context of
Type IIB on a certain 3-CY with $h^{2,1}=3$. In that example there is a real curve $C$ in moduli space, parametrized by $s\in\R$,
such that, as we approach a ``large complex structure limit'' along this curve, the WP scalar curvature $R$ behaves as \cite{counter}
\be
R= \frac{32}{81}\,s^3+O(s^2)\quad \text{as }s\to+\infty\ \text{along }C,
\ee
so in this limit $R$ is positive and unbounded. We wrote \emph{``counter-example''} between quotes
because this example does not contradict the physical picture of \cite{OoV}.
In the language of eqn.\eqref{xxxrw}, the physical intuition for Calabi-Yau moduli spaces goes roughly as follows: as we approach a \emph{generic} point on the divisor $D_\infty\subset \bar{\mathscr{M}}$ (the infinite-distance locus)
the Ricci curvature of the WP metric becomes negative hence bounded by \eqref{brrricssi}, while as we approach
a \emph{generic} point on $D_f$ (finite-distance singularities) $R_{ij}$ becomes positive and \emph{divergent} because of the contributions from
loops of the finitely many additional light particles which can be computed in some ``enlarged'' effective \emph{field} theory. 
What happens at the special points at infinity $D_\infty \cap D_f$?
The obvious guess is that if we approach the intersection point following a curve $C$ along which
\be
m_f\to 0,\quad m_\infty\to 0\quad\text{with}\quad \frac{m_f}{m_\infty}\to 0,
\ee 
where $m_f$ (resp.\! $m_\infty$) is the mass scales of the particles getting light along $D_f$ (resp.\! $D_\infty$),
 then the divergent positive contribution to $R$ from the finitely many massless particles along $D_f$
may win over the bounded negative contribution from the infinite tower of light states along $D_\infty$.
This is what happens in the Trenner-Wilson example; along their curve $C$
$m_f=O(s^{-2})$ and $m_\infty=O(s^{-1})$.
We see that the  non-positivity of $R$ at infinity has the same physical origin as the failure of the WP
metric to be complete \cite{noncom}.
Then after replacing $G_{i\bar j}$ with its regularized version $G^\epsilon_{i\bar j}$, which it 
makes $\mathscr{M}$ into a complete manifold, we expect that also the problem with the sign of
the scalar curvature at infinity is solved, that is, we expect that its scalar curvature $R^\epsilon$
to be negative and bounded (for fixed $\epsilon$) everywhere at infinity. 
Indeed, along $D_\infty$ the $\epsilon$-modification is inessential while along $D_f$ we have
$G^\epsilon_{i\bar j}\approx \epsilon\, K_{i\bar j}$ so that from eqn.\eqref{kasq1z2x} the scalar curvature is
asymptotically negative and large (of order $O(1/\epsilon)$). As a check we have computed (using \textsc{Mathematica})
 the scalar curvature $R^\epsilon$
in the Trenner-Wilson example along the curve $C$:
\be
R^\epsilon= -\frac{6}{\epsilon}\big(1+O(\epsilon^2)\big)+O(1/s),\quad\text{as }s\to+\infty\ \text{along }C.
\ee
In other words: the points on the curve $C$ for sufficiently large (but finite) $s$
are outside the region where the Lagrangian $\mathscr{L}_\text{eff}$
yields a reliable IR description of the physics, and the coupling $G_{i\bar j}$
(i.e.\! the WP metric) needs not to behave in a ``physically reasonable'' way at these
points.
Fix a regular point $p_0\in\mathscr{M}$; for all points at a sufficiently large
distance from $p_0$ which are not too close to special loci where
some ``new physics'' appears, $R$ is negative.

\subsubsection{Cusps}\label{s:cusps}

We first consider the following simple but typical\footnote{\ \emph{Typical} means, in particular, that this is the situation in all known explicit examples.} situation:
$M$ is a complete non-compact Riemannian manifold and
there is a compact subset $K\subset M$ such that
 $M\setminus K$ is the disjoint union of finitely many ``ends at infinity'' of $M$,
the $\alpha$-th end $E_\alpha$ being diffeomorphic to $\R\times Z_\alpha$ for some connected  manifold $Z_\alpha$, while the metric in $E_\alpha$ has the asymptotic form
\begin{equation}\label{rt5167}
ds^2\equiv G_{ij}\,dx^i dx^j\approx dr^2+g(r,u)_{ab}\, du^a du^b,\qquad \text{for }r\gg1,
\end{equation} 
where $r$ is the distance from some base point $\ast\in M$, $u^a$ are local coordinates in $Z_\alpha$, and $g(r,u)_{ab}$ is some $r$-dependent metric on $Z_\alpha$.
Finiteness of the volume of $E_\alpha$ requires $\sqrt{\det g(r,u)}$ to decay more rapidly than
$1/r$ for large $r$. In the region where \eqref{rt5167} holds we have
\be
R_{rr}\approx -\partial_r^2\log\sqrt{\det g}-\frac{1}{4}\big\|\partial_r g\big\|^2
\ee
If $g(r,u)_{ab}$ goes to zero as
slowly as a negative power of $r$, the \textsc{rhs} is $O(1/r^2)$
and $R_{rr}/g_{rr}$ is not bounded away from zero. 
On the other hand if $g(r,u)_{ab}$ goes to zero more rapidly than an exponential,
say as $\approx\! C\, e^{-c\, r^k}$ with $k>1$ the curvature $R_{rr}$ is
$O(r^{2(k-1)})$ and unbounded below for large $r$.
So, if the Ricci tensor is negative and bounded for large $r$, $g(r,u)_{ab}$ should
be a sum of terms with exponential decay
\be\label{jjjasq12x}
g(r,u)_{ab}\approx \sum_i e^{-c_i\,r}\,h^{(i)}\mspace{-2mu}(u)_{ab}.
\ee
Assuming the asymptotic metric to be enough regular, this leads to bounds for large $r$ of the form
\be\label{jasqw1z}
- K_1\, G_{ij}\leq R_{ij}\leq -K_2\, G_{ij}\qquad \text{for large $r$ along $E_\alpha$}
\ee
for some constants $K_1,K_2>0$. We shall call a finite-volume end $E_\alpha$ 
with the behaviour \eqref{jasqw1z} a ``cusp''.
Prototypical examples are the cusps in an arithmetic quotient of 
a non-compact symmetric space\footnote{\ Here $G(\R)$ is a non-compact real Lie group
seen as a concrete group of matrices via a suitable representation of degree $\ell$, $K\subset G(\R)$ is a maximal compact
subgroup, and $G(\Z)\equiv G(\R)\cap GL(\ell,\Z)$ is the arithmetic subgroup consisting of matrices with integral entries.
More generally, we may replace $G(\Z)$ by a commensurable subgroup of $G(\R)$.}
\be\label{kkkkas12x}
G(\Z)\backslash G(\R)/K.
\ee
Eqn.\eqref{kkkkas12x} is the general form of the moduli space $\mathscr{M}$ when $\mathscr{L}_\text{eff}$
has a large supersymmetry (more than 8 supercharges).
In these cases the nice geometry of the ends of the moduli space
is directly related to the physics of quantum gravity as
described by the  distance conjecture \cite{book}. In these extended \textsc{susy}
examples the $U$-duality group $\cg\equiv G(\Z)$ acts faithfully
(modulo finite groups) on
the electro-magnetic charge lattice $V_\Z$,
and each point at infinity $x_\infty\in\overline{\mathscr{M}}$ is fixed by a parabolic
subgroup $\cg_{x_\infty}\subset G(\Z)$.  
States carrying electro-magnetic charges $q\in V_\Z$
 invariant under  $\cg_{x_\infty}$ have masses proportional to the length
 of the image in the Siegel domain of the shortest loop in $\mathscr{M}$
 based at $x$ in the homotopy class of the elements of $\cg_{x_\infty}$;
 since the map $\mu$ is a totally geodesic isometric embedding
 for $\cn\geq3$ the mass is also proportional to the length of the pre-image loop
 in $\mathscr{M}$, which for a good cusp is exponentially small, cfr.\! eqn.\eqref{jjjasq12x}. 
 The same applies (with some subtlety) in the $\cn=2$ case,
 using the relation between the kinetic terms of scalars and vectors
 implied by SUSY which replaces the totally geodesic embedding condition. 
Conjecturally this extends to the general quantum-consistent effective theory: the length of the loop in target space should
be exponentially small to fit with the predictions of the distance conjecture.
If the pre-image loop has a length which vanishes more rapidly than any 
exponential, say $O(e^{-c r^k})$, $k>1$,
 the derivative would be of order $e^{ c r^k}$,
 $\|d\mu\|^2 =O(e^{3 c r^{k}})$ which looks unreasonable.
% \medskip
 
\subparagraph{Condition $\boldsymbol{(\ast)}$.} Although the large $r$ behaviour \eqref{jasqw1z}
is expected for all quantum-consistent effective theories, to be very conservative 
in this paper we shall assume a
much weaker condition on the large $r$ behaviour of the geometry
after the smoothing surgery in \S.\ref{surgery!}.
First we assume that
 the Ricci curvature of $\mathscr{M}$ is still bounded below
$R_{ij}\geq - K G_{ij}$. 
Since $\mathscr{M}$ is complete, for all $R>0$
there exists a Lipschitz continuous function $h_R\colon \mathscr{M}\to \R$ such that for some fixed constant $k > 0$ \cite{ricciflat}:\footnote{\ For instance, we can choose $h_R=\varphi(r/R)$ where $\varphi$ is a smooth function on the real line with $0\leq \varphi\leq 1$, $\varphi=1$ for $x\leq 1$ and $\varphi=2$ for $x\geq2$.} 
\be\label{poiqwert}
0\leq h_R\leq 1,\qquad h_R=\begin{cases} 1 &\text{for }r\leq R\\
0 &\text{for }r\geq 2R,
\end{cases}\qquad \big|dh_R\big|<\frac{k}{R}.
\ee
For typical asymptotic metrics of the form \eqref{rt5167},\eqref{jjjasq12x}
the Laplacian of $h_R$ is of order $O(R^{-1})$ for large $R$.
We shall require only the much weaker

\begin{condx}
The Laplacian of $h_R$ is bounded by a constant $C$ independent of $R$
\be\label{xxxqwrt56}
|\Delta h_R|<C.
\ee
\end{condx}

\section{OV manifolds}

In this section we assume that there exists a suitable local surgery, along the lines
described in the previous section, such that the singularities of a suitable finite cover of
the moduli space get repaired resulting in a \emph{smooth} Riemannian manifold $\mathscr{M}$
which still satisfies the Ooguri-Vafa geometric swampland conjectures \cite{OoV}. This certainly holds in the
known examples of quantum consistent effective theories of gravity.

A smooth manifold which satisfies the Ooguri-Vafa geometric conditions, together with some
mild ``regularity'' conditions, will be called
an \emph{OV manifold}. Understanding the geometry of OV manifolds and its physical implications
 is one of the themes of this paper.
% They turn out to be interesting subjects of study since the OV manifolds enjoy quite
%remarkable properties.
\smallskip

We propose the following definition of OV manifold:
 
\begin{defn}The point is a zero-dimensional OV space. In positive dimension an \textit{OV manifold} is
 a connected, complete, Riemannian manifold $M$ with a smooth simply-connected cover $\widetilde{M}$
 which has a de Rham decomposition of the form 
\be
\widetilde{M}= F\times \widetilde{M}_1\times \cdots\times \widetilde{M}_s
\ee
 and a smooth \textit{finite cover} of the form
 \be\label{finitecover}
  M^\flat = F\times  \widetilde{M}_1/\mathcal{G}_1\times \cdots\times \widetilde{M}_s/\mathcal{G}_s,\qquad
 \mathcal{G}_k\subset\mathrm{Iso}(\widetilde{M}_k)
\ee
such that:
\begin{itemize}
\item[\bf OV0.] the flat factor $F$ is either trivial or the real line $\mathbb{R}$;
\item[\bf OV1.] $M_k\equiv \widetilde{M}_k/\mathcal{G}_k$ is a complete, irreducible, \textit{non-compact} manifold of \textit{finite volume};
\item[\bf OV2.] $\widetilde{M}_k$ is diffeomorphic to $\mathbb{R}^{m_k}$
and $\mathcal{G}_k\cong\pi_1(M_k)$ is a torsion-less discrete group of isometries of $\widetilde{M}_k$
generated by elements $\{u_i\}$ such that
\be\label{kkkcx123}
\underset{x\in \widetilde{M}_k}{\mathrm{inf}}\, d(u_i x,x) =0.
\ee
\item[\bf OV3.] the Ricci curvature $R_{ij}^{(k)}$ of $M_k$ is bounded below by a negative constant
\be\label{eeeeecq}
R_{ij}^{(k)}\geq - K_k\,g_{ij}^{(k)},\qquad K_k>0,
\ee
and condition $\boldsymbol{(\ast)}$ (eqn.\eqref{xxxqwrt56}) is satisfied.
\end{itemize}
\end{defn}

Some comments on the definition are in order:
\begin{itemize}
\item \textbf{OV1} is the volume conjecture \cite{OoV} and \textbf{OV2} is a refined version of the $\pi_1$ conjecture. A stronger version of \textbf{OV2}, which holds in all known examples, would be:
\begin{itemize}
\item[\bf OV2$\boldsymbol{\ast}$] $\cg_k$ is 
 isomorphic to a strongly approximant\footnote{\ We recall the definition of ``strong approximant'' subgroup. We see the arithmetic group $G(\Z)$ as a concrete group of integral matrices, i.e.
it comes with a preferred embedding $G(\Z)\subset GL(n,\Z)$ for some $n$. For $p$ a prime, we write $G(\Z/p\Z)$ for the finite group of Lie type obtained by reducing the matrices mod $p$. We have the canonical surjection $G(\Z)\xrightarrow{\pi_p} G(\Z/p\Z)$. $\gamma\colon \Gamma\hookrightarrow G(\Z)$ is said to be a \emph{strong approximant} iff the group homomorphism $\pi_p\circ \gamma\colon \Gamma\to G(\Z/p\Z)$ is surjective for \emph{almost all} primes $p$.} subgroup $\mathring{\cg}_k$ of an arithmetic group $G(\Z)\subset GL(n,\Z)$. $\mathring{\cg}_k$ is required to be neat, semi-simple, and to have a finite-index subgroup generated by finitely-many unipotent elements $\{u_i\}\subset \mathring{\cg}_k\cong \cg_k$ which satisfy eqn.\eqref{kkkcx123}.
\end{itemize}
\item the point and the real line $\mathbb{R}$ are OV manifolds. This is required by math elegance and is consistent since 
all the `magic' properties of OV manifolds 
are shared by the point and $\mathbb{R}$.  The point and $\mathbb{R}$ do appear as moduli of consistent gravities,
and even as their factor spaces: think of M-theory on $\mathbb{R}^{10-k,1}\times S^k$ for $k=0,1,2$; in particular 
for $k=2$ $\mathscr{M}= \mathbb{R}\times  (SL(2,\mathbb{Z})\backslash SL(2,\mathbb{R})/U(1))$.
\item \textbf{OV3} is a milder version of the ``regularity'' conditions discussed in \S.\,\ref{s:cusps} related to the distance conjecture and the expected behaviour at infinity.
\end{itemize}

\begin{rem} 
 The \emph{emergence proposal} (\S.5 of \cite{Rev2})
gives a model-independent bridge between the distance conjecture and the
geometry of $\mathscr{M}$ at infinite distance. 
One starts with a compactification $\overline{\mathscr{M}}$ of $\mathscr{M}$;
the
 metric is singular along the loci in $\overline{\mathscr{M}}$ where more
 degrees of freedom get massless: the singular contributions to the metric come
from loops of light states. When an \emph{infinite} tower of
d.o.f.\! get massless the divergence of the metric is so severe that the locus is pushed at infinite distance.
The
singular part of the metric due to a infinite tower of light states
is universal, and can be read from any example. Thus one gets back the strong
version of \textbf{OV3} inferred from examples in \S.\,\ref{s:cusps}: there exists a $h_R$ as in \eqref{poiqwert} with $|\Delta h_R|\lessapprox c/R$. We opted for the
much weaker version $|\Delta h_R| < c^\prime$: we 
see \textbf{OV3} as a mere technical condition, not an additional swampland conjecture, and our \textbf{OV3} is just the weakest assumption
required to prove
the
``magical'' properties of the OV manifolds.
\end{rem}

\subsection{First consequences of the definition}

\subsubsection{Holonomy groups and the graded algebra $\cp^\bullet$}\label{s:xxx}

A non-compact, complete manifold of non-negative Ricci curvature cannot have finite volume\footnote{\ This follows from the Calabi-Yau lower bound on the volume \cite{ricciflat,calabi} see \textbf{\emph{Theorem 7}} and \textbf{Appendix} (iii) in ref.\cite{ricciflat}.}}, so the Ricci  curvature of the irreducible factor spaces
$\widetilde{M}_k$ somewhere should have some strictly negative eigenvalue.
This rules out a few Riemannian geometries:
$$
\begin{tabular}{cc||c}\hline\hline
allowed $\widetilde{M}_k$  & $\mathfrak{hol}(\widetilde{M}_k)$ & ruled out \\\hline
strictly generic holonomy & $\mathfrak{so}(m_k)$ & Calabi-Yau\\
strictly K\"ahler  & $\mathfrak{u}(m_k/2)$ & hyperK\"ahler\\
strict negative quaternionic-K\"ahler & $\mathfrak{sp}(1)\!\oplus\mathfrak{sp}(m_k/4)$ & positive quaternionic-K\"ahler\\
non-compact symmetric spaces $G/K$ & $\mathfrak{k}$ & $G_2$ and $Spin(7)$ manifolds\\\hline\hline
\end{tabular}
$$
It is remarkable that, a 
part for ``strictly generic'' (which, roughly, corresponds to the non-SUSY case), the list of allowed geometries is
reminiscent of the list of target spaces for supergravity with more than 2 supercharges. We recall that the holonomy algebra
$\mathfrak{hol}(\mathscr{M})$ of the scalars' manifold $\mathscr{M}$ of a supergravity is determined as follows:
from its SUSY algebra and super-multiplet content we read the R-symmetry Lie algebra $\mathfrak{r}$ and its representation $\sigma$ on the scalar fields;  $\mathfrak{hol}(\mathscr{M})$ then has the form \cite{book} (here $m=\dim_\R\mathscr{M}$)
\be\label{whatR}
\sigma(\mathfrak{r})\subseteq\mathfrak{hol}(\mathscr{M})\subseteq\sigma(\mathfrak{r})+\mathfrak{z}(\sigma(\mathfrak{r}))\subset\mathfrak{so}(m)\qquad \mathfrak{z}(\sigma(\mathfrak{r}))\overset{\rm def}{\equiv}\text{centralizer of $\mathfrak{r}$ in $\mathfrak{so}(m)$}.
\ee  
In SUGRA the R-symmetry is a gauge symmetry with composite
gauge fields $\partial_\mu \phi^i\, {\Omega(\phi)_i^a}_b$, where ${\Omega(\phi)_i^a}_b$ is the projection of the Levi-Civita connection of $G(\phi)_{ij}$ on the sub-algebra $\sigma(\mathfrak{r})$. 
%We take eqn.\eqref{whatR} as the definition of a 
%$\mathfrak{r}$ Lie algebra of an OV manifold, independently of supersymmetry.

Let $\mathfrak{h}_k\equiv\mathfrak{hol}(\widetilde{M}_k)\subset \mathfrak{so}(m_k)$ be the irreducible holonomy algebra of the $k$-th factor space $\widetilde{M}_k$.
The graded algebra $\cp_k^\bullet$ of parallel forms on $\widetilde{M}_k$ is\footnote{\ If $V$ is a vector space carrying a representation of the reductive Lie algebra $\mathfrak{h}$, the symbol $V^\mathfrak{h}$
denotes the $\mathfrak{h}$-invariant subspace of $V$, that is, the subspace on which $\mathfrak{h}$ acts trivially.}
\be
\cp_k^\bullet\equiv\bigoplus_{j=0}^{m_k} \cp_k^j \cong \bigoplus_{j=0}^{m_k}\left(\wedge^j \R^{m_k}\right)^{\!\mathfrak{h}_k}
\ee
 When $\mathfrak{h}_k=\mathfrak{so}(m_k)$, $\cp_k^\bullet$ is spanned by 1 and a volume form $\varepsilon$:
 we say that $\cp_k^\bullet$ is trivial.
When $\widetilde{M}_k$ is strictly K\"ahler, $\cp_k^\bullet=\R[\omega]/\omega^{1+m_k/2}$ with $\omega$ the K\"ahler form.
When $\widetilde{M}_k$ is strictly quaternionic-K\"ahler $\cp_k^\bullet=\R[\Omega]/\Omega^{1+m_k/4}$  with $\Omega$
 the canonical 4-form.
When $\widetilde{M}_k$ is a symmetric manifold the algebra $\cp_k^\bullet$ is typically larger;  but there are two exceptions: $SO(n,1)/SO(n)$ which has generic holonomy $\mathfrak{so}(n)$, and $SU(m,1)/U(m)$ which has strict K\"ahler holonomy $\mathfrak{u}(m)$. We stress that the surgery of \S.\,\ref{surgery!} preserves
the algebra $\cp^\bullet$.

The generic OV manifold has no non-trivial parallel form. Let us consider an example in the other extremum, where $\cp^\bullet$ is so rich that fully determines the metric
(up to overall normalization).

\begin{exe}\label{e1} Suppose the irreducible OV manifold $\mathscr{M}$ has R-algebra $\mathfrak{r}=\mathfrak{su}(8)$ and R-representation $\sigma$ the \textbf{70}. Then -- up to finite covers -- $\mathscr{M}$ is globally isometric to the
locally symmetric space
\be
E_{7}(\Z)\big\backslash E_{7}(\R)\big/SU(8)
\ee
where $E_{7}(\Z)$ denotes a maximal arithmetic subgroup of $E_{7}(\R)$ (the split real form of $E_7$).\footnote{\ With respect to some structure of 
$E_{7}(\R)$ as an algebraic group defined over $\mathbb{Q}$. The $\mathbb{Q}$-algebraic structure is determined by quantum consistency: in this case $E_{7}$ is identified with the universal Chevalley group of type $E_7$ (a scheme over $\Z$)
and $E_{7}(\Z)$ is the groups of its points valued in $\Z$.} This is the moduli space of Type II
compactified on $T^6$ with the correct $U$-duality group $\cg=E_{7}(\Z)$ (up to commensurability).
\end{exe}
%\begin{exe}\label{e2}
%The OV manifold $\mathscr{M}$ has $m=6k+2$ and R-algebra $\mathfrak{r}=\mathfrak{u}(4)$ and R-representation $\sigma = \boldsymbol{1}_4\oplus (\boldsymbol{6}_0\oplus\cdots\oplus \boldsymbol{6}_0)$ with $k\geq 2$ copies of $\boldsymbol{6}_0$. Then
% -- up to finite covers -- $\mathscr{M}$ is globally isometric to
%\be
%K_1\ \text{\begin{Large}$\times$\end{Large}}\ SO(6,k;\Z)\big\backslash SO(6,k)\big/[SO(6)\times SO(k)].
%\ee
%with $K_1$ a K\"ahler OV manifold of complex dimension 1. For appropriate $k$ this is 
%the scalar manifold of heterotic on $T^6$.
%\end{exe}

%
%The restrictions on the holonomy groups imply some constraints on the symmetries of $\widetilde{M}_k$: \textbf{(1)} the Killing vectors $K\in\mathfrak{iso}(\widetilde{M}_k)$ preserve
%all parallel tensors of $\widetilde{M}_k$ and hence $D^i K_j$ belongs to the holonomy algebra $\mathfrak{hol}(\widetilde{M}_k)$ \cite{kosta}\footnote{\ For instance, this says that for the K\"ahlerian factors $\widetilde{M}_k$ all continuous isometries are holomorphic.}; \textbf{(2)}
%since the volume is finite, the isometry group $\mathrm{Iso}(\widetilde{M}_k)$ should be unimodular.
%

\subsubsection{Liouville property}\label{s:liouville}

Many nice properties of supersymmetric field theories (with or without gravity), such as the non-renormalization/rigidity theorems,
may be traced back to the fact that their moduli spaces $\mathscr{M}$ are ``Liouvillic''.
This property holds automatically for OV manifolds. Morally speaking, ``most'' of the good aspects of SUSY
appear to be given for free once the swampland conjectures are satisfied.

\begin{defn}
The manifold $M$ is \emph{Liouvillic} for the class $\mathcal{C}$ of functions
$f\colon M\to\mathbb{R}$ if
\be
f\in\mathcal{C}\ \text{and}\ |f|< K<\infty\quad\Rightarrow\quad f=\text{constant} 
\ee
\end{defn}

\begin{pro} Irreducible OV manifolds $M$ are Liouvillic for the sub-harmonic functions,
\be
\text{ i.e.}\quad\Delta f\geq0\ \text{and}\ |f|< K<\infty\quad\Rightarrow\quad f=\text{constant}
\ee
\end{pro}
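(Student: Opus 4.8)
The plan is to prove a Liouville-type theorem for bounded subharmonic functions on each irreducible factor $M_k$ by exploiting the three pillars of the definition: completeness, the lower Ricci bound \textbf{OV3}, and the finiteness of volume \textbf{OV1}. The natural analytic tool is the cut-off function $h_R$ guaranteed by Condition $(\ast)$ in \eqref{xxxqwrt56}, since Condition $(\ast)$ bounds $|\Delta h_R|$ uniformly in $R$. The strategy is an integration-by-parts (Caccioppoli-type) argument: one tests the subharmonicity inequality $\Delta f\geq 0$ against a suitable power of $h_R$ times $f$, integrates over $M_k$, and lets $R\to\infty$. The uniform bound on $|\Delta h_R|$ together with the finite total volume should force the gradient term $\int |df|^2$ to vanish, so that $f$ is locally constant, hence constant by connectedness.

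In more detail, first I would fix the irreducible factor $M=M_k$, which by \textbf{OV1} is complete, non-compact, irreducible, of finite volume, with $\mathrm{vol}(M)<\infty$. Let $f$ be subharmonic and bounded, $|f|<K$. Without loss I replace $f$ by $f+K\geq 0$, so $f$ is a nonnegative bounded subharmonic function. Testing $\Delta f\geq 0$ against $h_R^2\,f$ and integrating by parts yields, schematically,
\begin{equation}
\int_M h_R^2\,|df|^2\,dV \;\leq\; -2\int_M h_R\,f\,\langle dh_R,df\rangle\,dV.
\end{equation}
Applying Cauchy--Schwarz and the elementary inequality $2ab\leq \delta a^2+\delta^{-1}b^2$ to the right-hand side, one absorbs half of the gradient term on the left and is left with an estimate of the form
\begin{equation}
\int_M h_R^2\,|df|^2\,dV \;\leq\; C\int_M f^2\,|dh_R|^2\,dV \;\leq\; C\,K^2\int_{B(2R)\setminus B(R)} |dh_R|^2\,dV.
\end{equation}
Here I would use $|dh_R|<k/R$ from \eqref{poiqwert} together with the \emph{finite volume} of $M$: since $\int_M |dh_R|^2\leq (k/R)^2\,\mathrm{vol}(M)\to 0$ as $R\to\infty$, the right-hand side vanishes in the limit. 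Monotone convergence (as $h_R\uparrow 1$) then gives $\int_M |df|^2\,dV=0$, whence $df\equiv 0$ and $f$ is constant.

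The main obstacle I anticipate is making the integration by parts rigorous when $f$ is merely subharmonic rather than smooth: one must either invoke elliptic regularity to treat $f$ as $C^\infty$ where $\Delta f$ is a nonnegative measure, or work with the distributional inequality and a standard mollification/approximation argument. A second delicate point is justifying the boundary-free integration by parts on a complete non-compact manifold; this is precisely what the compactly supported cut-off $h_R$ is designed to handle, so the argument hinges on having $h_R$ with controlled derivatives — which is exactly the content of \eqref{poiqwert} and Condition $(\ast)$. Notice that the crude bound $|dh_R|<k/R$ combined with finite volume already suffices to kill the error term, so the full strength of Condition $(\ast)$ (the bound on $|\Delta h_R|$) may not even be needed for this particular proposition — the finiteness of volume is doing the essential work, which is why the non-compactness and finite-volume hypotheses \textbf{OV1} are the crucial inputs rather than any fine asymptotic structure of the cusps.
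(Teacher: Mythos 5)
Your proof is correct (modulo the regularity technicalities you flag yourself, which are standard: a bounded subharmonic function lies in $W^{1,2}_{\mathrm{loc}}$, so the Caccioppoli computation is legitimate), but it takes a genuinely different route from the paper. The paper's proof is a two-line reduction: setting $u=f+K\geq 0$, boundedness plus finite volume give $\int_M u^p\,d\mathsf{vol}\leq (2K)^p\,\mathsf{Vol}(M)<\infty$ for every $p>1$, and one then cites Yau's $L^p$-Liouville theorem (\textbf{Theorem 3} of \cite{ricciflat}): a complete Riemannian manifold carries no non-constant, non-negative, subharmonic function in $L^p$ with $p>1$. Your cutoff argument is essentially a self-contained proof of the bounded special case of that theorem, and it has the virtue of making the hypotheses transparent: you correctly observe that only completeness (which furnishes $h_R$ with $|dh_R|<k/R$) and finite volume are used, and that Condition $\boldsymbol{(\ast)}$ and the Ricci bound \textbf{OV3} play no role here — consistent with the paper, whose proof via Yau likewise assumes only completeness. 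What the citation buys the paper is brevity and the stronger $L^p$ statement; what your argument buys is that one sees exactly where the error term dies, namely $\int_M f^2|dh_R|^2\,dV\leq C\,K^2R^{-2}\,\mathsf{Vol}(M)$, so that in fact any sub-quadratic volume growth would suffice.

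One small omission: the proposition's scope includes $M=\mathbb{R}$, which the paper treats as a separate case (there subharmonic means convex, and a bounded convex function on $\R$ is constant), whereas your proof as written excludes it because you fix an irreducible factor $M_k$ and invoke $\mathsf{Vol}(M_k)<\infty$ from \textbf{OV1}. This is not a real gap, since your own estimate repairs it: on $\mathbb{R}$ the annulus $B(2R)\setminus B(R)$ has volume $2R$, so $\int f^2|dh_R|^2\,dV=O(1/R)\to 0$ and the same conclusion follows.
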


\begin{proof} For $M\equiv \mathbb{R}$ this reduces to the well-known fact that a bounded
 convex function $f\colon\R\to \R$ is a constant. For $M$ a complete, non-compact manifold of finite volume, we write $u\equiv f+K$. $u$ is a non-negative function bounded above by $2K$, and for all $p>1$
\be
\int_\mathscr{M} u^p\, d\mspace{1mu}\mathsf{vol} \leq (2K)^p\cdot \mathsf{Vol}(\mathscr{M})<\infty,
\ee 
and then $u$ is a constant by \textbf{\emph{Theorem 3}} in \cite{ricciflat}. 
\end{proof}

The statements holds even for reducible OV manifolds as long as they have no flat factor
in eqn.\eqref{finitecover}. In particular, it holds for all 4d SUGRA satisfying the Ooguri-Vafa
swampland conjectures \cite{OoV}.
The Liouville property seems to be crucial for the swampland story:
this is the case for all supersymmetric consistent effective theories. 
%Note that higher dimensional
%flat spaces $\R^n$ with $n\geq3$ are not Liouvillic for the sub-harmonic functions.

\begin{rem} Standard Seiberg-Witten theory \cite{SW1,SW2} is grounded on the fact that
 the Coulomb branch of an UV complete $\mathcal{N}=2$ QFTs satisfies the much weaker property of being 
 Liouvillic for the sub-\emph{pluri}harmonic functions.
 \end{rem}

\section{The gauge couplings $\tau(\phi)_{ab}$ as a map}\label{asamap}

One wishes to understand which gauge coupling $\tau(\phi)_{ab}$ may appear in a consistent effective 
Lagrangian \eqref{lag} of
quantum gravity. These couplings have a natural geometric description which cries for 
an intrinsic characterization of the allowed $\tau(\phi)_{ab}$.

\subsection{Generalities}\label{s:gennn}

For fixed values of the scalar fields $\phi^i$ (and given duality-frame), the gauge coupling $\tau(\phi)_{ab}$ is an element of Siegel's upper half-space
\be
\boldsymbol{H}_h\equiv \Big\{\tau\in \mathsf{Mat}_{h\times h}(\C)\;\Big|\; \tau=\tau^t,\ \mathrm{Im}\,\tau >0\Big\}\cong Sp(2h,\R)/U(h).
\ee
$\boldsymbol{H}_h$ is a non-compact K\"ahler symmetric space 
on which the group $Sp(2h,\R)$ acts transitively by isometries. 
Its maximal arithmetic subgroup $Sp(2h,\Z)\subset Sp(2h,\R)$ is the group of electro-magnetic duality-frame rotations.
 As a Riemannian space $\boldsymbol{H}_h$ is Hadamard\footnote{\ That is, a simply-connected manifold with non-positive sectional curvatures; in facts $\boldsymbol{H}_h$ has even non-positive curvature operators.} \cite{hadamard1,hadamard2} so diffeomorphic to $\R^{h(h+1)}$. As a complex manifold $\boldsymbol{H}_h$ is biholomorphic to the (symmetric) bounded domain in $\C^{h(h+1)/2}$ of symmetric $h\times h$ matrices $Z$ with $1-Z\bar Z>0$.

\medskip

In general the complex gauge couplings $\tau(\phi)_{ab}$ are not well-defined
 functions on the scalars' space $\mathscr{M}\equiv \widetilde{\mathscr{M}}/\cg$. Indeed, when we go around a non-contractible loop $\gamma$ in $\mathscr{M}$, we may come back 
 and find that the duality frame was rotated by a non-trivial element of $Sp(2h,\Z)$. This yields a \emph{monodromy representation}
 \be\rho\colon \pi_1(\mathscr{M})\equiv \cg\to Sp(2h,\Z).
 \ee
 The image $\rho(\pi_1(\mathscr{M}))\subset Sp(2h,\Z)$ is the \emph{monodromy group} $\Gamma$,
cfr.\! eqn.\eqref{kkkz12m}.

 To get actual coupling \emph{functions,} it is convenient to lift the gauge couplings to the smooth
  universal cover  $\widetilde{\mathscr{M}}$;
 then we may identify them with 
the map
\be\label{Tmu}
\widetilde\mu\colon\widetilde{\mathscr{M}}\to Sp(2h,\R)/U(h)\equiv \boldsymbol{H}_h,\qquad \widetilde{\mu}\colon \phi\mapsto \tau(\phi)_{ab},
\ee
 which lifts the intrinsic gauge coupling map $\mu$ \footnote{\ Throughout the paper double-headed arrows stand for canonical projections.}
\be\label{tildemucovers}
\begin{gathered}
\xymatrix{\widetilde{\mathscr{M}}\ar[rr]^(0.28){\tilde\mu}\ar@{->>}[d] && Sp(2h,\R)/U(h)\ar@{->>}[d]\\
\mathscr{M}\ar[rr]_(0.28){\mu}&& Sp(2h,\Z)\backslash Sp(2h,\R)/U(h)}
\end{gathered}
\ee
One also says that the lifted map
$\tilde\mu$ is \emph{twisted} by the monodromy representation $\rho$, i.e.\! $\tilde\mu$ satisfies the functional equations
\be\label{twisted}
\tilde\mu(g\cdot x)= \rho(g)\cdot \tilde\mu(x),\quad \forall\;g\in\cg\subset \mathrm{Iso}(\widetilde{\mathscr{M}}),\ \ x\in\widetilde{\mathscr{M}}.
\ee

The target space of $\mu$ in \eqref{tildemucovers}, $Sp(2h,\Z)\backslash \boldsymbol{H}_h$, is an irreducible, non-compact, locally symmetric space of the special form $G_\Z\backslash G/K$ where $G_\Z\subset G$ is an arithmetic (hence Zariski-dense \cite{morris}) discrete subgroup\footnote{\
With respect to the obvious structure of $G\equiv Sp(2h,\R)$ as an algebraic group over $\R$.
In facts, $Sp(2h,\Z)\subset Sp(2h,\R)$ is a maximal arithmetic subgroup \cite{borebb}.
In the present context it is more natural to consider $G$ as the real locus of an algebraic group defined over $\mathbb{Q}$, see \cite{swampIII} and references therein.}.
Maps between spaces with these special properties are well studied in mathematics. 

From an  algebro-geometric viewpoint the target space $Sp(2h,\Z)\backslash \boldsymbol{H}_h$ is a normal quasi-projective variety \cite{bb,bb2}, called the \textit{Siegel variety} (or Siegel scheme),
a special instance of Shimura variety \cite{sh1,sh2}. $Sp(2h,\Z)\backslash \boldsymbol{H}_h$ is not smooth since  $Sp(2h,\Z)/\{\pm1\}$ does 
not act freely on $\boldsymbol{H}_h$. This can be easily repaired by replacing $Sp(2h,\Z)$ with a neat finite-index subgroup
$\Lambda_h\subset Sp(2h,\Z)$
\cite{morris}. Then $\Lambda_h\backslash \boldsymbol{H}_h$ is a smooth quasi-projective \emph{finite} cover 
of $Sp(2h,\Z)\backslash \boldsymbol{H}_h$.  $\Lambda_h\backslash \boldsymbol{H}_h$ is a non-compact manifold of finite volume
whose fundamental group $\Lambda_h$ is generated by unipotent elements: $\Lambda_h\backslash \boldsymbol{H}_h$
\textit{is a basic example of irreducible
OV manifold.}
$\Lambda_h\backslash \boldsymbol{H}_h$ has a canonical compactification to a normal
 projective variety, the Baily-Borel compactification $\overline{\Lambda_h\backslash \boldsymbol{H}_h}^{\mspace{1.5mu}BB}$ \cite{bb,bb2}. $\overline{\Lambda_h\backslash \boldsymbol{H}_h}^{\mspace{1.5mu}BB}$ is not smooth for $h>1$. However we may blow-up it into a smooth projective variety $\boldsymbol{Y}_{\!h}$ such that the divisor at infinity $\boldsymbol{Y}_{\!h}\setminus (\Lambda_h\backslash \boldsymbol{H}_h)$ is simple normal crossing.
 This can be done rather explicitly in terms of a suitable toroidal compactification of $\Lambda_h\backslash \boldsymbol{H}_h$ \cite{tor1,tor2,tor3}.

Then, replacing $\mathscr{M}$ by a finite cover (if necessary),
 we see the gauge coupling $\mu$ as a map between \emph{smooth} OV manifolds
\be\label{hasqwe4}
 \mu\colon \mathscr{M}\to \Lambda_h\backslash \boldsymbol{H}_h.
\ee 
%
% There is a correspondence between the boundary components at infinity of $\boldsymbol{\Gamma}\backslash \boldsymbol{H}_h$
%  and the $\boldsymbol{\Gamma}$-orbits of parabolic proper subgroups $P\subset Sp(2h,\mathbb{Q})$
% given by sending a point at infinity to its isotropy subgroup in $Sp(2h,\mathbb{Q})$.\footnote{\ For a given $P$ the fixed point is not unique, rather it consists of a component of the boundary at infinity.} A singular point $x_\text{sing}\in\widetilde{\mathscr{M}}$
% is fixed by a subgroup $\cg_{x_\text{sing}}$ which is mapped by $\rho$ in a subgroup $\Gamma_{x_\text{sing}}\subset \Gamma\subset\boldsymbol{\Gamma}$. $\mu(x_\text{sing})$ is then fixed by an infinite subgroup of $\boldsymbol{\Gamma}$,
% and this is possible if this is (modulo finite groups) a unipotent subgroup .... 
% 
% 
% The map \eqref{MU} may be extended over the singular points at finite distance to a continuous map 
% \be
% \mu\colon \mathscr{M}\to \overline{\boldsymbol{\Gamma}\backslash \boldsymbol{H}_h}^{\mspace{1.5mu}BB}\tag{} 
% \ee
% 
It is remarkable that both the source and target spaces of $\mu$ are OV manifolds.
In all known examples of quantum-consistent theories (such as compactifications of the superstring on
Calabi-Yau's) both the image $\mu(\mathscr{M})\hookrightarrow \Lambda_h\backslash \boldsymbol{H}_h$ and
the fibers $\mu^{-1}(s)\hookrightarrow \mathscr{M}$ (equipped with the Riemannian structure induced by their respective embedding\footnote{\ In these examples $\mu$ is an embedding not just an immersion.}) are also  OV spaces.\footnote{\ The statement holds without exceptions because we defined the point to be an OV manifold.}
This is likely to remain true for all effective theories of quantum-consistent  gravity.
%
%The situation smells of some kind of fanciful ``category'' whose objects are the OV spaces
%(or some natural generalization thereof) and whose ``morphisms'' are smooth maps consistent (in some appropriate sense)
%with the OV structures of the source and target objects. The consistent couplings of an effective theory not in the swampland are then the (prolongations of) ``morphisms'' of this would-be category. Bizzarre how it may sound, this is precisely the state of affairs in the SUSY examples!

 \medskip

One goal of this note is to give a preliminary discussion of the following

\begin{que} What properties should have the gauge coupling map $\mu$  in eqn.\,\eqref{hasqwe4} for the Lagrangian \eqref{lag}
 not to sink in the swampland? 
\end{que} 

If a simple condition on $\mu$ exists at all, it should be invariant under $Sp(2h,\Z)$ rotations of the electro-magnetic frame. 
Since we consider only the very extreme IR limit,  this means that
the property should be invariant for $Sp(2h,\R)$.

\subsection{Three different viewpoints on $\tau(\phi)_{ab}$}\label{3differeentt}

Since the coupling $\tau(\phi)_{ab}$ is multi-valued in $\mathscr{M}$,\footnote{\ 
In the SUSY case (and also in the non-SUSY one under the naturalness condition
we propose in section 7), $\tau(\phi)_{ab}$ \emph{must} be multi-valued unless it is a numerical constant as in the examples discussed in \cite{Cecotti:2018ufg}.} its value in a given vacuum $\phi$ is not an intrinsic observable, and we should replace it with some invariantly-defined
quantity.

\subparagraph{Automorphic viewpoint.} In the case of a single photon $h=1$, the modular curve $SL(2,\Z)\backslash \boldsymbol{H}_1$
is biholomorphic to the punctured sphere
\be
j\colon \overline{SL(2,\Z)\backslash \boldsymbol{H}_1}\ \,\widetilde{\to}\ \,\mathbb{P}^1,
\ee 
and one may use as the intrinsically-defined
gauge coupling the value of the Hauptmodul $j$ which is independent of the branch of the
multi-valued function $\tau$ (here $q\equiv e^{2\pi i\tau}$)
\be
j(\tau)\equiv \frac{\big(\theta_2(q)^8+\theta_3(q)^8+\theta_4(q)^8\big)^3}{13\mspace{1.5mu}824\;\eta(q)^{24}}= 
\frac{1}{q}+744+196\mspace{1.5mu}884\,q+\cdots
\ee
This is the viewpoint taken in \cite{Cecotti:2018ufg}. In principle this strategy may be applied for all $h$ since, by the already mentioned
Baily-Borel theorem \cite{bb}, $\Lambda_h\backslash \boldsymbol{H}_h$ has 
enough automorphic invariant functions to fully characterize its points.

\medskip

\subparagraph{Total space viewpoint.} The space $\Lambda_h\backslash \boldsymbol{H}_h$ (resp.\! $\boldsymbol{H}_h$) is the moduli space of enhanced\footnote{\ The Siegel variety $Sp(2h,\Z)\backslash \boldsymbol{H}_h$ is the moduli space 
of principally polarized Abelian varieties. Going to the smooth covering Shimura variety $\Lambda_h\backslash \boldsymbol{H}_h$
leads to the moduli of polarized Abelian varieties endowed with some extra structure: e.g.\! if $\Lambda_h$ is the kernel of
$Sp(2h,\Z)\to Sp(2h,\Z/ m\Z)$ ($m\geq3$) the extra structure is a choice of generators of the group of $m$-torsion points.
An Abelian variety with such extra structure is called an \emph{enhanced} Abelian variety.} 
principally polarized Abelian varieties over $\C$ of dimension $h$ (resp.\! \emph{marked,} principally polarized, Abelian varieties). 
As for the axion-dilaton $\tau\equiv C_0+ie^{-\Phi}$ in $F$-theory \cite{Fth},
often it is more convenient to think of the gauge couplings as a fibration $\varpi\colon\mathscr{X}\to\mathscr{M}$
whose fibers $\mathscr{X}_\phi$ are (enhanced) principally polarized Abelian varieties of periods $\tau(\phi)_{ab}$, and then describe the properties of the gauge couplings in terms of the intrinsic geometry of the total space
$\mathscr{X}$ of the fibration. 
\medskip

There is a more physical ``total space'' construction. One compactifies the 4d effective theory down to 3d on a circle;
each 4d vector field yields two new scalars in 3d and we get two additional scalars from the metric by the KK mechanism.
The resulting 3d scalars' manifold $\mathscr{M}_3$ is fibered over the 4d scalars' manifold $\mathscr{M}$,
and the gauge coupling map $\mu$ is encoded in the fiber's geometry (see \S.\ref{addgrav} for details).
So the two 4d couplings $G(\phi)_{ij}$ and $\mu(\phi)_{ab}$ get geometrically unified in the intrinsic Riemannian geometry of
the 3d target manifold $\mathscr{M}_3$ which, according to the swampland conjectures, should
also be an Ooguri-Vafa space with its own discrete gauge group $\cg_3\,\triangleright\,\cg\sim \cg^0\times\Gamma$.
We shall use this ``total space'' viewpoint when convenient.
\medskip

\subparagraph{Elementary viewpoint.} We mostly adopt a more naive strategy: instead of considering subtle automorphic invariants of the gauge couplings, we shall see \eqref{hasqwe4} as a mere smooth map between Riemannian manifolds, and rely on the invariants of smooth maps which are defined in Differential Geometry (DG) textbooks. 

\subsection{DG invariants of the gauge couplings $\tau(\phi)_{ab}$}\label{s:dginv}

%\medskip

The basic DG invariants of a smooth map $\phi\colon X\to Y$ between Riemann manifolds 
(with metrics $G_{ij}$ and $h_{ab}$, respectively) are:
\begin{itemize}
\item its \emph{energy} $E[\phi]$ given by the value of the Dirichlet integral
\be\label{sssigmamm}
E[\phi]= \frac{1}{2}\int_Xd^nx\,\sqrt{\det G}\, G^{ij}\, h_{ab}\, \partial_i\phi^a\,\partial_j\partial^b\phi^a,
\ee
that is, the action of the Euclidean $\sigma$-model with target $Y$ and source space-time $X$;
\item its \emph{tension field} on $X$
\be\label{tensionq}
T(x)^a\overset{\rm def}{=} -\mspace{1mu}h(\phi(x))^{ab}\,\frac{\delta E[\phi]}{\delta \phi^b(x)}\equiv D^i\partial_i\phi^a(x)\in C^\infty(X,\phi^*TY).
\ee 
%\item its \emph{energy-momentum tensor} $T(x)_{ij}$, i.e.\! the energy-momentum tensor of the $\sigma$-model \eqref{sssigmamm} evaluated on
%the field configuration $\phi^a(x)$
%\be
%T(x)_{ij}= \partial_i\phi^a(x)\partial_j\phi^b(x)\,h(\phi(x))_{ab}-\frac{1}{2}\, G_{ij}\,G^{kl}\,
%\partial_k\phi^a(x)\partial_l\phi^b(x)\,h(\phi(x))_{ab}.
%\tag{26}\ee
\end{itemize} 

We replace our original \textbf{Question} in \S.\ref{s:gennn} with a less ambitious one:
\begin{queos}\label{que???} What can we say about the DG invariants $E[\mu]$ and $T[\mu]$
of the gauge coupling map $\mu\colon \mathscr{M}\to \Lambda_h\backslash \boldsymbol{H}_h$ in a quantum-consistent 4d effective theory of gravity?
\end{queos}
\noindent A partial answer will be given in \S.\,\ref{s:answe}. 
\medskip

Although these DG invariants have a simple definition, in quantum-consistent effective theories of gravity
 they seem to involve rather deep number-theoretical issues. For instance, the
allowed energy levels $E[\mu]$ of the gauge couplings $\mu$ in a quantum-consistent effective
theory of gravity belong to a certain discrete subset $\Xi\subset \R_{\geq0}$,
the \emph{gauge couplings' energy spectrum,} which carries a number-theoretic fragrance. 
Computing $\Xi$ is very hard except in some extremely simple class of effective models.

\begin{baby} The simplest possible quantum-consistent 4d effective theories are the ones with the properties:
$$\boldsymbol{(\ddagger)}\qquad\text{\begin{minipage}{320pt}\it {\bf(1)} the effective theory has $\cn\geq2$ local supersymmetry, and
{\bf(2)} its $U$-duality group $\cg$ is commensurable to the group $\boldsymbol{G}(\Z)$ of ``$\mspace{2mu}\Z$-valued'' points in a universal
Chevalley group-scheme $\boldsymbol{G}$ without simple factors of type $C_{h}$.\end{minipage}}
$$
The prototypical 4d effective theory satisfying $\boldsymbol{(\ddagger)}$
is obtained by compactifying the 10d Type II superstring on a flat 6-torus $T^6$ (see appendix of \cite{soule}): in this case
$\cn=8$ and $\boldsymbol{G}$ has type $E_7$ (cfr.\! \textbf{Example \ref{e1}}).
Whenever $\boldsymbol{(\ddagger)}$ holds one has (for a standard normalization of the metrics)
\be\label{spectrum}
\Xi= \ci \cdot \frac{m}{2\,\mathsf{Vol}(K)}\,\prod_{\ell=1}^r \zeta(d_\ell) 
\ee
where $\{d_\ell\}$ are the degrees of the independent Casimir invariants of the
real Lie group $\boldsymbol{G}(\R)$, $K\subset G(\R)$
is a maximal compact subgroup,\footnote{\ One can show that the absolute ranks of the two Lie groups $G(\R)$ and $K$ are equal.} $\mathsf{Vol}(K)$ its volume
(computed by the Macdonald formula \cite{mac}); $\zeta(s)$ is the Riemann $\zeta$-function, $m\equiv\dim G-\dim K$, and $\ci\subset \mathbb{N}$ is the set
of indices of finite-index subgroups of the maximal arithmetic group $\boldsymbol{G}(\Z)$.  If the $U$-duality group is precisely $\cg$, the energy of the coupling constants is given by the \textsc{rhs}
of \eqref{spectrum} with $\ci$ replaced by $[\boldsymbol{G}(\Z)\colon \cg]$. Eqn.\eqref{spectrum} follows from standard \textsc{susy} arguments together with the Langlands volume formula
for arithmetic quotients \cite{langlands}.
 In the case of Type II on $T^6$ all degrees $d_\ell$ are even,
 so the energy of the gauge coupling has a closed expression in terms of Bernoulli numbers:
$E(\mu)$ is a know rational number times $\pi^{35}$. For Type II on $T^5$, a part for a power of $\pi$,
the energy has a transcendental factor $\zeta(5)\zeta(9)$.
\end{baby}

\subsection{Tension field and harmonic maps}

To answer the \textbf{Simpler Question},
we start by recalling some basic facts about energies and tensions of
smooth maps. Building on these facts, in the next section we shall introduce a more detailed and elegant structure, that
we call \emph{domestic geometry,} modelled on the \textit{variations of Hodge structures} (VHS) \cite{Gbook,deligne,reva,revb,MT4,periods} and the more general \textit{$tt^*$ geometry} \cite{tt*,Cecotti:1990wz,Cecotti:1992rm,dubrovin}\!\!\cite{ttA}.
\vskip6pt

Let $(X,G)$ and $(Y,h)$ be two Riemannian manifolds. As already mentioned, the energy $E(\phi)$ of a map $\phi\colon X\to Y$ is the action of the Euclidean $\sigma$-model with target space $Y$ and space-time $X$, see eqn.\eqref{sssigmamm}.
The map $\phi$ is \emph{harmonic} iff it is a solutions of the corresponding equations of motion\footnote{\ In eqn.\eqref{har}
$\gamma^k_{ij}$ and $\Gamma^a_{bc}$ are  the Christoffel symbols for, respectively, the metric $G_{ij}$ and $h_{ab}$.}
\begin{equation}\label{har}
0=T[\phi]^a \overset{\rm def}{=} G^{ij}\mspace{1mu}D_i\partial_j\phi^a \equiv G^{ij}\Big(\partial_i\partial_j \phi^a - \gamma^k_{ij}\,\partial_k\phi^a+
\Gamma^a_{bc}\mspace{1mu}\partial_i\phi^b\partial_j\phi^c\Big),
\end{equation}
that is, if its tension  
 $T[\phi]\in \phi^*TY$ vanishes. A finite-energy harmonic map is simply an \emph{instanton} of the $\sigma$-model \eqref{sssigmamm}, i.e.\! a classical Euclidean-signature solution of finite action.
 
 When $X$ has dimension 1, a harmonic map is just a geodesic on $Y$ of constant velocity. More generally,
a map $\phi$ is \emph{totally geodesic} iff the full matrix $D_i\partial_j\phi^a$ vanishes and not just its trace as for a general harmonic map. 

When the source space $X$ is K\"ahler,
eqn.\eqref{har} reduces to 
\be
0=G^{i\bar k} D_i\partial_{\bar k}\phi^a\equiv G^{i\bar k}\Big({\delta^a}_b\mspace{2mu}\partial_i+\Gamma^a_{bc}\mspace{2mu}\partial_i\phi^c\Big)\partial_{\bar k}\phi^b
\ee 
and all dependence on the source-space Christoffel symbols $\gamma^k_{ij}$ drops out. In this situation the map $\phi$ is said to be \emph{pluri-harmonic} if the full type-(1,1) tensor
$D_i\partial_{\bar k}\phi^a$ vanishes and not just its trace. This condition depends only on the complex structure of $X$, and is independent of the specific K\"ahler metric $G_{i\bar k}$. 
This means that a pluri-harmonic map is a classical solution of the $\sigma$-model for \emph{all}
choices of the source metric $G_{i\bar k}$ as long as it is K\"ahler. In particular $\phi$ remains a solution if we repair the singularities in 
the K\"ahler metric $G_{i\bar k}$ by a local surgery which keeps it K\"ahler as in \S.\,\ref{surgery!}. 
If, in addition, the target space $Y$ is also K\"ahler
with complex coordinates $z^a$, the pluri-harmonic condition reduces to $D_i\partial_{\bar k}z^a=0$ 
which is automatically satisfied if the stronger condition
$\partial_{\bar k}z^a=0$ holds, i.e.\! if the map $z\colon X\to Y$ is \emph{holomorphic}.

A useful fact is that the composition $\iota\circ \phi\colon X\to Z$
of a harmonic  map $\phi\colon X\to Y$ and a totally geodesic map
$\iota\colon Y\to Z$  is also harmonic \cite{handbook}.

\subsubsection{Energy and tension of maps into symmetric spaces}
The last observation allows us to describe in a simple way the harmonic maps from a Riemannian manifold $M$
 to a symmetric space $G/H$. The map of main interest is the covering gauge coupling $\widetilde{\mu}$ in eqn.\eqref{Tmu},
 and we write explicit expressions for $G/H= Sp(2h,\R)/U(h)$, the general case then being  obvious.  
One considers Cartan's totally geodesic embedding
 \cite{helga}\!\!\cite{handbook}\footnote{\ The target space $Sp(2h,\R)$ is endowed with a
 $Sp(2h,\R)\times Sp(2h,\R)$-invariant \emph{indefinite} pseudo-Riemannian metric, which induces on
 the image of $\iota$ a positive-definite Riemannian metric so that $\iota$ is an isometry onto its
 image (for a proper normalization of the metrics). See eqn.\eqref{principal} and footnote  \ref{uuuuff}.}
\begin{equation}\label{wSb}
\begin{aligned}
&\iota\colon Sp(2h,\R)/U(h)\to Sp(2h,\R),\\
&\iota\colon\ce U(h) \mapsto \cs\overset{\rm def}{=} \ce\ce^t \in Sp(2h,\R)\quad\text{with}\quad
\cs^t=\cs,\quad \cs > 0
\end{aligned}
\end{equation}
where $\ce\in Sp(2h,\R)$ (called a \emph{vielbein} \cite{book})
is any chosen representative in $Sp(2h,\R)$ of the given point in the coset $Sp(2h,\R)/U(h)$. $\cs$ is independent of the choice of $\ce$. 
Then the map
\be\label{tildephi}
\cs\equiv \iota\circ \widetilde{\mu}\colon \widetilde{\mathscr{M}}\to Sp(2h,\R),\qquad
\cs\colon x\mapsto \cs(x)\equiv \ce(x)\ce(x)^t
\ee
is harmonic iff $\widetilde{\mu}$ is harmonic. The energy $E(\widetilde{\mu})$
of the gauge coupling, written in terms of $\cs$,  becomes 
 the action of the $Sp(2h,\R)$ principal chiral model\footnote{\ \label{uuuuff}The statement is a bit formal, since the
action integral in the \textsc{lhs} of eqn.\eqref{principal} does not correspond to a positive-definite 
metric on the non-compact group $Sp(2h,\R)$; all formulae are meant to be analytic continuations from the corresponding compact group $U\mspace{-1mu}Sp(2h)$. However the metric is positive-definite when restricted to the image of the Cartan map $\iota$, i.e.\! on the space of symmetric, positive-definite, real, symplectic $2h\times 2h$ matrices. As a matter of notation, when $y\in\mathfrak{sp}(2h,\R)$,
 $y^\text{o}$ stands for the odd part under the Cartan involution $\theta$, $y^\text{o}\equiv (y-y^\theta)/2$, i.e.\! for  the symmetric
 part of the $2h\times 2h$ matrix $y$. The \textsc{rhs} of the identity \eqref{principal} is manifestly non-negative.}
\be\label{principal}
\frac{1}{8}\int_{\mathscr{M}}\!d^nx\, \sqrt{G}\, G^{ij}\,\mathrm{tr}\Big(\!(\cs^{-1}\partial_i \cs)(\cs^{-1}\partial_j \cs)\!\Big)\equiv \frac{1}{2}\int_{\mathscr{M}}\!d^nx\, \sqrt{G}\, G^{ij}\,\mathrm{tr}\Big(\!(\ce^{-1}\partial_i\ce)^\text{o}(\ce^{-1}\partial_j\ce)^\text{o}\Big)
\ee
  The tension field of the gauge coupling $\widetilde{\mu}$ is more conveniently written as
  \be
  T\equiv d\ast\!\big(\cs^{-1} d\cs),
  \ee
 and $\widetilde{\mu}$ is harmonic iff $\cs$ is a \emph{symmetric} classical soliton of the chiral model
defined on the space-time
$\mathscr{M}$, that is, iff
\begin{equation}\label{tzero}
d\ast\!\big(\cs^{-1} d\cs)=0.
\end{equation}
\emph{Symmetric}   means that the solution satisfies the two additional conditions
  $\cs^t=\cs$ and $\cs>0$.

\subsubsection{Cartan gauge couplings}
We call the inverse matrix $\cs^{AB}$ to  $\cs_{AB}\equiv (\ce\ce^t)_{AB}$
 the \textit{Cartan form} of the gauge couplings. 
 $\iota$ is a global isometry between the Siegel space $\boldsymbol{H}_h$ and the manifold of
 symmetric, positive, symplectic $2h\times 2h$ matrices, so $\cs^{AB}$ and $\tau_{ab}$
 contain exactly the same information.
 With respect to the usual gauge coupling $\tau_{ab}$, its Cartan version $\cs^{AB}$ has the advantage 
 of transforming linearly
 under rotations of the duality frame. All observables, being independent of the frame,
  have nicer expressions when written in terms of $\cs^{AB}$.  
 Writing $\tau_{ab}=X_{ab}+i Y_{ab}$, one has 
 \be
 \cs^{AB} =\left(\begin{array}{c|c}Y^{-1} & -Y^{-1}X\\\hline
-X Y^{-1} & Y+ XY^{-1}X\end{array}\right)^{\!\!\!AB}
 \ee

\section{Domestic geometry}

In 4d $\cn=2$ supergravity the couplings of the vector-multiplets are described
by special K\"ahler geometry, which is equivalent \cite{cec,stro} to the geometry of variations of
Hodge structure (VHS) with non-zero Hodge numbers $h^{3,0}=h^{0,3}=1$
and $h^{2,1}=h^{1,2}=m$, where $m$ is the complex dimension of the special K\"ahler manifold $M$.
VHS itself is a special case of Higgs bundle geometry \cite{simpson}, which we shall refer to as \textit{generalized 
$tt^*$ geometry.} 
In all these geometries $M$ is a K\"ahler manifold.
In the $\cn=2$ SUSY case the geometric swampland problem may be rephrased
as the question of which special K\"ahler geometries do arise as low-energy limits of consistent theories of quantum gravity. A necessary condition \cite{swampIII} is that the corresponding $tt^*$ geometry enjoys certain  
arithmetic properties summarized in the VHS \textit{structure theorem} \cite{reva,revb,MT4,periods}. 
\medskip

In this section we introduce a geometry -- dubbed \textit{domestic} -- modelled on
$tt^*$, which does not require the manifold $M$ to have a complex structure.
Whenever $M$ is K\"ahler, domestic geometry automatically reduces to (generalized)
$tt^*$ geometry.
\medskip

We start with a review of $tt^*$ geometry from a viewpoint which makes natural its
domestic generalization. The review is rather detailed, because we need results and formulae which 
cannot be found in the physical $tt^*$  literature.
Before going to that, we recall some definitions. 

\begin{nott}
We write $G(\R)$ for a non-compact, connected,
semi-simple, real
Lie group with Lie algebra $\mathfrak{g}$, $K\subset G(\R)$ for a maximal compact subgroup,
and $G(\Z)\subset G(\R)$ for a maximal arithmetic 
subgroup. $G(\C)$ and $K(\C)$ stand for the complexification of the Lie groups
$G(\R)$ and $K$, respectively.
\end{nott}

\subsection{(Arithmetic) tamed maps}
$M$ is an oriented Riemannian $m$-fold with a graded algebra $\cp^\bullet$ of parallel forms.
\begin{defn} $X$ a Riemannian manifold. A smooth map $\mu\colon M\to X$ is 
\emph{tamed} iff
\be\label{hhasq1}
  D\ast (d\mu \wedge\Omega)=0\quad\text{for all }\Omega\in\mathcal{P}^\bullet.
\ee
It suffices to require \eqref{hhasq1} for parallel forms $\Omega$ of degree $\leq m/2$
since
\be
\ast D\ast (d\mu \wedge \Omega) \equiv (-1)^{m-1}\, D\ast (d\mu\wedge \ast\, \Omega).
\ee
\end{defn}
Specializing to $\Omega=1$ we see that \emph{tamed} $\Rightarrow$ \emph{harmonic}, i.e.\! the tension field of a tamed map $\mu$ vanishes, $T[\mu]=0$.
Written in components eqn.\eqref{hhasq1} requires  $D^j \partial_i \mu\in \mathrm{End}(TM)\otimes \mu^* TX$
to satisfy
\be\label{whatincompon}
\Omega_{j[i_1\cdots i_{k-1}}\,D^j\partial_{i_k]}\mu=0\quad \text{for all }\Omega\in \cp^k,\quad k=0,1,\cdots, m.
\ee

We are interested only in tamed maps whose target space $X$ is a locally symmetric
space of non-compact type, $X\equiv \Lambda\backslash G(\R)/K$, the prototypical example being 
a smooth finite cover of the Siegel variety 
\be
\Lambda\backslash G(\R)/K \overset{\rm e.g.}{=} \Lambda_h\backslash Sp(2h,\R)/U(h).
\ee

\medskip

In sections 6 and 7 below we address \emph{inter alia} the following 
\begin{que}
What does it mean for 
 the gauge coupling $\mu$ in eqn.\eqref{hasqwe4}
to be \emph{tamed?}
\end{que}
We shall see that being tamed is a natural condition for the gauge couplings which has important consequences.  
Note that in the physical context, $\cp^\bullet$ is an algebra of invariants for the continuous gauge symmetry,
so $\mu$ is tamed iff the algebra of invariant tensors under the ``gauge coupling endomorphisms''
 $D^i\partial_j\tau_{ab}$ contains all gauge invariants.  
 
\smallskip

\subparagraph{Arithmetic tamed maps.} It is convenient to lift the map $\mu$ to a map between simply-connected covers 
\be
\widetilde{\mu}\colon\widetilde{M}\to G(\R)/K
\ee 
which is
 \emph{twisted} by the monodromy representation $\rho\colon \pi_1(M)\to \Lambda$ of the fundamental group
 \be\label{twisted}
 \widetilde{\mu}\circ \xi = \rho(\xi)\cdot \widetilde{\mu},\qquad \forall\;\xi\in \text{(deck group $\widetilde{M}\to M$),}
 \ee 
 that is, $\widetilde{\mu}$ is the lift which makes the following diagram to commute
 \be
 \begin{gathered}
 \xymatrix{\widetilde{M}\ar@{->>}[d]\ar[rr]^{\widetilde{\mu}}&& G(\R)/K\ar@{->>}[d]\\
 M\ar[rr]^\mu && \Lambda\backslash G(\R)/K}
 \end{gathered}
 \ee
 The image $\Gamma\equiv\mu_\ast(\pi_1(M))\subset\Lambda$ is the \emph{monodromy group}.
 The tamed map $\mu$ is \emph{arithmetic} iff
 \begin{itemize}
 \item[(i)] $\Lambda$, hence $\Gamma$,
  is a neat sub-group of a maximal arithmetic subgroup $G(\Z)$: 
  \be
  \Gamma\subset\Lambda\subset G(\Z)\subset G(\R);
  \ee
  \item[(ii)]  its energy
 is finite, $E[\mu]<\infty$.
 \end{itemize} 
 
 An arithmetic tamed map is thus an instanton of the 
 $\Lambda\backslash G(\R)/K$ $\sigma$-model with some additional properties.

\subsubsection{Special cases}\label{s:special}
For $M$ a Riemannian manifold of dimension $m\geq3$
with generic holonomy algebra $\mathfrak{so}(m)$, \emph{tamed} is equivalent to
\emph{harmonic}. There are some special cases:
\begin{itemize}
\item[(a)] for $M= \mathbb{R}$:  \emph{tamed} $\equiv$ \emph{harmonic} $\equiv$ \emph{geodesic;}
\item[(b)] for $M$ strictly K\"ahler: \emph{tamed} $\equiv$ \emph{pluri-harmonic,}\,\footnote{\ A holomorphic map between K\"ahler manifolds is a special instance of pluri-harmonic map.} i.e.\! $D\overline{\partial}\mu=0$
\item[(c)] for $M$ quaternionic-K\"ahler (with $m\geq8$): \emph{tamed} $\equiv$ \emph{totally geodesic}\,\footnote{\ Let us sketch a proof.
We write $a$, $b$ for the ``flat'' indices of an orthonormal frame in the tangent space $T$ at base point in $\cm$.
Let $\mathfrak{ann}(\cp^\bullet)\equiv\{A_{ab}\in \mathrm{End}(T)\colon \Omega_{b[a_1\cdot a_{k-1}}A_{a_k]b}=0\ \text{for all }\Omega\in\cp^\bullet\}$ be the annihilator of $\cp^\bullet$. From eqn.\eqref{whatincompon} we see that $D_a\partial_b\mu\in \mathfrak{ann}\cap \odot^2T$.
For a strict quaternionic-K\"ahkler manifold of dimension $\geq8$ one has $\cp^\bullet=\R[\Omega]$ with $\Omega$ the canonical 4-form.
Then  $\mathfrak{ann}(\cp^\bullet)\cap\odot^2 T=0$ see \textbf{Proposition 1.2} of \cite{corlette2}.} i.e.\! $D_i\partial_j\mu=0$,
\item[(d)] for $M$ locally isometric to a symmetric space $G/K$ -- not of the form $SO(n,1)/SO(n)$ or $SU(n,1)/U(n)$ -- 
we typically have 
\emph{tamed} $\equiv$ \emph{totally geodesic,} see appendix \ref{ttamed} for examples.
\end{itemize}
 
 \medskip

Next we consider the special case (b) in some detail.
This leads to $tt^*$ geometry which is the model geometry which inspires all our constructions.

\subsection{Generalized $tt^*$ geometry}\label{s:tt*}

In this subsection $M$ is a K\"ahler space with local holomorphic coordinates $t^i$.
 
 \begin{defn}\label{tt*defn} A \emph{generalized $tt^*$ geometry} (or Higgs bundle) is a tamed map
from $M$ into a locally symmetric space $\Lambda\backslash G(\R)/K$ of non-compact type.
 We say that the generalized $tt^*$ geometry is
 \emph{arithmetic} iff the underlying tamed map $\mu$ is arithmetic.
 \end{defn}
 
 Let us see how this definition leads to the usual $tt^*$ formalism \cite{tt*}. We recall that the maximal compact subgroup $K\subset G(\R)$ is the fixed locus of a Cartan involution $\theta$  of the semi-simple Lie group $G(\R)$. The Lie algebra
 $\mathfrak{g}$ of $G(\R)$ splits into $\theta$-even and $\theta$-odd parts,
 $\mathfrak{g}=\mathfrak{k}\oplus\mathfrak{p}$, where $\mathfrak{k}$ is the Lie algebra of $K$
 while $\mathfrak{p}$ is a $K$-module.  
 We fix a faithful real representation $\sigma\colon G(\R)\to SL(V,\R)$
 such that $V\cong V^\vee$,
 so that $G(\R)$ is seen as a concrete group of $n\times n$ real unimodular matrices $g$ ($n\equiv \dim_\R V$).
We choose conventions so that, in terms of matrices, $\theta$ is the inverse of the transpose, and we shall write  $g^t$ for $(g^{-1})^\theta$.
 The Maurier-Cartan form $g^{-1}d g$ is a 1-form on the manifold $G(\R)$ with coefficients 
 in $\mathfrak{g}\subset \mathfrak{sl}(n)$ which may be decomposed into $\theta$-even and $\theta$-odd parts.
 \medskip
 
 Let $\widetilde{\mu}\colon \widetilde{M}\to G(\R)/K$ be \emph{any} smooth map
  twisted by the appropriate monodromy representation $\rho$, eqn.\eqref{twisted}.
 We choose a lift $f\colon \widetilde{M}\to G(\R)$ of $\widetilde{\mu}$ and use it to pull-back
 the Maurier-Cartan form to a 1-form on $\widetilde{M}$ with coefficients in $\mathfrak{g}$
 which we
  decompose in $\theta$-even and $\theta$-odd parts as well as in $(p,q)$ form type
\be
f^*\mspace{-2mu}(g^{-1} dg)=A+\bar A+C+\bar C,\quad \text{where}\ \left\{\begin{aligned}
A&\overset{\rm def}{=} f^*\mspace{-2mu}(g^{-1} dg)^\text{even}_{(1,0)}, &&
\bar A&\overset{\rm def}{=} f^*\mspace{-2mu}(g^{-1} dg)^\text{even}_{(0,1)}\\
C&\overset{\rm def}{=}f^*\mspace{-2mu}(g^{-1} dg)^\text{odd}_{(1,0)}, &&
\bar C&\overset{\rm def}{=}f^*\mspace{-2mu}(g^{-1} dg)^\text{odd}_{(0,1)}
 \end{aligned}\right.
\ee
We write $D\equiv\partial+A$, $\bar D\equiv \bar\partial +\bar A$ and $C\equiv C_i\, dt^i$, $\bar C\equiv  \bar C_{\bar k}\, d{\bar t}^{\mspace{1mu}k}$ with
$C_i, \bar C_{\bar k}\in \mathfrak{p}\subset\mathfrak{sl}(n)$.  By construction $D+\bar D$ is a $K$-connection on $\widetilde{M}$,
while
\be
\nabla\equiv D+\bar D+C+\bar C\equiv f^*(d+g^{-1}dg)
\ee 
is a \emph{flat} $G(\R)$-connection on $\widetilde{M}$.
Under a change of lift $f\to f^\prime= f \cdot u$ (with $u\colon \widetilde{M}\to K$)
both connections change by the $K$-valued gauge transformation $u$; hence the $K$-gauge invariants are
independent of the chosen lift $f$ of $\widetilde{\mu}$.
If $\widetilde{\mu}$ is twisted by a representation $\rho$ as in eqn.\eqref{twisted},
the forms $A$, $\bar A$, $C$ and $\bar C$ are invariant under the action of the
deck group, so they may be seen as forms on the K\"ahler base $M\equiv \widetilde{M}/\pi_1(M)$
which are canonically defined (modulo $K$-gauge transformations)
 by the original map $\mu\colon M\to \Lambda\backslash G(\R)/K$.
\medskip

Decomposing the identity $\nabla^2=0$ into even/odd and form type we get
the equalities
\be\label{jjjasq12}
\begin{split}
&D^2+C^2=(DC)=D\bar D+\bar D D+C\bar C+\bar C C=\\
&= (D\bar C)+(\bar D C)= \bar D^2+\bar C^2=(\bar D\bar C)=0
\end{split}
\ee
which hold for \emph{all} smooth maps $\widetilde{\mu}$. 
Now suppose that
the map $\widetilde{\mu}$ is harmonic with respect to some K\"ahler metric $G_{k\bar l}$ on $\widetilde{M}$, that is, 
it satisfies the equation 
\be
\bar D^k C_k\equiv G^{\bar l k}\bar D_{\bar l}\, C_k=0,
\ee  
which implies the equality
\be
\bar D^i \bar D^k\,\mathrm{tr}(C_i C_k)= \mathrm{tr}\Big[(\bar D^i C_k)(\bar D^k C_i)\Big]+ \mathrm{tr}\Big(C_k\bar D^i \bar D^k C_i\Big)
\ee
while the identities \eqref{jjjasq12} yield
\begin{align}
 &\mathrm{tr}\Big[(\bar D^i C_k)(\bar D^k C_i)\Big]=  \mathrm{tr}\Big[(D_k \bar C^i)(\bar D^k C_i)\Big]\equiv \|\bar D C\|^2\\
&\begin{aligned}
 &\mathrm{tr}\Big(C_k\bar D^i \bar D^k C_i\Big)=  \mathrm{tr}\Big(C_k[\bar D^i, \bar D^k] C_i\Big)=-\mathrm{tr}\Big(C_k[\bar C^i, \bar C^k] C_i\Big)=\\ &= \mathrm{tr}\Big([C_i, C_k]\,[\bar C^k, \bar C^i]\Big) \equiv \big\|\mspace{1mu}[C_i, C_k]\mspace{1mu}\big\|^2
 \equiv \big\|\mspace{2mu} C^2\mspace{1mu}\big\|^2
 \end{aligned}
\end{align}
This shows the 
\begin{lem}[Sampson's Bochner-formula \cite{samp,simp2}]  Let $M$ be K\"ahler and $\mu\colon M\to \Lambda\backslash G(\R)/K$ be a \emph{harmonic} map. 
Then
\be\label{hasqwert}
\bar D^i \bar D^k\,\mathrm{tr}\Big(C_i C_k\Big)= \big\|\bar D C\big\|^2+\big\|\mspace{2mu} C^2\mspace{1mu}\big\|^2.
\ee
\end{lem}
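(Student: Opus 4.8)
The statement is a Weitzenb\"ock/Bochner-type identity, and since all the structural equations are already in place my plan is a direct two-step covariant computation starting from the left-hand side, using only the Leibniz rule, the harmonicity equation $\bar D^k C_k=0$, and the flatness identities \eqref{jjjasq12} (which, recall, hold for \emph{every} smooth twisted map, harmonic or not). First I would apply $\bar D^k$ to the gauge-invariant scalar $\mathrm{tr}(C_iC_k)$ and expand by Leibniz, so that $\bar D^k\,\mathrm{tr}(C_iC_k)=\mathrm{tr}\big((\bar D^k C_i)C_k\big)+\mathrm{tr}\big(C_i(\bar D^k C_k)\big)$; the second term dies on the nose because $\bar D^k C_k=0$ is exactly the harmonic equation. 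Applying $\bar D^i$ to the surviving piece and expanding once more by Leibniz, then using cyclicity of the trace to reorder, yields precisely the two terms $\mathrm{tr}\big[(\bar D^i C_k)(\bar D^k C_i)\big]+\mathrm{tr}\big(C_k\,\bar D^i\bar D^k C_i\big)$, which I would then treat separately.

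For the first term my plan is to invoke the mixed flatness identity $(D\bar C)+(\bar D C)=0$, which in components reads $\bar D_{\bar l}C_k=D_k\bar C_{\bar l}$ up to a sign fixed by conventions; raising the index with the Hermitian metric converts $\bar D^i C_k$ into $D_k\bar C^{\,i}$. Because $\bar C$ is the Hermitian conjugate of $C$ and $D$ the adjoint of $\bar D$ (both features of the Cartan structure on $G(\R)/K$), the resulting $\mathrm{tr}\big[(D_k\bar C^{\,i})(\bar D^k C_i)\big]$ is manifestly the squared norm $\|\bar D C\|^2$. For the second term the key move is to use harmonicity a \emph{second} time: since $\bar D^i C_i=0$ one has $\mathrm{tr}\big(C_k\,\bar D^k\bar D^i C_i\big)=0$, so I may subtract it off freely and replace $\bar D^i\bar D^k$ by the commutator $[\bar D^i,\bar D^k]$. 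The flatness identity $\bar D^2+\bar C^2=0$ identifies this curvature with $-[\bar C^{\,i},\bar C^{\,k}]$, and a final reshuffle of the trace, antisymmetrizing $C_iC_k\to[C_i,C_k]$ against the already-antisymmetric $[\bar C,\bar C]$, produces $\mathrm{tr}\big([C_i,C_k][\bar C^{\,k},\bar C^{\,i}]\big)=\|C^2\|^2$, again non-negative because $\bar C=C^\dagger$. Adding the two pieces gives \eqref{hasqwert}.

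The computation is essentially forced, so the only genuine subtlety, and the step I would be most careful about, is bookkeeping rather than ideas: fixing a consistent sign/reality convention for the Cartan involution $\theta$ so that $\bar C=C^\dagger$ and $D=\bar D^\dagger$ hold exactly, and then checking that each trace assembles into a non-negative norm and not its negative (the two sign uses, in $(D\bar C)+(\bar D C)=0$ and in $\bar D^2=-\bar C^2$, must conspire correctly). A secondary point to verify is that $\mathrm{tr}(C_iC_k)$ and its $\bar D$-derivatives are genuinely $K$-gauge invariant, so the connection ambiguity from the choice of lift $f$ drops out, and that raising indices with the metric commutes with $\bar D$ as expected; these are routine since $D+\bar D$ is a metric $K$-connection, but they are exactly where a sign slip would hide.
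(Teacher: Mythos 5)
Your proposal is correct and follows essentially the same route as the paper: Leibniz expansion of $\bar D^i\bar D^k\,\mathrm{tr}(C_iC_k)$ with harmonicity killing the $\bar D^kC_k$ terms, the flatness identity $(D\bar C)+(\bar DC)=0$ converting the first piece into $\|\bar DC\|^2$, and harmonicity plus $\bar D^2+\bar C^2=0$ turning the second piece into the commutator norm $\|C^2\|^2$. The sign and Hermiticity conventions you flag are indeed the only delicate points, and they work out exactly as in the paper's conventions where $\theta$ acts as inverse-transpose and the orthogonal trivialization makes $\bar C$ the conjugate of $C$.
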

The \textsc{rhs} of \eqref{hasqwert} is the sum of two non-negative terms:
hence the integral over $M$ of the \textsc{lhs} vanishes if and only if the two terms on the right are both identically zero;
this implies \textbf{(1)} $\bar D C=0$ which (by definition) says that $\mu$ is pluri-harmonic,
and \textbf{(2)} the algebra generated by the coefficient matrices $C_i$ is Abelian. In facts from eqn.\eqref{hasqwert}
 we see that \textbf{(2)} is an automatic consequence of \textbf{(1)}.\footnote{\ For an alternative proof that  \textbf{(1)} $\Rightarrow$  \textbf{(2)} see \cite{dubrovin} or the appendix of \cite{swampIII}.} 
 The \textsc{lhs} of \eqref{hasqwert} is a total derivative, so its integral over $M$ is a surface term: in particular, when $M$
 is compact a harmonic map is automatically pluri-harmonic \cite{samp}. 
 More generally:
 \smallskip
 
  \textit{A harmonic map $\mu\colon M\to \Lambda\backslash G(\R)/K$ is pluri-harmonic \emph{($\equiv$ tamed)} if and only if}
 \be\label{bochboch}
 \int_{\partial M} \ast\,\mspace{1mu}\mathrm{tr}\big(C_k \bar D^k C\big)\equiv
  \int_{\partial M} \ast\,\mspace{1mu}\mathrm{tr}\big(C_k D \bar C^k\big) =0,
 \ee
 a condition which depends only on the asymptotic behaviour of $\mu$ at infinity in $M$.
 
 \subparagraph{The $tt^*$ equations.} In a generalized $tt^*$ geometry, $\mu$ is pluri-harmonic, and hence $\bar D C=0$. In view of \eqref{jjjasq12},\,\eqref{hasqwert} this implies the 
  $tt^*$ PDEs \cite{tt*}
\be
\begin{split}
&D^2=C^2=(DC)=D\bar D+\bar D D+C\bar C+\bar C C=\\
&= (D\bar C)=(\bar D C)= \bar D^2=\bar C^2=(\bar D\bar C)=0.
\end{split}
\ee
These equations may be summarized in the following statement:
\begin{pro} \textit{For $M$ strictly K\"ahler, $\mu\colon M\to \Lambda\backslash G(\R)/K$ is tamed if and only if}
\be\label{pppp00}
 \big(\nabla^{(\zeta)}\big)^2\equiv\big(D+\bar D+\zeta^{-1}\,C+\zeta\,\bar C\big)^{\!2}=0\ \ \text{for all }\zeta\in\mathbb{P}^1.
\ee
\end{pro}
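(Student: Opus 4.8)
The plan is to reduce the statement to pure linear algebra in the spectral parameter $\zeta$, exploiting that the undeformed connection $\nabla=D+\bar D+C+\bar C$ is flat for \emph{every} smooth map, i.e.\ that the relations \eqref{jjjasq12} hold unconditionally. I would first expand the curvature of the deformed connection,
\be
\big(\nabla^{(\zeta)}\big)^2=\big(D+\bar D+\zeta^{-1}C+\zeta\bar C\big)^2,
\ee
and collect the result as a Laurent polynomial in $\zeta$. Since $C$ carries the weight $\zeta^{-1}$ and $\bar C$ the weight $\zeta$, the curvature has the shape
\be
\big(\nabla^{(\zeta)}\big)^2=\zeta^{-2}C^2+\zeta^{-1}\big((DC)+(\bar D C)\big)+\Big(D^2+\bar D^2+D\bar D+\bar D D+C\bar C+\bar C C\Big)+\zeta\big((D\bar C)+(\bar D\bar C)\big)+\zeta^{2}\bar C^2.
\ee
Because these five coefficients are attached to distinct powers of $\zeta$, the equation $(\nabla^{(\zeta)})^2=0$ for all $\zeta\in\mathbb{P}^1$ is equivalent to the simultaneous vanishing of all five of them.

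Next I would refine each coefficient by Hodge type, using that $D,C$ are of type $(1,0)$ while $\bar D,\bar C$ are of type $(0,1)$: thus $(DC),D^2,C^2$ are $(2,0)$, the terms $(\bar D C),(D\bar C)$ and $D\bar D+\bar D D+C\bar C+\bar C C$ are $(1,1)$, and $(\bar D\bar C),\bar D^2,\bar C^2$ are $(0,2)$. Splitting the five coefficients accordingly yields exactly the $tt^*$ system $D^2=C^2=(DC)=(D\bar C)=(\bar D C)=\bar D^2=\bar C^2=(\bar D\bar C)=0$ together with $D\bar D+\bar D D+C\bar C+\bar C C=0$. Comparing this list with the unconditional identities \eqref{jjjasq12}, the only genuinely new constraints imposed by flatness of $\nabla^{(\zeta)}$ are $C^2=0$, $\bar C^2=0$ and $(\bar D C)=0$: indeed $C^2=0$ together with $D^2+C^2=0$ forces $D^2=0$; $\bar C^2=0$ with $\bar D^2+\bar C^2=0$ forces $\bar D^2=0$; and $(\bar D C)=0$ with $(D\bar C)+(\bar D C)=0$ forces $(D\bar C)=0$.

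It then remains to match these three residual conditions with tamedness. For the implication $(\Leftarrow)$ I would simply read off $(\bar D C)=0$ from the $\zeta^{-1}$, $(1,1)$ coefficient; since for strictly K\"ahler $M$ the condition $\bar D C=0$ is by definition pluri-harmonicity, which coincides with tamedness (case (b) of \S\ref{s:special}), the map is tamed. For $(\Rightarrow)$, tamedness gives pluri-harmonicity $\bar D C=0$, hence $(D\bar C)=0$ by \eqref{jjjasq12}; the Bochner identity \eqref{hasqwert} then upgrades $\bar D C=0$ to $[C_i,C_k]=0$, that is $C^2=0$, and by the reality of $\mathfrak{g}$ also $\bar C^2=0$. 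Feeding these back into \eqref{jjjasq12} recovers the full $tt^*$ system, whence $(\nabla^{(\zeta)})^2=0$ for all $\zeta$. The main---and essentially only nontrivial---obstacle is this last upgrade: that pluri-harmonicity forces the \emph{separate} vanishing $C^2=0$ rather than merely the combination $D^2+C^2=0$. This is precisely the content of Sampson's Bochner formula recalled in the Lemma above, so once that input is granted the rest of the argument is bookkeeping in $\zeta$ and Hodge bidegree.
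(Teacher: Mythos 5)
Your proof is correct and takes essentially the same route as the paper: it rests on the unconditional flatness identities \eqref{jjjasq12}, the identification of tamedness with pluri-harmonicity ($\bar D C=0$) for strictly K\"ahler $M$, and Sampson's Bochner formula \eqref{hasqwert} to force $C^2=0$ (and, by reality/conjugation, $\bar C^2=0$), after which the expansion in powers of $\zeta$ refined by Hodge bidegree is pure bookkeeping. The paper states these steps more tersely, but the ingredients and their logical order are the same as in your write-up.
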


We write $\mathcal{R}$ for a (commutative)  enveloping algebra $\cu\mathfrak{a}$, where $\mathfrak{a}\subset \mathfrak{gl}(V,\C)$
is a maximal commutative $\C$-subalgebra containing the matrices $C_i$. $\mathcal{R}$ is known as a \emph{chiral ring} \cite{chiralrings}.

\begin{defn}
A $tt^*$ geometry is \emph{strict} if it has a \emph{spectral flow} i.e.\! we can choose $\mathcal{R}$ so that
 $V\cong\mathcal{R}$
as $\mathcal{R}$-modules \cite{chiralrings}. Since $V\cong V^\vee$, this implies $\mathcal{R}\cong \mathcal{R}^\vee$
as $\mathcal{R}$-modules $\Rightarrow$ in a strict $tt^*$ geometry the chiral ring is a (commutative) Frobenius algebra\footnote{\ See \textbf{theorem 1.3} in \cite{polac}.}. The Frobenius pairing is known as 
the topological metric $\eta\colon \mathcal{R}^{\otimes 2}\to \C$ \cite{tt*,dubrovin}.
\end{defn}

The vacuum geometry of a 2d (2,2) QFT is described by a strict arithmetic $tt^*$ geometry
\cite{tt*,Cecotti:1992rm}.
The vacuum bundle\footnote{\ We recall that the vacuum bundle $\mathscr{V}\to\cm$ is the
holomorphic sub-bundle
of the trivial Hilbert space bundle $\mathscr{V}\times\boldsymbol{H}\to \cm$ whose
 fiber $\mathscr{V}_t\subset \boldsymbol{H}$ at  
$t\in\cm$ is given by the subspace of zero-energy states for the Hamiltonian $H_t$ with $F$-term couplings $t$;
$\mathscr{V}$ is equipped with the sub-bundle Hermitian metric induced by the Hilbert-space Hermitian product in $\boldsymbol{H}$. The Hermitian bundle $\mathscr{V}$ and its metric
are insensitive to deformations of $D$-terms couplings \cite{tt*}. } $\mathscr{V}\to \cm$ over the $F$-term coupling space $\cm$
is holomorphic with structure group $K\subset U(\dim_\C\mathscr{V})$, and  $D+\bar D$ is an unitary connection on the Hermitian vector bundle $\mathscr{V}$ which coincides with the Berry connection in the quantum-mechanical sense \cite{tt*}.
A choice of trivialization identifies the fibers of $\mathscr{V}$ with the 
complexification $V_\C$ of the representation space $V$ of the real Lie group $G(\R)$.
Therefore the fibers of $\mathscr{V}$ carry a real structure\footnote{\ A real structure on a $\C$-space is 
an anti-linear map which squares to the identity.} to be identified with the physical PCT operation \cite{tt*}.

The chiral ring $\mathcal{R}$ is more invariantly seen as the fiber $\mathscr{R}_t$ of a sub-bundle $\mathscr{R}\hookrightarrow \mathrm{End}(\mathscr{V})$
consisting of commutative endomorphisms.
For a generalized $tt^*$ the $C_i$'s yield the sub-bundle chain
\be\label{jq1231z}
T\mspace{-2mu}M\hookrightarrow \mathscr{R}\hookrightarrow \mathrm{End}(\mathscr{V}),
\ee
while for a \emph{strict} $tt^*$ geometry
\be\label{jaqq1231}
T\mspace{-2mu}M\hookrightarrow \mathscr{V}\cong  \mathscr{R}\qquad \begin{smallmatrix}\text{\bf \underline{strict} $tt^*$ geometry}\end{smallmatrix}
\ee

\subparagraph{$tt^*$ metric.} To simplify the notation,
we write $\boldsymbol{g}$ for $f^*g$.
Since $\boldsymbol{g}\in G(\R)$, the connection $A+\bar A\equiv (\boldsymbol{g}^{-1}d \boldsymbol{g})^\text{even}$ is
the $K$-connection written in an \emph{unitary}
trivialization of the Hermitian bundle $\mathscr{V}$; 
more precisely, the trivialization is \emph{orthogonal} because of the 
reality structure on $\mathscr{V}$ \cite{tt*}.
Since $\bar D^2=0$ the $K$-connection is also holomorphic,
and it is convenient to work in a \emph{holomorphic} trivialization where $\bar A\equiv0$.
There is a map $U\colon M\to K(\C)$ such that
$U \bar D U^{-1}=\bar\partial$. The transformation between the orthogonal and the holomorphic trivializations is given by 
the complex $K(\C)$-gauge transformation $\boldsymbol{g}\to \boldsymbol{g} U^{-1}$.
The connection $D+\bar D$ is both holomorphic and metric, hence is the unique Chern connection:
  one has
\be\label{holomgg}
A\equiv h\partial h^{-1} = (\boldsymbol{g} U^{-1})^{-1} d (\boldsymbol{g} U^{-1})\Big|_{\theta\ \text{even}}\qquad \bar A=0,
\ee
where $h$ is the fiber metric on $\mathscr{V}$ in the chosen holomorphic trivialization (such that
the topological metric $\eta\equiv \boldsymbol{1}$).
Comparing the two complex gauges
\be
h= U\bar U^{-1}\equiv UU^\dagger
\ee
The Hermitian metric $h$ is called the \textit{$tt^*$ metric} on $\mathscr{V}$ \cite{tt*}.
It satisfies the \textit{reality condition} $h\bar h=1$ \cite{tt*}.\footnote{\ In our conventions the 
topological metric $\eta=1$ as a consequence of our choice $g^\theta=(g^{-1})^t$.}

For a \emph{strict} $tt^*$ geometry, in view of eqn.\eqref{jaqq1231}, the $tt^*$ metric $h$ is identified with
a Hermitian metric on the fibers of $\mathscr{R}$ which induces the sub-bundle metric on $TM$, i.e.\! a Hermitian metric on $M$. 
It is natural to multiply the $tt^*$ metric on $\mathscr{R}$ by a normalization factor so that the section $\boldsymbol{1}\in\mathscr{R}$ 
has norm 1.
In 2d (2,2) QFT the normalized $tt^*$ metric on the coupling space $\cm$ plays the role of the Zamolodchikov metric \cite{tt*}.

\subparagraph{Hodge metric.} In generalized $tt^*$ there is a second, better behaved metric on $M$,
namely the sub-bundle metric on $TM\hookrightarrow \mathrm{End}(\mathscr{V})$ induced by the
$tt^*$ metric on $\mathrm{End}(\mathscr{V})\cong \mathscr{V}\otimes \mathscr{V}^\vee$.
This metric exists independently of the spectral flow and is always K\"ahler
\cite{ttA}. Its K\"ahler form is
\be
i\,K_{i\bar j}\, dt^i\wedge d\bar t^{\bar j}\equiv i\,\mathrm{tr}(C\wedge \bar C). 
\ee
Let $G_{i\bar j}$ be \emph{any} K\"ahler metric on $M$. The (1,1) tensor on $M$
\be
T_{i\bar j}= K_{i\bar j}-G_{i\bar j}\, G^{k\bar l}\,K_{k\bar l}.
\ee
is automatically conserved
\be\label{pqw12aaa}
\nabla^i T_{i\bar j}= G^{i\bar h}\,\nabla_{\bar h} K_{i \bar j}- G^{k\bar l}\,\nabla_{\bar j}K_{k\bar l}=
G^{i\bar h}\Big(\nabla_{\bar h}K_{i\bar j}-\nabla_{\bar j}K_{i\bar h}\Big)\equiv 0.
\ee
Eqn.\eqref{pqw12aaa} has a simple explanation. The map $\mu$, being pluri-harmonic, is harmonic -- hence a classical solution to the $\Lambda\backslash G(\R)/K$ $\sigma$-model --
for \emph{all} choices of the ``spacetime'' K\"ahler metric $G_{i\bar j}$. $T_{i\bar j}$ is just the energy-momentum tensor evaluated on this on-shell field configuration and hence is conserved.

\subsubsection{HIVb  brane amplitudes}

The $tt^*$ geometry of a 2d (2,2) QFT computes  important physical quantities.
The basic ones are
the Hori-Iqbal-Vafa half-BPS brane amplitudes (HIVb) $\Psi(\zeta)_a$ \cite{branes}
which are sections of the bundle
 $\mathscr{V}^\vee\cong \mathscr{V}$ over $\widetilde{M}$ \footnote{\ Eqn.\eqref{hiv} is written in the conventions common in the $tt^*$ literature \cite{tt*},
in particular the  bracket $\langle \cdots |\cdots \rangle$ is linear in its first argument rather than anti-linear
as in usual conventions. The index $a$ is a quantum number labelling the different fundamental branes.}
\be\label{hiv}
\Psi(\zeta)_a[\phi]= \big\langle\, \phi\,\big|\,a\ \text{brane}\big\rangle,\qquad \phi\in\mathscr{V}.
\ee 
The $\tfrac{1}{2}$-BPS brane amplitudes depend on a twistor parameter $\zeta\in \mathbb{P}^1$ which 
specifies the two linear combinations of the supercharges which leave them invariant \cite{branes}.
The brane amplitudes are solutions to the linear PDEs 
\be\label{tt*lax}
\big(D+\bar D+\zeta^{-1}\,C+\zeta\,\bar C\big)\Psi(\zeta)_a=0
\ee
and depend on a choice of $K(\C)$-gauge (i.e.\! of trivialization of $\mathscr{V}\to \widetilde{M}$);
under a change of gauge
\be
\Psi(\zeta)_a\to U\,  \Psi(\zeta)_a,\qquad U\colon \widetilde{M}\to K(\C).
\ee
The $tt^*$ PDEs \eqref{pppp00} are the integrability conditions of the brane equation \eqref{tt*lax}.
\medskip

\subparagraph{Fundamental solutions.} A fundamental solution to eqn.\eqref{tt*lax} is a map 
\be\label{firstX}
\Phi(\zeta)\colon\widetilde{M}\to \sigma(G(\C))\subset SL(n,\C)
\ee
such that the columns $\Phi(\zeta; t,\bar t)_a$
($a=1,\dots, n$) of the matrix $\Phi(\zeta)\equiv\Phi(\zeta;t,\bar t)$ yield a basis of linearly independent solutions of \eqref{tt*lax}.
In a given $K(\C)$-gauge, the fundamental solution is unique up to multiplication on the right by a matrix
$L(\zeta)\in \sigma(G(\C))$ which depends only on the twistor parameter $\zeta$
\be
\Phi(\zeta;t,\bar t)\to \Phi(\zeta; t,\bar t)\, L(\zeta).
\ee
In an orthogonal trivialization of $\mathscr{V}$, the matrix $L(\zeta)$ may be chosen so that $\Phi(\zeta)$ satisfies the
symmetry and reality conditions
  \be\label{reality}
 \Phi(-\zeta)=\Phi(\zeta)^\theta\equiv (\Phi(\zeta)^t)^{-1},\qquad \Phi(\zeta)= \overline{\Phi(1/\bar\zeta)},
  \ee
 and $\Phi(e^{i\theta})\in G(\R)$.
 In the SUSY literature it is more common to use the
 holomorphic gauge (cfr.\! eqn.\eqref{holomgg})
 \be
 \Phi(\zeta)_\text{holo}= U\Phi(\zeta),\qquad \Phi(\zeta)_\text{holo}=h\, \overline{\Phi(1/\bar \zeta)}_\text{holo}
 \ee
 which yield the usual formula for the $tt^*$ metric $h$ as a bilinear in the solution of the linear problem
 \eqref{tt*lax} 
 \cite{tt*,Cecotti:1992rm,dubrovin}
 \be
h= \Phi(\zeta)_\text{holo}\mspace{1mu}\overline{\Phi(-1/\bar \zeta)}_\text{holo}^{\,t}.
\ee
 
 \subparagraph{Integral structure.} The HIVb  fundamental brane amplitudes correspond to a special basis
 of solutions to \eqref{tt*lax}, so
 \be
 \Psi(\zeta)_a= \big(\Phi(\zeta)L(\zeta)\big)_a,
 \ee
 for some $L(\zeta)$. To get the appropriate $L(\zeta)$  note that 
the space of physical $\tfrac{1}{2}$-BPS branes
 has an integral structure: for (2,2) $\sigma$-models it arises because the physical branes 
 have support on a sub-manifold of the target space, and hence represent integral
 elements of the relevant homology group. More generally, the integral structure arises because of Dirac quantization of the brane charge.
 Then the representation space $V$ of $G(\R)$ has the form
 \be
 V= V_\Z\otimes_\Z \R,
 \ee
 with $V_\Z\subset V$ a lattice preserved by the arithmetic subgroup $G(\Z)\subset G(\R)$.
 The brane map $\Psi(\zeta)\colon \widetilde{M}\to \sigma(G(\R))$
 is twisted by a monodromy representation (cfr.\! eqn.\eqref{twisted})
 \be\label{monodrr}
 \xi^\ast\mspace{1mu} \Psi(\zeta)= \Psi(\zeta)\cdot \rho_\zeta(\xi)^{-1},\quad \forall\;\xi\in \text{(deck group $\widetilde{M}\to M$)}
 \ee
 which should respect the arithmetic structure, so $\rho_\zeta(\xi)\in G(\Z)$ and hence $\zeta$-independent. 
 Setting $\zeta=1$ and comparing with eqn.\eqref{twisted}, we see that the basic brane amplitudes
 are given by an integral basis of solutions to \eqref{tt*lax} on which the monodromy
 action is given by multiplication on the right by $\rho(\xi)^{-1}$, where $\rho$ is the monodromy representation of the $tt^*$ geometry.
 Then the $tt^*$ metric $h$ is given by a Hermitian form in the brane amplitudes (written in a holomorphic $K(\C)$-gauge)
 \be\label{lastX}
 h= \Psi(\zeta)_\text{holo}\, I\, \Psi(-1/\bar\zeta)^\dagger_\text{holo},\qquad
 \text{where }\ I\equiv L(\zeta)^{-1}(L(-1/\bar \zeta)^{-1})^\dagger,
 \ee 
with $I$ the intersection form between dual bases of BPS.\footnote{\ In the most interesting case,
i.e.\! graded $tt^*$ geometries, we will present the explicit form of $I$ (see eqn.\eqref{whatI}) checking that $I$ is an element of $G(\Z)$ (as physically expected),
hence $\zeta$-independent.
}
 \medskip
 
\subparagraph{Brane amplitudes in general $tt^*$ geometry.}
 The equations of the HIVb  brane amplitudes, eqns.\eqref{tt*lax},\eqref{reality}, as well as their
integral structure, continue to make sense for all generalized arithmetic 
$tt^*$ geometry, whether it has a spectral flow or not. 

Given a fundamental brane amplitude $\Psi(\zeta)$
we may construct other ones by changing the representation $\sigma$ of $G(\R)$.
Usually the physical branes are given by the fundamental representation $\sigma_\text{fund}$;
all other representations $\sigma$ are sub-representations of some $(\sigma_\text{fund})^{\otimes s}$
defined by an invariant tensor $t\in \sigma\otimes (\sigma_\text{fund}^\vee)^{\otimes s}$;
so the branes amplitudes associated to an arbitrary representation $\sigma$
 may be interpreted as physical \emph{multi-}brane amplitudes
\be
\boldsymbol{\Psi}(\zeta)_{\boldsymbol I}^{\ \boldsymbol{A}}\equiv \boldsymbol{\Psi}(\zeta)_{\boldsymbol I}^{\ A_1\cdots A_s}= t_I^{i_1\cdots i_s}\, \Psi(\zeta)_{i_1}^{\ A_1}\cdots \Psi(\zeta)_{i_s}^{\ A_s}\qquad\quad
\begin{smallmatrix}t_{\boldsymbol I}^{i_1\cdots i_s}\ G(\mathbb{R})\text{-invariant}\\
\text{\phantom{i}tensor}\end{smallmatrix}
\ee

\subsubsection{Graded $tt^*$ geometries \& VHS}\label{s:graded} When the 2d (2,2) QFT is superconformal,
the $tt^*$ geometry has further structure
induced by the superconformal $U(1)_R$ charge. This additional structure characterizes the variations of Hodge structure
inside the larger class of   
$tt^*$ geometries.

\begin{defn} An arithmetic generalized $tt^*$ geometry is \emph{graded} 
 iff there is a grading element $Q\in i\,\mathfrak{g}\equiv i\,\mathfrak{Lie}(G(\mathbb{R}))$ (the \emph{``$U(1)_R$ charge''})
such that
\be\label{jasqw12}
[Q, A]=0,\quad [Q,C]=-C, \quad [Q,\bar C]=\bar C. 
\ee
A graded $tt^*$ geometry is a \emph{variation of Hodge structure (VHS)} if, in addition,
\be\label{pqw12z}
e^{2\pi i Q}\,\big|_{V}=(-1)^{\hat c},\qquad e^{i\pi Q}\in G(\Z),\qquad \mathsf{Ad}(e^{\pi i Q})(g)=g^\theta\ \ \text{for }g\in G(\R)
\ee
where
\be\label{pqw12zz}
 \hat c\overset{\rm def}{=} 2\,\max\!\big\{\text{eigenvalues of}\ Q\ \text{in }V\big\}\in \mathbb{N}
\ee
When \eqref{pqw12z},\eqref{pqw12zz} hold, the pair $(V, Q)$ is called a \emph{Hodge representation of the Lie group $G(\R)$ of weight $\hat c$}. Hodge representations are classified in \cite{MT4}.
\end{defn}

We identify elements of $\mathfrak{g}$ and respectively $G(\R)$ with the matrices which represent them in the real representation space $V$;
one has $g^\theta\equiv (g^t)^{-1}$ for $g\in G(\R)$.
Then eqn.\eqref{pqw12z} implies that $\Omega\equiv e^{i\pi Q}$ is a matrix with integral entries which
satisfies 
\be
g^t\, \Omega\, g =\Omega\quad \text{for all }g\in G(\R),
\ee
while $\Omega$ is symmetric (resp.\! anti-symmetric) for $\hat c$ even (resp.\! odd).
Hence $V$ is an orthogonal (resp.\! symplectic) real representation of $G(\R)$. The non-degenerate, integral, bilinear form
\begin{equation}
\Omega(v,w)\equiv v^t \Omega w,\qquad\text{with}\qquad \Omega(v,w)= (-1)^{\hat c}\,\Omega(w,v),
\end{equation} 
is called the \emph{polarization} of the VHS.
  For a fixed $g\in G(\R)$, we define the \emph{Weil operator} 
  \be
  C_g= g^{-1} e^{i\pi Q} g\in G(\R).
  \ee Then $\Omega(C_g v,\bar w)$ is a positive-definite Hermitian form on $V_\C\equiv V\otimes_\R \C$; indeed
\be\label{poqw12a}
\Omega(C_g v, \bar w)= (-1)^{\hat c}\, w^\dagger \Omega C_g v=  (-1)^{\hat c}\, w^\dagger \Omega g^{-1}\Omega g v = 
 (-1)^{\hat c}\, w^\dagger g^t \Omega^2 g v = w^\dagger g^\dagger g\, v.
\ee

We write $\mathfrak{h}$ for the Lie subalgebra of $\mathfrak{g}$ which commute with $Q$, and
$H\subset G(\R)$ for the corresponding Lie subgroup. By eqns.\eqref{pqw12z},\eqref{poqw12a} the subgroup
$H$ is compact, and hence contained in 
a maximal compact subgroup $K$. $G(\R)/H$ is then a reductive homogeneous space
with a canonical projection into the symmetric space $G(\R)/K$ \cite{griffithsH}. 
Given any $H$-module $W$ we construct canonically a homogeneous bundle $\mathcal{O}(W)\to G(\R)/H$,
with typical fiber $W$, endowed with a unique canonical connection and metric (up to overall normalization)
\cite{griffithsH}
\be
\co(W)= G(\R)\times W\Big/\big\{(g,w)\sim (g h, h^{-1}\cdot w)\, \text{for }h\in H\big\}
\ee

\begin{lem} $G(\R)/H$ is a homogenous \emph{complex} manifold,
and the homogeneous vector bundle $\co(W)\to G(\R)/H$ is holomorphic for all $H$-module $W$.
\end{lem}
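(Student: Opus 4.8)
The plan is to realize $G(\R)/H$ as an open $G(\R)$-orbit inside a compact complex flag manifold, thereby transporting an invariant complex structure onto it, and then to obtain the holomorphicity of $\co(W)$ by restricting a tautological holomorphic associated bundle; this is the Griffiths--Schmid construction \cite{griffithsH}. First I would use the grading element $Q$ to decompose the complexified Lie algebra. Since $Q$ is a grading element, $\mathrm{ad}(Q)$ is semisimple with real eigenvalues (integers in the VHS case), so
$$\mathfrak{g}_\C=\bigoplus_{p}\mathfrak{g}^p,\qquad \mathfrak{g}^p\equiv\{X\in\mathfrak{g}_\C:[Q,X]=pX\}.$$
By the Jacobi identity $[\mathfrak{g}^p,\mathfrak{g}^q]\subseteq\mathfrak{g}^{p+q}$, while $\mathfrak{g}^0=\mathfrak{h}_\C$ is the complexified centralizer of $Q$, and $C\in\mathfrak{g}^{-1}$, $\bar C\in\mathfrak{g}^{+1}$ by eqn.\eqref{jasqw12}. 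Hence
$$\mathfrak{q}\equiv\bigoplus_{p\ge0}\mathfrak{g}^p=\mathfrak{h}_\C\oplus\mathfrak{n}^+,\qquad \mathfrak{n}^+\equiv\bigoplus_{p>0}\mathfrak{g}^p,$$
is a parabolic subalgebra with Levi factor $\mathfrak{h}_\C$ and nilradical $\mathfrak{n}^+$; let $P\subset G(\C)$ be the corresponding parabolic subgroup, so $G(\C)/P$ is a smooth projective flag manifold.

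Next I would identify the $G(\R)$-orbit through the base point $eP$. Because $Q\in i\,\mathfrak{g}$, the conjugation $\tau$ of $\mathfrak{g}_\C$ fixing the real form $\mathfrak{g}$ satisfies $\tau(Q)=-Q$, whence $\tau(\mathfrak{g}^p)=\mathfrak{g}^{-p}$ and $\tau(\mathfrak{q})=\bar{\mathfrak{q}}=\bigoplus_{p\le0}\mathfrak{g}^p$, the opposite parabolic. For $X\in\mathfrak{g}\cap\mathfrak{q}$ one has $X=\tau X\in\bar{\mathfrak{q}}$, so $X\in\mathfrak{q}\cap\bar{\mathfrak{q}}=\mathfrak{g}^0$ and thus $X\in\mathfrak{g}\cap\mathfrak{g}^0=\mathfrak{h}$; hence the isotropy $G(\R)\cap P$ has Lie algebra $\mathfrak{h}$ and, up to the standard connectedness remarks, equals $H$. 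Writing $\mathfrak{n}^-=\tau\mathfrak{n}^+$, the real form decomposes as $\mathfrak{g}=\mathfrak{h}\oplus\mathfrak{m}$ with $\mathfrak{m}=\mathfrak{g}\cap(\mathfrak{n}^+\oplus\mathfrak{n}^-)$, and the differential of the orbit map identifies $\mathfrak{m}$ with the holomorphic tangent space $\mathfrak{g}_\C/\mathfrak{q}\cong\mathfrak{n}^-$ as real vector spaces; therefore the orbit is open. An open subset of the complex manifold $G(\C)/P$ inherits a $G(\R)$-invariant complex structure, proving that $G(\R)/H$ is a homogeneous complex manifold. Equivalently, the invariant almost complex structure whose $-i$-eigenbundle is the image of $\mathfrak{n}^+$ is integrable precisely because $[\mathfrak{q},\mathfrak{q}]\subseteq\mathfrak{q}$.

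Finally I would deduce the holomorphicity of $\co(W)$. Given the $H$-module $W$, I complexify it to a holomorphic representation of the reductive group $H(\C)$ (possible since $H$ is compact), and extend it to a holomorphic representation $\widetilde W$ of $P=H(\C)\ltimes\exp(\mathfrak{n}^+)$ by letting the unipotent radical act trivially. The associated bundle $G(\C)\times_P\widetilde W\to G(\C)/P$ is holomorphic by the general theory of holomorphic homogeneous bundles. Restricting to the open orbit and using $G(\R)\cap P=H$ together with $\widetilde W|_H=W$ yields a canonical $C^\infty$ identification
$$\big(G(\C)\times_P\widetilde W\big)\big|_{G(\R)/H}\;\cong\;G(\R)\times_H W=\co(W),$$
which transports the holomorphic structure onto $\co(W)$ and, at the same time, produces its canonical connection and metric as in \cite{griffithsH}.

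The hard part will be the two ``soft'' points I glossed over: (i) that the real orbit is genuinely open, equivalently that the invariant almost complex structure is integrable, which is the content of the flag-domain (period-domain) theory and hinges only on $\mathfrak{q}$ being a subalgebra; and (ii) matching the smooth bundle $\co(W)$ with the restriction of the holomorphic bundle on $G(\C)/P$ at the level of transition functions, i.e.\ checking compatibility of the identification with the $P$-action along the orbit. Both are routine once the parabolic $\mathfrak{q}$ and the conjugation identity $\tau(Q)=-Q$ are established, so the genuine work is the Lie-algebra bookkeeping of the first two steps.
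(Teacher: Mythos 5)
Your proof is correct, but it takes a genuinely different route from the paper's. The paper argues \emph{intrinsically}: it splits $T(G(\R)/H)\otimes\C$ via the grading into $\bigoplus_{r>0}\co(\mathfrak{g}^{-r,r})$ and its conjugate, declares the first summand to be the $(1,0)$-distribution, and gets integrability from involutivity (the bracket relation $[\mathfrak{g}^{-r,r},\mathfrak{g}^{-s,s}]\subseteq\mathfrak{g}^{-r-s,r+s}$ plus Newlander--Nirenberg); the holomorphic structure on $\co(W)$ is then \emph{cited} to Griffiths--Schmid rather than proven. You instead work \emph{extrinsically}: you build the parabolic $\mathfrak{q}=\bigoplus_{p\ge0}\mathfrak{g}^p\subset\mathfrak{g}_\C$, realize $G(\R)/H$ as an open $G(\R)$-orbit in the compact dual $G(\C)/P$ (your conjugation identity $\tau(Q)=-Q$ and the dimension count do this correctly, and your $(1,0)$-space $\mathfrak{g}_\C/\mathfrak{q}\cong\mathfrak{n}^-$ agrees with the paper's choice $r>0$, i.e.\ negative $Q$-eigenvalues), and obtain holomorphicity of every $\co(W)$ by restricting the holomorphic associated bundle $G(\C)\times_P\widetilde W$. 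The trade-off: the paper's argument is shorter, never leaves $G(\R)$, and uses exactly the displayed bracket identities, but it outsources the bundle half of the lemma to the literature; your flag-domain construction is self-contained for \emph{both} halves (complex structure and holomorphic bundles come simultaneously, integrability being automatic for an open subset of a complex manifold), and it additionally exhibits the compact dual, which is useful elsewhere (Borel embedding, algebraicity of period domains) -- at the cost of introducing $G(\C)$ and $P$, checking openness of the orbit, and the connectedness caveat in $G(\R)\cap P=H$ that you correctly flag.
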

\begin{proof} Consider the grading of the complexified Lie algebra\footnote{\ In a VHS the grading is integral, $r\in\Z$.
In this case the first equation \eqref{Pmasqw1} is an \emph{(adjoint) Hodge decomposition of $\mathfrak{g}$} \cite{reva,revb,periods,MT4}.}
\begin{align}\label{Pmasqw1}
&\mathfrak{g}\otimes \C= (\mathfrak{h}\otimes \C)\oplus\!\left(\bigoplus_{r\neq0} \mathfrak{g}^{-r,r}\right),&&\text{where }
\mathfrak{g}^{r,-r}\overset{\rm def}{=}\Big\{X\in \mathfrak{g}\otimes \C \colon [Q, X]=r X\Big\},\\
&\big[\mathfrak{h},\mathfrak{g}^{-r,r}\big]\subseteq \mathfrak{g}^{-r,r},
&&\big[\mathfrak{g}^{-r,r},\mathfrak{g}^{-s,s}\big]\subseteq \mathfrak{g}^{-r-s,r+s}.\label{pzzz1mm}
\end{align}
By eqn.\eqref{pzzz1mm} each summand $\mathfrak{g}^{-r,r}$ is a $H$-module, so it defines a homogeneous vector bundle $\co(\mathfrak{g}^{-r,r})$.
The complexified tangent bundle of the manifold $G(\R)/H$ is
 \be
 TG(\R)/H\otimes \C= \left(\bigoplus_{r>0} \co(\mathfrak{g}^{-r,r})\right)\oplus
  \left(\bigoplus_{r<0} \co(\mathfrak{g}^{-r,r})\right).
 \ee
 We define an almost complex structure on $G(\R)/H$ by declaring the first summand to be the complex distribution of (1,0) vectors.
 The almost complex structure is integrable since this complex distribution is involutive
 by the second equation \eqref{pzzz1mm}.
 For the holomorphic structure of $\co(W)$ see e.g.\! \cite{griffithsH}\!\!\cite{periods}.
\end{proof}

\subparagraph{Infinitesimal period relations.} The sub-bundle $\Theta\equiv \co(\mathfrak{g}^{-1,1})$ of the holomorphic tangent bundle is called the
\textit{Griffiths holomorphic horizontal bundle} \cite{Gbook,periods}. Let $M$ be a complex manifold
and $TM$ its holomorphic tangent bundle. We say that a map 
\be
p\colon M\to G(\R)/H
\ee  
satisfies the Griffiths infinitesimal period relations (IPR) if \cite{Gbook,periods}
\be\label{pqw12bbb}
p_*(TM) \subseteq \Theta.
\ee
In particular, such a map $p$ is holomorphic.
\medskip

Let $V_\C=\oplus_i V_{q_i}$ be the decomposition of
the $G(\R)$-module $V_\C=V\otimes \C$ into eigenspaces of $Q$ of eigenvalue $q_i$. Since $[Q,H]=0$,
each $V_{q_i}$ is a $H$-module and yields a homogeneous bundle $\co(V_{q_i})\to G/H$.

\begin{lem} Let $\tilde\mu\colon \widetilde{M}\to G(\R)/K$ be a lift of the tamed map $\mu$ of a
graded $tt^*$ geometry. Then we have the factorization\vskip-12pt
\be\label{pqe12}
\xymatrix{\widetilde{M}\ar@/^2pc/[rrr]^{\tilde \mu}\ar[rr]^{\tilde p} && G(\R)/H \ar@{->>}[r]& G(\R)/K}
\ee
$\tilde p$, which is called the \emph{(Griffiths) period map} of the graded $tt^*$ geometry,
satisfies the IPR \eqref{pqw12bbb}.
The bundles $\mathscr{V}_{q_i}\equiv \tilde p^* \co(V_{q_i})\to \widetilde{M}$ are called \emph{Hodge bundles}. 
\end{lem}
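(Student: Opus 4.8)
The plan is to extract the period map $\tilde p$ directly from the grading element $Q$ and then to read the infinitesimal period relation straight off the Maurer--Cartan decomposition set up above. First I would work in a gauge adapted to the grading, i.e.\ choose the lift $f\colon\widetilde{M}\to G(\R)$ so that the relations \eqref{jasqw12} hold. The crucial observation is that $[Q,A]=0$ forces the $\theta$-even part $A+\bar A$ of $f^*(g^{-1}dg)$ to be valued not merely in $\mathfrak{k}\otimes\C$ but in the centralizer $\mathfrak{h}=\ker(\mathrm{ad}_Q)\cap\mathfrak{g}$; since $H$ is compact one has $\mathfrak{h}\subseteq\mathfrak{k}$, so this intersection really is $\mathfrak{h}$. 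Thus the Chern connection $D+\bar D$ is in fact an $H$-connection: the grading reduces the structure group of the Hermitian bundle $\mathscr{V}$ from $K$ to $H$.

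Next I would promote this reduction to the map $\tilde p$. The residual gauge freedom $f\to f\,u$ preserving the grading relations \eqref{jasqw12} is exactly the $H$-valued one, because $\mathrm{Ad}(u)$ must fix the eigenspace decomposition of $\mathrm{ad}_Q$, which holds iff $u$ commutes with $Q$, i.e.\ $u\in H$. Consequently $\tilde p\overset{\rm def}{=}\pi_H\circ f\colon\widetilde{M}\to G(\R)/H$ is well defined (independent of the adapted lift), is twisted by the same monodromy $\rho$ as $\tilde\mu$ because the forms $A,\bar A,C,\bar C$ are deck-invariant, and satisfies $\pi\circ\tilde p=\tilde\mu$ for the projection $\pi\colon G(\R)/H\twoheadrightarrow G(\R)/K$. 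This establishes the factorization \eqref{pqe12}.

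The IPR is then a one-line computation. Using the frame $f$ to identify $T_{\tilde p(x)}\big(G(\R)/H\big)$ with $\mathfrak{g}/\mathfrak{h}\cong\bigoplus_{r\neq0}\mathfrak{g}^{-r,r}$, the differential $d\tilde p$ is the $(\mathfrak{g}/\mathfrak{h})$-component of $f^*(g^{-1}dg)=A+\bar A+C+\bar C$, namely $C+\bar C$, since $A+\bar A\in\mathfrak{h}$. By \eqref{jasqw12}, $C$ is a $(1,0)$-form valued in $\mathfrak{g}^{-1,1}$ and $\bar C$ a $(0,1)$-form valued in $\mathfrak{g}^{1,-1}$. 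As the holomorphic tangent space of $G(\R)/H$ is $\bigoplus_{r>0}\co(\mathfrak{g}^{-r,r})$ and $\Theta=\co(\mathfrak{g}^{-1,1})$ is the $r=1$ summand, evaluating on a holomorphic vector $v\in T^{1,0}\widetilde{M}$ kills $\bar C(v)=0$ and leaves $d\tilde p(v)=C(v)\in\mathfrak{g}^{-1,1}$. Hence $\tilde p_*\big(T^{1,0}\widetilde{M}\big)\subseteq\Theta$, which simultaneously shows that $\tilde p$ is holomorphic and that it satisfies the IPR \eqref{pqw12bbb}. Finally each $V_{q_i}$ is $H$-stable because $[Q,\mathfrak{h}]=0$, so $\co(V_{q_i})\to G(\R)/H$ is a genuine homogeneous bundle and the Hodge bundles $\mathscr{V}_{q_i}=\tilde p^*\co(V_{q_i})$ are well defined.

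I expect the main obstacle to be the second step: exhibiting $\tilde p$ as a genuinely well-defined map rather than the a-priori gauge-dependent composite $\pi_H\circ f$, and verifying its compatibility with the deck-group twisting so that the square \eqref{pqe12} commutes. Once the reduction of the structure group to $H$ is in place, the IPR drops out purely formally from the $Q$-grading type of $C$, so the entire geometric content is concentrated in the reduction forced by $[Q,A]=0$.
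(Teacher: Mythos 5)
Your proof is correct and follows essentially the same route as the paper, whose entire proof is the one-line observation that, in view of the adjoint Hodge decomposition \eqref{Pmasqw1}, the grading relations \eqref{jasqw12} are equivalent to the IPR \eqref{pqw12bbb}. Your write-up simply makes explicit what that sentence compresses: the reduction of the structure group from $K$ to $H$ forced by $[Q,A]=0$, the resulting well-defined map $\tilde p$, and the identification of $d\tilde p$ with $C+\bar C$ landing in $\co(\mathfrak{g}^{-1,1})=\Theta$.
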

\begin{proof} In view of eqn.\eqref{Pmasqw1}, eqns.\eqref{jasqw12} are equivalent to the IPR \eqref{pqw12bbb}.
\end{proof}

Usually one identifies the graded $tt^*$ geometry with the period map 
\be\label{Wp?}
p \colon M\to \Gamma\backslash G(\R)/H
\ee
which lifts to $\widetilde{p}$ on the universal cover $\widetilde{M}$. By definition, a period map
 satisfies the IPR.

%\medskip

\subparagraph{Applications to 2d QFT.}
The vacuum geometry of  (2,2) 2d SCFT over the chiral conformal manifold $M$
 is a \emph{graded strict} $tt^*$ geometry. If, in addition, the 
 $U(1)_R$ charges of the chiral primaries \cite{chiralrings} are integral,
the SCFT vacuum geometry is a VHS of CY type,\footnote{\ For a VHS being of CY type is equivalent
to being \emph{strict} as a $tt^*$ geometry, i.e.\!
 that there is a spectral flow isomorphism.} i.e.\! with Hodge number $h^{\hat c,0}=h^{0,\hat c}=1$.
 The vacuum bundle 
 $\mathscr{V}\to M$  then has an orthogonal decomposition (preserved by parallel transport with the Berry connection)
 \be\label{qw1234}
 \mathscr{V}=\bigoplus_{q=-\hat c/2}^{\hat c/2}\mathscr{V}_q,\quad\text{where}\quad Q\big|_{\mathscr{V}_q}= q\in \mathbb{N}-\frac{\hat c}{2}.
 \ee
 In this set-up $\hat c$ is one-third the Virasoro central charge \cite{tt*}. The
 spectral-flow isomorphism is graded by the $U(1)_R$-charge, 
 \be
 \mathscr{R}_q\cong\mathscr{V}_{q-\hat c/2},\quad\text{with}\quad \mathscr{R}=\bigoplus_{q=0}^{\hat c}\mathscr{R}_q,
\ee
and implies
 the ``local Torelli'' property
$TM\cong \mathscr{V}_{1-\hat c/2}$.

\subsubsection{Explicit graded HIVb  brane amplitudes} Comparing with eqn.\eqref{tt*lax} we get an explicit formula for the twistorial  multi-brane amplitudes of a graded $tt^*$ geometry: in the orthogonal trivialization they are $\Z$-linear
combinations of the columns of the matrix
\be\label{explicitbrane}
\boldsymbol{\Psi}(\zeta)_{\boldsymbol{I}}^{\ \boldsymbol{A}}= \big(\zeta^{-Q}\,\boldsymbol{g}^{-1}\big)_{\boldsymbol{I}}^{\ \boldsymbol{A}}\qquad \begin{smallmatrix}\text{monodromy group}\\ \text{acts on the right\phantom{m}}\end{smallmatrix}
\ee
with $\boldsymbol{g}_{\boldsymbol{A}}^{\ \, \boldsymbol{I}}$ the matrix elements of $\boldsymbol{g}\in G(\mathbb{R})$ in the Hodge representation $V$. 
Comparing with eqns.\eqref{firstX}-\eqref{lastX}, we see that $L(\zeta)=\zeta^{-Q}$ and
\be\label{whatI}
I\equiv L(\zeta)^{-1} \big(L(-1/\bar \zeta)^{-1}\big)^\dagger = e^{-\pi i Q}\overset{\rm VHS}{\equiv} (-1)^{\hat c}\,\Omega\in G(\Z),
\ee
where the last equality holds in the VHS case. We see that $I$ is the natural intersection form.
%
%In the VHS case, from eqn.\eqref{pqw12z} we get
%\be\label{ppppaqz1}
%\boldsymbol{\Psi}(-\zeta)\equiv (-1)^{\hat c}\,\Omega\, \boldsymbol{\Psi}(\zeta)\,\Omega =\boldsymbol{\Psi}(\zeta)^\theta.
%\ee
%%moreover,
%%\be\label{oooo1234}
%%\boldsymbol{\Psi}(\zeta)= \boldsymbol{g}\,\overline{\boldsymbol{\Psi}(1/\overline{\zeta})}\quad\Rightarrow\quad \boldsymbol{g}= \boldsymbol{\Psi}(\zeta)
%\boldsymbol{\Psi}(-1/\overline{\zeta})^\dagger
%\ee
%for some $\zeta$-independent matrix $\boldsymbol{g}$ (depending on the choice of trivialization of $\mathscr{V}$). Comparing with 
%\eqref{explicitbrane} we see that $[Q,\boldsymbol{g}]=0$, while \eqref{ppppaqz1} $\boldsymbol{g}^\dagger=\boldsymbol{g}$
%and 
%\be
%v^t\, \boldsymbol{g}\, \bar w = Q\mspace{-1mu}\big(C_{\boldsymbol{\Psi}(\pm1)}v,\bar w\big)
%\ee
%so that $\boldsymbol{g}$ is the natural Hodge norm on the fibers of $\mathscr{V}$; in the 2d (2,2) QFT context $\boldsymbol{g}$
% is called the $tt^*$ metric 
%and the splitting \eqref{qw1234} is orthogonal for the metric $\boldsymbol{g}$.  
%\medskip
%

\subsubsection{Physical quantities from brane amplitudes}\label{phquant}
We can use the brane amplitudes to compute several physical quantities, that is,
 $K$-gauge invariant expressions
which are independent of the choice of $\zeta\in\mathbb{P}^1$.
These physical quantities are well-defined for all generalized $tt^*$
 geometries, graded or non-graded, strict or not. 
Examples are 
\begin{align}
&\bullet\ \text{K\"ahler form of Hodge metric:} && i\, K_{i\bar j}\, dt^i\wedge d\bar t^j=i\, \mathrm{tr}\big(C\wedge\bar C)\\
&\bullet\ \text{Cartan gauge coupling:} &&\mathcal{S}^{AB}=\big(\Psi(\zeta)^t\Psi(\zeta)\big)^{AB}\\
&\bullet\ \text{Hodge bilinears:} && \boldsymbol{\mathcal{S}}^{\boldsymbol{AB}}=\big(\boldsymbol{\Psi}(\zeta)^t\boldsymbol{\Psi}(\zeta)\big)^{\boldsymbol{AB}}
\end{align}
We think of the Hodge bilinears as ``higher versions'' of the gauge coupling.

\subsubsection{$tt^*$ entropy functions \& Mumford-Tate groups}\label{s:MT} 
Suppose our tamed (covering)
map 
\be
\tilde\mu\colon \widetilde{\mathscr{M}}\to Sp(2h,\Z)\backslash Sp(2h,\R)/U(h)
\ee
 is actually the gauge coupling of a 4d effective gravity theory. The classical\footnote{\ By ``classical'' we mean the classical entropy function as computed by the truncation of the effective Lagrangian to two-derivatives; the classical entropy becomes exact asymptotically for large charged $|q|\to\infty$.} entropy function (in Sen's sense \cite{sen1,sen2,sen3}) 
of an extremal black-hole with electro-magnetic charge $q\in V_\Z\cong\Z^{2h}$,
(well-defined on the universal cover $\widetilde{\mathscr{M}}$) is
\be
E_{q}=\pi\, q_A\,\cs^{AB} q_B \equiv\pi\,\|q\|_h^2,
\ee
that is, (up to a factor $\pi$) the Hodge norm-squared  $\|q\|_h^2$
of the charge vector. 

For a general $G$-representation $\sigma$, contained
in $\otimes^s V$, we define the ``generalized entropy functions''
of the generalized $tt^*$ geometry to be
\be
S_{\boldsymbol{q}}\colon \widetilde{\mathscr{M}}\to \mathbb{R}_{+},\qquad x\mapsto \boldsymbol{q_A\, \mathcal{S}}(x)^{\boldsymbol{AB}}\,\boldsymbol{q_B},\qquad
\boldsymbol{q}\in \otimes^s V_\Z, 
\ee
$S_{\boldsymbol{q}}$ reduces to $E_q/\pi$ when $\sigma$ is the fundamental representation
(i.e.\! $s=1$).

\begin{pro}\label{ppro} In (generalized) $tt^*$ geometry the generalized entropy functions
$S_{\boldsymbol{q}}(t)$ are
\emph{sub-pluriharmonic} (in particular \emph{sub-harmonic}) for all $\boldsymbol{q}\in \otimes^s V_\Z$, i.e.\! the matrix $\partial_{t^i}\partial_{\bar t_j} S_{\boldsymbol{q}}$ is semi-definite positive.
\end{pro}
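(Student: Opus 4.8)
The plan is to reinterpret $S_{\boldsymbol q}$ as the squared Hodge ($tt^*$) norm of the flat section determined by the integral charge $\boldsymbol q$, and then differentiate it twice using the brane (Lax) equations, exhibiting the complex Hessian as a manifestly non-negative Gram matrix. Concretely, I set $s\equiv \boldsymbol\Psi(\zeta)\,\boldsymbol q$, the multi-brane amplitude of charge $\boldsymbol q$; by \eqref{tt*lax} (decomposed into $(1,0)$ and $(0,1)$ parts) it solves
\be
D_i s = -\zeta^{-1} C_i\, s, \qquad \bar D_{\bar\jmath}\, s = -\zeta\, \bar C_{\bar\jmath}\, s,
\ee
while $S_{\boldsymbol q}=\langle s,s\rangle$, where $\langle\,,\rangle$ is the positive-definite $tt^*$ metric $h$, with respect to which $D+\bar D$ is the metric (Chern) connection and $\bar C_{\bar\jmath}=C_j^{\dagger}$. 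Since $S_{\boldsymbol q}$ is a $K$-gauge invariant and $\zeta$-independent physical quantity, I am free to evaluate on the unit circle $|\zeta|=1$. Sub-pluriharmonicity is then precisely the assertion that the Hermitian matrix $\partial_i\partial_{\bar\jmath}S_{\boldsymbol q}$ is positive semi-definite.

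First I would compute the $(0,1)$ derivative. Using metric compatibility in the form $\partial_{\bar\jmath}\langle u,v\rangle=\langle\bar D_{\bar\jmath}u,v\rangle+\langle u,D_j v\rangle$, inserting the Lax equations and the adjointness $\bar C_{\bar\jmath}=C_j^{\dagger}$, the two contributions collapse to a single term proportional to $\langle s, C_j s\rangle$. Then I would apply the remaining $(1,0)$ derivative $\partial_i$. The essential input here is pluri-harmonicity: because $\mu$ is tamed and $M$ is K\"ahler, the Sampson–Bochner Lemma gives $\bar D C=0$, i.e.\ $\bar D_{\bar\imath}C_j=0$. This kills the Chern-connection term in $\bar D_{\bar\imath}(C_j s)=(\bar D_{\bar\imath}C_j)s+C_j\bar D_{\bar\imath}s=-\zeta\,C_j\bar C_{\bar\imath}s$, and after using the Lax equations once more and cancelling the $\zeta$-phases one is left with
\be
\partial_i\partial_{\bar\jmath}S_{\boldsymbol q} = 2\Big(\langle C_i s, C_j s\rangle + \langle \bar C_{\bar\jmath}s, \bar C_{\bar\imath}s\rangle\Big).
\ee
Contracting with an arbitrary $\xi^i\bar\xi^{\,j}$ gives $2\big(\|\sum_i\xi^i C_i s\|^2+\|\sum_j\bar\xi^{\,j}\bar C_{\bar\jmath}s\|^2\big)\ge 0$, so the Hessian is a sum of two Gram matrices and is positive semi-definite, which is exactly sub-pluriharmonicity (and a fortiori sub-harmonicity). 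The argument uses nothing beyond the Lax equations, the reality/adjoint structure of the $tt^*$ metric, and $\bar D C=0$, so it applies verbatim to every generalized $tt^*$ geometry, graded or not, strict or not.

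The main obstacle is bookkeeping of the reality conventions. The Cartan coupling is written as the symmetric bilinear $\boldsymbol\Psi(\zeta)^t\boldsymbol\Psi(\zeta)$, whereas the positivity visible in \eqref{poqw12a} lives in the Hermitian form $w^\dagger g^\dagger g\,v$; I would have to verify carefully that the reality condition $h\bar h=1$ together with $\boldsymbol q\in\otimes^s V_\Z$ real identifies the two, so that $S_{\boldsymbol q}$ really equals $\langle s,s\rangle$ and the adjoint relation $\bar C_{\bar\jmath}=C_j^{\dagger}$ is legitimate. A secondary point is to confirm that the explicit $\zeta$-factors appearing at the intermediate stages cancel in the final Hessian, consistently with the $\zeta$-independence of $S_{\boldsymbol q}$, which is what pins down the normalisation $|\zeta|=1$. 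Once these conventions are fixed, the remaining computation is routine, and the geometric content is entirely carried by the tamedness hypothesis through $\bar D C=0$.
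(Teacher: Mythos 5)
Your proposal is correct and is essentially the paper's own proof: the paper sets $L=\boldsymbol{\Psi}(-1)\boldsymbol{q}$ (your $s$ at $\zeta=-1$, where the orthogonal-gauge reality $L^*=L$ settles the convention issue you flag), uses the brane equations $DL=CL$, $\bar DL=\bar CL$ together with the $tt^*$ identity $D\bar C=0$ (equivalent to your $\bar DC=0$), and arrives at $\partial\bar\partial S_{\boldsymbol{q}}=2\,L^\dagger(C_iC_j^\dagger+C_j^\dagger C_i)L\,dt^i\wedge d\bar t^j$, the same sum of two Gram terms as your final display. The only difference is cosmetic: you keep $\zeta$ generic on the unit circle and phrase everything through the Hermitian $tt^*$ metric, while the paper fixes $\zeta=-1$ in the orthogonal trivialization so that all reality and adjointness statements become manifest.
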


\begin{proof} Set $L=\boldsymbol{\Psi}(-1)\boldsymbol{q}$, so that
$DL=CL$, $\bar DL=\bar CL$, $L^*= L$, and
$S_{\boldsymbol{q}}=L^t L\equiv L^\dagger L$. Then
\be
\bar\partial S_{\boldsymbol{q}} = 2\, L^t \bar D L= 2\, L^t \bar C L  
\ee
and
\be
\begin{split}
\partial\bar\partial S_{\boldsymbol{q}}&= 2 (DL)^t \bar DL+
2 L^t D(\bar C L)= 2 (DL)^t \bar DL- 2\, L^t \bar C C L =\\
&=2\, L^\dagger(C_i C^\dagger_j +C^\dagger_j C_i)L\, dt^i\wedge d\bar t^j
\end{split}\qedhere
\ee
\end{proof}

\medskip

The actual value of the classical entropy for an extremal black hole of charge $q$
is given by the value $\pi\, S_q$ at a critical point $\partial S_q=0$ (if it exists!!).
As an example, we consider the special case of a VHS of CY type with $\hat c=3$, which describes the
gauge coupling $\widetilde{\mu}$ of an effective 4d theory with $\cn=2$
\cite{cec,stro}. A critical point $t_0\in\widetilde{\mathscr{M}}$ of $S_q(t)$ 
corresponds to a $L\equiv\Psi(-1;t_0)q$ which is an eigenvector of $Q^2$
with eigenvalue $9/4$. This observation is called the \textit{attractor mechanism} 
\cite{at1,at2,at3,at4,at5,at6}. \textbf{Proposition \ref{ppro}} implies that the critical point is actually a minimum for the entropy function. 
The attractor mechanism illustrates the point that being sub-harmonic is a very natural property for a physical entropy function, being strictly related to the convexity of
thermodynamical potentials.

\subparagraph{The Mumford-Tate group.}
If the multi-charge $\boldsymbol{q}$ is $\Gamma$-invariant, the generalized entropy function is well-defined on $\mathscr{M}$
not just on $\widetilde{\mathscr{M}}$. Suppose that we have an arithmetic graded $tt^*$ geometry (say, a VHS) whose base
$\mathscr{M}$ is quasi-projective, i.e.\! $\mathscr{M}=\overline{\mathscr{M}}\setminus D_\infty$
for $\overline{\mathscr{M}}$ a smooth (compact) projective variety and
$D_\infty$ a simple normal crossing divisor; then $\mathscr{M}$ is Liouvillic for the sub-pluriharmonic functions. 
Under the above assumptions, one checks (by a careful asymptotic analysis \cite{schm})
that the generalized entropy of a $\Gamma$-invariant charge $\boldsymbol{q}$ is bounded  along the divisors at infinity in the projective closure $\overline{\mathscr{M}}$ of $\mathscr{M}$. Then, being sub-pluriharmonic, $S_{\boldsymbol{q}}(t)$ must be a constant in $\mathscr{M}$. We conclude that  for
a $\Gamma$-invariant multi-charge $\boldsymbol{q}\in\otimes^s V_\Z$
the multi-brane amplitude
$\boldsymbol{\Psi}(\zeta)\cdot\boldsymbol{q}$ is $H$-gauge equivalent
to a constant; this can be expressed as

\begin{pro} Let $\mathsf{Hg}$ be the ring of all $\Gamma$-invariant
multi-charges\ in $\oplus_s (\otimes^s V_\Z)$ and let $M(\R)\subset G(\R)$ be the subgroup
which fixes all elements $\boldsymbol{q}\in \mathsf{Hg}$
and $H_M\equiv H\cap M$ a maximal compact subgroup. Then
the map $p$ in eqn.\eqref{Wp?}
factorizes as
\be\label{sssthm}
\xymatrix{\mathscr{M}\ar@/_2.2pc/[rrrr]^\mu \ar@/^2pc/[rrr]^p\ar[rr]^(0.4){m} && \Gamma\backslash M(\R)/H_M\ar@{->>}[r]& \Lambda\backslash G(\R)/H\ar@{->>}[r]& \Lambda\backslash G(\R)/K }
\ee
\end{pro}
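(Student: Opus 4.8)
The plan is to promote the constancy result established just above — that for every $\boldsymbol q\in\mathsf{Hg}$ the generalized entropy $S_{\boldsymbol q}$ is constant on $\mathscr M$, equivalently that the multi-brane amplitude $\boldsymbol\Psi(\zeta)\cdot\boldsymbol q$ is $H$-gauge equivalent to a constant — into the geometric statement that $\tilde p$ lands in the Mumford-Tate sub-domain. First I would recall the chain that produces this constancy, since it is the backbone: $S_{\boldsymbol q}$ is sub-pluriharmonic by \textbf{Proposition \ref{ppro}}, it descends to $\mathscr M$ because $\boldsymbol q$ is $\Gamma$-invariant, and it is bounded along the boundary divisors of a smooth projective closure $\overline{\mathscr M}$ by Schmid's asymptotic analysis of the degenerating Hodge norm \cite{schm}; the Liouville property of the quasi-projective OV manifold $\mathscr M$ then forces $S_{\boldsymbol q}\equiv\text{const}$. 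Feeding this back into the identity $\partial\bar\partial S_{\boldsymbol q}=2\,L^\dagger(C_iC_j^\dagger+C_j^\dagger C_i)L\,dt^i\wedge d\bar t^j$ from the proof of \textbf{Proposition \ref{ppro}} (with $L=\boldsymbol\Psi(-1)\boldsymbol q$) makes each non-negative term on the right vanish, so $C_iL=C_i^\dagger L=0$ and hence $L$ is parallel for $D+\bar D$: the tensor $\boldsymbol q$ stays of Hodge type $(0,0)$ for the Hodge structure $\tilde p(x)$ at \emph{every} point $x\in\widetilde{\mathscr M}$.

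With this in hand the factorization becomes a statement about the target domain. Write $\mathfrak m=\mathrm{Lie}\,M(\R)$ for the common infinitesimal stabilizer of $\mathsf{Hg}$. A point of $G(\R)/H$ is a Hodge structure whose grading is a conjugate $\mathrm{Ad}(\boldsymbol g)Q$ of the reference $Q$, and it keeps every $\boldsymbol q\in\mathsf{Hg}$ of type $(0,0)$ if and only if $\mathrm{Ad}(\boldsymbol g)Q$ annihilates all of $\mathsf{Hg}$, i.e.\ $\mathrm{Ad}(\boldsymbol g)Q\in\mathfrak m_\C$. By the previous step this holds for all $x\in\widetilde{\mathscr M}$, so $\tilde p(\widetilde{\mathscr M})$ lies in the locus of such Hodge structures, which is exactly the $M(\R)$-orbit of the base point, the sub-domain $M(\R)/H_M\hookrightarrow G(\R)/H$. (Equivalently and infinitesimally, $C_i\boldsymbol\Psi(-1)\boldsymbol q=0$ says that the Higgs field $C$ — which by the IPR \eqref{pqw12bbb} \emph{is} the derivative of $\tilde p$ — is everywhere tangent to this orbit.) At this point I would invoke Deligne's reductivity of the Mumford-Tate group $M$ of a polarizable Hodge structure, which ensures that $H_M=H\cap M$ is maximal compact in $M(\R)$ — as the statement assumes — and that $M(\R)/H_M$ embeds as a totally geodesic complex sub-domain, so that the Hodge-class locus, the $M(\R)$-orbit, and the integral leaf of the distribution cut out by $C$ all coincide.

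It remains to descend to $\mathscr M$. Because $\mathsf{Hg}$ is by definition the ring of $\Gamma$-invariants, $\Gamma$ fixes every $\boldsymbol q\in\mathsf{Hg}$, so $\Gamma\subset M(\R)\cap G(\Z)$ and the quotient $\Gamma\backslash M(\R)/H_M$ is well defined; the $\Gamma$-equivariant inclusion $M(\R)/H_M\hookrightarrow G(\R)/H$ descends to the canonical map $\Gamma\backslash M(\R)/H_M\to\Lambda\backslash G(\R)/H$, and $\tilde p$ descends to the map $m\colon\mathscr M\to\Gamma\backslash M(\R)/H_M$, with $p$ and $\mu$ recovered by post-composing with the two projections in \eqref{sssthm}. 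The hard part is not this final group-theoretic bookkeeping but the analytic input upstream: the boundedness of $S_{\boldsymbol q}$ at the infinite-distance boundary rests on the full force of the nilpotent- and $SL(2)$-orbit theorems controlling the asymptotics of the Hodge norm \cite{schm}, and this is the one step that genuinely cannot be made cheap; a secondary technical point is verifying that ``$C$ tangent to the orbit everywhere'' really integrates to ``$\tilde p$ valued in the orbit,'' which is clean only because $M(\R)/H_M$ sits totally geodesically inside $G(\R)/H$.
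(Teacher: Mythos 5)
Your proof follows essentially the same route as the paper: the paper's argument \emph{is} the constancy chain you recall (sub-pluriharmonicity of $S_{\boldsymbol q}$ from \textbf{Proposition \ref{ppro}}, boundedness at infinity via Schmid's asymptotic analysis, and the Liouville property of the quasi-projective base), after which it simply states that the resulting constancy of the multi-brane amplitudes ``can be expressed as'' the factorization \eqref{sssthm}. Your extra steps --- extracting $C_i L = C_i^\dagger L = 0$ from the $\partial\bar\partial S_{\boldsymbol q}$ identity, concluding that the Hodge tensors are parallel, and translating this into the statement that $\tilde p$ lands in the Mumford--Tate orbit $M(\R)/H_M$ before descending by $\Gamma\subset M(\R)$ --- are a faithful filling-in of details the paper leaves implicit, not a different method.
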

Eqn.\eqref{sssthm} is almost the structure theorem of VHS \cite{reva,revb,MT4,periods}, except that the theorem yields more details on the group $M(\R)$; we shall discuss these results in  the more general domestic context below. In the VHS literature the elements of the $\mathbb{Q}$-algebra $\mathsf{Hg}\otimes\mathbb{Q}$ are called \emph{Hodge tensors} and the $\mathbb{Q}$-algebraic group $M(\mathbb{Q})$ is called the \emph{Mumford-Tate group} \cite{reva,revb,MT4,periods}.

\subsection{Domestic geometry}

(Arithmetic) domestic geometry is defined by  the very same \textbf{Definition \ref{tt*defn}} of $tt^*$ geometry,
except that now we forget that the source Riemannian space $M$ was assumed to be K\"ahler.
An arithmetic domestic geometry on the Riemannian manifold $M$
 is specified by a tamed map
$M\to \Lambda\backslash G(\R)/K$ of finite energy. The geometric structures implied
by domestic geometry  depend crucially on the algebra $\mathcal{P}^\bullet$ of parallel forms 
on $M$ (which we assume to be irreducible with no loss). 
In particular, a domestic geometry on $M$ is a generalized $tt^*$ geometry 
if and only if $\cp^\bullet$ contains a subalgebra $\R[\omega]/\omega^{m+1}$ with $\omega$ a parallel 2-form.
\medskip

Domestic geometry is more general than $tt^*$ geometry as the following example shows.

\subsubsection{Example: 2d SCFT}
The (universal covering of the) conformal manifold $\widetilde{\cm}$ of a 2d (2,2) SCFT splits in a product of spaces associated to the two non-conjugate chiral rings
\be
\widetilde{\cm}=\widetilde{\cm}_\text{chiral}\times \widetilde{\cm}_\text{twisted chiral},
\ee
and the
 Berry geometry on each irreducible factor space is a domestic geometry
of the $\mathfrak{u}(1)$ graded $tt^*$ kind (\S.\,\ref{s:graded}). 

The Berry geometry on the moduli of a 2d (4,4)
SCFT is still a domestic geometry, but \emph{not} a $tt^*$ geometry.\footnote{\ Of course, a (4,4) SCFT is in particular
a (2,2) SCFT; however the space $M$ of marginal deformations which preserve (4,4) SUSY is a non-complex 
submanifold of the complex manifold $\cm$ of marginal deformations which preserve only (2,2) SUSY. } 
Indeed the domestic geometry is $\mathfrak{sp}(1)\oplus\mathfrak{sp}(1)$ graded
rather than $\mathfrak{u}(1)$-graded.
The moduli space $M$ is quaternionic-K\"ahler, and hence the underlying tamed
map $\mu$ is totally geodesic (cfr.\! \S.\,\ref{s:special}): in (real) local coordinates $x^i$
\be\label{lllazq1c}
\nabla_i C_j=0,\quad \text{where }\  C_j\, dx^j= f^*(g^{-1}dg)^\text{odd}.
\ee
Eqn.\eqref{lllazq1c} implies that $\widetilde{M}$ is a non-compact symmetric space with holonomy algebra of the form
$\mathfrak{sp}(1)\oplus\mathfrak{sp}(1)\oplus \mathfrak{j}\subset \mathfrak{so}(4k)$, and hence
\be
\widetilde{M}=SO(4,k)/[SO(4)\times SO(k)].
\ee 
For a (much longer) proof not using domestic geometry, see \cite{deBoer:2008ss}.
%\smallskip

\subparagraph{Grading.} As the 2d example illustrates, in physical applications the domestic geometry is graded by the effective R-symmetry Lie algebra $\sigma(\mathfrak{r})$,
cfr.\! eqn.\eqref{whatR}, that is, we have a decomposition
\be\label{pq123z1}
\mathfrak{g}\otimes \C= \bigoplus_{\alpha\in \text{Irr}} \mathfrak{g}_\alpha,\qquad \mathfrak{g}_\text{adj}=\sigma(\mathfrak{r}),\qquad
\mathfrak{g}_\text{triv}=\mathfrak{j}
\ee
where the sum is over the irreducible representations of $\sigma(\mathfrak{r})$.
Eqn.\eqref{pq123z1} generalizes the adjoint Hodge decomposition \eqref{Pmasqw1} of VHS theory to a possibly non-Abelian
$\sigma(\mathfrak{r})$.

\subsubsection{Domestic brane amplitudes} We may repeat much of the $tt^*$ story in this more general setting.
However now, in general, we cannot distinguish differential forms by type,
so we decompose $f^*(g^{-1}dg)$ in just two pieces
\be
A= (f^* g^{-1}d g)^\text{even},\qquad \Phi=(f^* g^{-1}dg)^\text{odd},
\ee 
and we do not have a twistorial $\mathbb{P}^1$-family of flat connections but
only two of them
\be
\nabla^{(\pm)} = d+A \pm \Phi.
\ee  
Consequently, we have only two ``HIVb  brane amplitudes'' $\Psi_\pm$,
which satisfy the equations 
\be
\nabla^{(\pm)}\Psi_\pm=0.
\ee
As a consequence of Dirac quantization of charge, in our applications the group $G(\R)$ preserves some bilinear pairing $V_\Z\otimes V_\Z\to \Z$
given by an integral matrix $\Omega\in G(\Z)$ (symmetric or antisymmetric) with $\Omega\Omega^t=1$.
In this case
\be\label{oooorrty}
\Psi_-=\Psi_+^\theta\,\Omega,\qquad \Psi_\pm^*=\Psi_\pm.
\ee
Again we may consider higher domestic brane amplitudes $\boldsymbol{\Psi}_\pm$ associated to higher representations of the Lie group $G(\R)$ which may be written as multi-linear products of basic brane amplitudes.

The physical quantities of $tt^*$ are  still well-defined (we assume \eqref{oooorrty}):
\begin{align}
&\bullet\ \text{the Riemannian metric:} && ds^2= K_{ij}\,dx^i dx^j\equiv\mathrm{tr}(\Phi_i\Phi_j)dx^i dx^j\\
&\bullet\ \text{Hodge bilinears:} && \boldsymbol{\mathcal{S}}^{\boldsymbol{AB}}=\big(\boldsymbol{\Psi}_\pm^t\boldsymbol{\Psi}_\pm\big)^{\boldsymbol{AB}}\\
&\bullet\ \text{generalized entropy functions:} && S_{\boldsymbol{q}}=\boldsymbol{q_A}\,\boldsymbol{\mathcal{S}^{AB}}\boldsymbol{q_B}
\end{align}
and the symmetric tensor 
\be
T_{ij}= K_{ij}-\frac{1}{2}G_{ij}\, G^{kl} K_{kl}
\ee
is still conserved ($G_{ij}$ is the metric on $M$).

\subsection{Entropy functions in domestic geometry}
When the tamed map $\widetilde{\mu}\colon \widetilde{M}\to Sp(2h,\R)/U(h)$ which defines the domestic geometry
is the gauge coupling of some effective theory of gravity,
Sen's classical entropy function for an extremal black hole of charge $q$ (if it exists!) 
is given by $\pi S_q$, where $S_q$ is the domestic entropy function for the fundamental representation. 
The same argument as in \S.\ref{phquant}
shows that the generalized entropy functions are \emph{sub-tamed},
in particular,\begin{itemize}
\item \textit{sub-harmonic} for $M$ of generic holonomy: $\Delta S_{\boldsymbol{q}}\geq0$
\item \textit{sub-pluriharmonic} for $M$ K\"ahler:
$\partial_i\bar\partial_{\bar j} S_{\boldsymbol{q}}\geq0$
\item \textit{convex} for $M$ quaternionic-K\"ahler and most symmetric spaces:
$\nabla_i\partial_j S_{\boldsymbol{q}}\geq0$. 
\end{itemize} 

\subparagraph{Structure of $\mu$.}
If $M$ is Liouvillic for the sub-tamed functions
and the generalized entropy function $S_{\boldsymbol{q}}$ of a $\Gamma$-invariant
multi-charge $\boldsymbol{q}$ is bounded (a property natural for $\mu$ of finite energy), one concludes that the structure theorem \eqref{sssthm}
holds in the domestic case as well:\vskip-14pt
\be\label{sssthm2}
\xymatrix{M\ar@/^2pc/[rrr]^\mu \ar[rr]_(0.4){m} && \Gamma\backslash M(\R)/K_M\ar@{->>}[r]& \Lambda\backslash G(\R)/K},\qquad K_M=M(\R)\cap K.
\ee  
We shall show in \S.\,\ref{s:stru2} below that this conclusion is valid in the situations of physical interest using a different and more direct approach.
Here we give a rough sketch of an argument along the lines of \S.\,\ref{s:MT}. We have already seen in \S.\,\ref{s:liouville} that all OV manifolds $M$ are Liouvillic for the
sub-harmonic functions hence \emph{a fortiori} for the sub-tamed ones. It remains to show that $S_{\boldsymbol{q}}$ is bounded for $\boldsymbol{q}$ $\Gamma$-invariant.
We have to check the behaviour of $S_{\boldsymbol{q}}$ at infinity in $\widetilde{M}$.
Under our assumptions in \S.\,\ref{s:cusps}, a point at infinity $x_\infty$ is fixed by some
infinite subgroup $P\subset\Gamma\subset Sp(2h,\Z)$ and hence
  is mapped by a continuous extension of  
 $\widetilde{\mu}$ in a point at infinity $y_\infty\in\overline{Sp(2h,\R)/U(h)}$ in the compactification  of $Sp(2h,\R)/U(h)$
 \cite{bb2} which is also fixed by $P$, i.e.\! such that $Py_\infty=y_\infty$. Since the entropy function 
 \be
 S_{\boldsymbol{q}}=\widetilde{\mu}^*\|\boldsymbol{q}\|^2_h
 \ee is the pull-back
 of the Hodge norm-square of $\boldsymbol{q}$
computed on $Sp(2h,\R)/U(h)$, it suffices to check that, whenever $\boldsymbol{q}$ is fixed by $P$, $\|\boldsymbol{q}\|_h^2$ is bounded in a neighborhood of the point at infinity $y_\infty$ fixed by $P$. The last statement is
a purely group theoretical fact. 
\medskip

The property of being sub-tamed is rather natural for a \emph{physical} entropy function, as
illustrated by the attractor mechanism in the $\cn=2$ case.

\section{Domestic geometry and supergravity}

\subsection{Supergravity in 4d}

Consider a 4d supergravity with any number $\cn$ of light gravitini, matter content,
and couplings. Its scalars' universal covering manifold\footnote{\ As always, we work modulo finite groups and finite covers.}
\be\label{pqw123UU}
\widetilde{\mathscr{M}}=\widetilde{M}_{(1)}\times \widetilde{M}_{(2)}\times\cdots\times \widetilde{M}_{(s)}
\ee
is a product of non-compact spaces in one-to-one correspondence with
the types of supermultiplets present in the model. For instance, in 4d $\cn=2$ SUGRA the scalars'
manifold is the space of hypermultiplet scalars
times the space of vector-multiplet scalars. The map $\widetilde{\mu}$,
which describes the coupling of the scalars to the vectors, splits
into a set of maps $\{\widetilde{\mu}_{(i)}\}$ which describe the coupling of vectors to
the scalars in supermultiplets of the $i$-th type.

The geometry of each factor space $\widetilde{M}_{(i)}$
depends on the corresponding supermultiplet.
\emph{Prima facie}
these geometries look quite different one from the other:
in some cases $\widetilde{M}_{(i)}$ is K\"ahler (possibly with additional structures), in other situations
$\widetilde{M}_{(i)}$ does not even admit a natural complex structure. A general feature is that each
 space $\widetilde{M}_{(i)}$ carries a non-trivial algebra $\cp_{(i)}^\bullet$
of parallel forms determined by the representation of R-symmetry on
the scalars as described in \S.\,\ref{s:xxx}. 

Domestic geometry unifies all these seemingly different geometries in a single one.
The traditional supergravity theory, as well as the geometric swampland
conjectures, may be summarized in the following statement:

\begin{fact}\label{fffac} All 4d SUGRA models are defined by a domestic geometry:
\begin{itemize}
\item[\rm 1.] the $\Gamma$-twisted (covering) gauge coupling map $\tilde\mu\colon \widetilde{\mathscr{M}}\to Sp(2h,\R)/U(h)$
is always \emph{tamed.} 
\item[\rm 2.] More precisely, for each factor manifold $\widetilde{M}_{(i)}$ in eqn.\eqref{pqw123UU}
there is a $\Gamma_{\!(i)}$-twisted, $\cp_{(i)}^\bullet$-tamed map
\be\label{kasqwertzz}
\widetilde{\mu}_{(i)}\colon\widetilde{M}_{(i)}\to G_{(i)}/K_{(i)},\qquad \Gamma_{\!(i)}\subset G_{(i)}
\ee
with target a symmetric space $G_{(i)}/K_{(i)}$ and $\tilde\mu$ factorizes as in the commutative diagram\vskip-26pt
$$
\xymatrix{\widetilde{\mathscr{M}}\ar@/^2.5pc/[rrrr]^{\tilde\mu} \ar@{=}[r]& \widetilde{M}_{(1)}\times\cdots\times \widetilde{M}_{(s)}\ar[rr]^(0.42){\;\tilde \mu_{(1)}\times
\cdots\times \tilde \mu_{(s)}\;} && G_{(1)}/K_{(1)}\times \cdots\times G_{(s)}/K_{(s)}\ \ar@{^{(}->}[r]^(0.62){\iota} & Sp(2h,\R)/U(h)}
$$
where $\iota$ is the totally geodesic embedding induced by a subgroup embedding 
\be
G_{(1)}\times \cdots\times G_{(s)}\overset{\iota}{\hookrightarrow} Sp(2h,\R)
\ee 
and
$\Gamma\sim \prod_i\Gamma_{\!(i)}$ modulo finite groups;
\item[\rm 3.] most couplings in the Lagrangian $\mathscr{L}$ of an (ungauged) 4d SUGRA are given by
universal expressions in terms of brane amplitudes of the domestic geometries $\widetilde{\mu}_{(i)}$. For $\cn\geq3$ and the vector sector
of $\cn=2$ ALL couplings are so expressed;
\item[\rm 4.] the domestic geometry rigidity theorems/structure theorems reproduce and extend the usual SUSY non-renormalization theorems;
\item[\rm 5.] if the domestic geometry defined by a map $\widetilde{\mu}_{(i)}$ in \eqref{kasqwertzz}
is \emph{non-arithmetic} the SUGRA falls in the swampland.
\end{itemize}
The statement holds, with the appropriate modifications, in other spacetime dimensions. 
\end{fact}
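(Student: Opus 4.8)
The plan is to establish the five assertions by reducing everything to the irreducible factors of the de~Rham decomposition \eqref{pqw123UU} and arguing factor by factor according to holonomy type. Assertion~2 is the organizing principle rather than an independent claim: once each $\widetilde{M}_{(i)}$ is shown to map tamely into its own symmetric space $G_{(i)}/K_{(i)}$, the block embedding $G_{(1)}\times\cdots\times G_{(s)}\hookrightarrow Sp(2h,\R)$ and the factorization of $\tilde\mu$ follow because the symplectic representation carried by the field strengths $F^a_+$ decomposes along supermultiplet types, so that the vectors sitting in multiplets of the $i$-th kind couple (through $\tau$) only to the scalars of $\widetilde{M}_{(i)}$. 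The commutativity of the diagram is then just the statement that $\iota$ is the totally geodesic embedding induced by this subgroup inclusion, and $\Gamma\sim\prod_i\Gamma_{(i)}$ up to finite groups because the monodromy respects the same block structure.

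The heart of the matter is assertion~1, the tamedness of each $\tilde\mu_{(i)}$. First I would read off the holonomy algebra $\mathfrak{hol}(\widetilde{M}_{(i)})$ from the R-symmetry data through \eqref{whatR}, which fixes the parallel-form algebra $\cp^\bullet_{(i)}$ as tabulated in \S\ref{s:xxx}. The decisive physical input is that the R-symmetry is gauged with the composite connection obtained by projecting the Levi-Civita connection of $G(\phi)_{ij}$ onto $\sigma(\mathfrak{r})$; supersymmetry ties the covariant derivative of the gauge coupling to the $\theta$-odd part of the pulled-back Maurer-Cartan form, and the SUSY Ward identities constrain the second covariant derivative $D^j\partial_i\mu$ to lie in the annihilator of $\cp^\bullet_{(i)}$, which is precisely the component form \eqref{whatincompon} of tamedness. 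Concretely I would verify this against the dictionary of \S\ref{s:special}: for the vector sector of $\cn=2$ the coupling is special K\"ahler $\equiv$ VHS of Calabi-Yau type, hence pluriharmonic; for $\cn=1$ the holomorphy of the gauge kinetic function makes $\tilde\mu_{(i)}$ holomorphic, hence again pluriharmonic; for the quaternionic-K\"ahler and symmetric factors of $\cn\geq3$ the rigidity of the coset forces the map to be totally geodesic. In every holonomy class that can occur for a SUGRA scalar manifold the SUSY-constrained coupling realizes exactly the special instance of tamedness listed in \S\ref{s:special}.

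The remaining assertions follow from machinery already in place. Assertion~3 is immediate from the domestic brane-amplitude formalism of \S\ref{s:tt*}: the Riemannian metric, the Cartan gauge coupling $\cs^{AB}$, and the Hodge bilinears are universal bilinears in the amplitudes $\boldsymbol{\Psi}_\pm$, and for $\cn\geq3$ (and the vector sector of $\cn=2$) these exhaust the two-derivative data. Assertion~4 follows by combining the Liouville property of OV manifolds (the Proposition of \S\ref{s:liouville}) with the structure theorem \eqref{sssthm2}: the sub-tamed generalized entropy function of a $\Gamma$-invariant multicharge is bounded, hence constant, which collapses the period map onto the orbit of the (Mumford-Tate) Zariski closure of the monodromy and reproduces the SUSY non-renormalization statements. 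Assertion~5 is the swampland input itself: the refined $\pi_1$-conjecture \textbf{OV2} forces $\cg$, and hence each $\Gamma_{(i)}$, to be arithmetic and generated by unipotents, so a non-arithmetic $\tilde\mu_{(i)}$ cannot admit a quantum-consistent completion.

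The main obstacle is to make assertion~1 uniform: to read the tamed PDE \eqref{whatincompon} directly off the SUSY Ward identities and the composite R-connection, valid across all multiplet types at once, rather than assembling it from the separate special-geometry theorems. A related technical point is that tamedness must persist through the smoothing surgery of \S\ref{surgery!} and hold up to the singular loci; since harmonicity upgrades to the full pluriharmonic ($\equiv$ tamed) condition only modulo the boundary term \eqref{bochboch}, one must control the asymptotics at the cusps, which is exactly what Condition~$\boldsymbol{(\ast)}$ and the OV structure are designed to supply. I expect the surgery to preserve tamedness because it was arranged to preserve both $\cg$ and the holonomy algebra, hence $\cp^\bullet$, but verifying the vanishing of \eqref{bochboch} at each end is where the real work lies.
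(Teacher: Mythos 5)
Your case-by-case verification of item 1 (holomorphy $\Rightarrow$ pluriharmonic for $\cn=1$, special K\"ahler $\equiv$ weight-3 VHS for the $\cn=2$ vector sector, total geodesy for the $\cn\geq3$ cosets) and your readings of items 3 and 5 are in line with the paper, which grounds items 1 and 3 in classical Lagrangian supergravity and item 5 in Dirac quantization of charge. The genuine gap is in item 2. You claim the factorization follows, once tamedness is known, from the block decomposition of the symplectic representation along supermultiplet types; but that argument only yields the coarse product $\tilde\mu=\tilde\mu_{(1)}\times\cdots\times\tilde\mu_{(s)}$, which is a purely classical Lagrangian fact on the same footing as items 1 and 3. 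The actual content of item 2 --- and the reason the paper stresses that it requires the Ooguri--Vafa conjectures as an extra input, unlike items 1 and 3 --- is that each $\tilde\mu_{(i)}$ factors through a symmetric space $G_{(i)}/K_{(i)}$ attached to the (Zariski closure of the) monodromy group $\Gamma_{\!(i)}$, totally geodesically embedded in the Siegel space: it is a structure theorem. For $\cn\geq3$ this is textbook material, since $\widetilde{M}_{(i)}$ is itself symmetric and $\tilde\mu_{(i)}$ a totally geodesic embedding; but for $\cn=1$ and for the $\cn=2$ vector sector, tamedness (pluriharmonicity) alone does \emph{not} produce such a factorization. The paper obtains it, for $\cn=1$, by viewing $\tau_{ij}$ as the formal period map of a weight-1 VHS and running Schmid's fixed-part/structure theorem, whose proof is then traced to the Liouville property for sub-pluriharmonic functions --- valid precisely because the scalars' space is a (K\"ahlerian) OV manifold; for $\cn=2$ it cites the special-geometry structure theorem. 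You actually possess this machinery (the Liouville property together with the factorization \eqref{sssthm2}, via boundedness of the $\Gamma$-invariant entropy functions), but you deploy it under assertion 4 instead; as written, your assertion 2 is claimed to follow from classical facts alone, which fails in the non-symmetric ($\cn\leq2$) cases and inverts the paper's logical order, where item 4 is instead deduced from item 3 via rigidity of the brane amplitudes (undeformability of couplings), with holomorphy appearing only as the ``cheap'' implementation of that rigidity.

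A secondary misplacement: your closing assessment that ``the real work'' lies in controlling the Bochner boundary term \eqref{bochboch} at the cusps does not apply to this Fact. In the SUGRA setting tamedness is a \emph{pointwise} consequence of supersymmetry --- holomorphy, the special-geometry identities, or total geodesy of the coset couplings --- and no integration-by-parts argument is needed; Condition $\boldsymbol{(\ast)}$ and the surface-term estimate of the appendix are required only for the paper's \emph{non}-supersymmetric speculations, where tamedness must be extracted from harmonicity plus finite energy rather than read off from the SUSY constraints.
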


As we have seen in \S.\,\ref{s:special}, there are several special cases of domestic geometry depending
on the particular algebra $\cp^\bullet$; when the domestic geometry is a VHS,
we may talk of its Hodge numbers and its Mumford-Tate (MT) group. In table \ref{tttq1zaqX} we list
the domestic geometries which arise in 4d SUGRA with the special properties of each one.

\begin{table}
\begin{tiny}
$$
\begin{tabular}{l|p{0.6cm}|c|p{4cm}|p{3.5cm}|c}\hline\hline
$\mathcal{N}$ \& supermultiplet & name

space &$\C?$ & $\bullet$ kind of domestic geometry

$\bullet$ algebra $\mathcal{P}^\bullet$ if special & $\bullet$ for VHS $h^{p,q}\neq0$

$\bullet$ MT group if special & notes\\\hline\hline 
$\mathcal{N}=1$ chiral & & $\checkmark$ & graded $\hat c=1$
 \textbf{non}-strict $tt^*$ & $\{h^{1,0}\}$ &\\\hline
$\mathcal{N}=2$ vector & SKG & $\checkmark$ & graded $\hat c=3$
 strict $tt^*$ & $\{h^{3,0}=1,h^{2,1}\}$\\\hline
$\mathcal{N}=2$ hyper & QK & NO & \textbf{domestic} tamed by

 $\mathcal{P}^\bullet=\mathbb{C}[1,\Omega]$ ($\Omega$ canonical 4-form) &\\\hline
$\mathcal{N}=3$ vector &&$\checkmark$ & graded $\hat c=3$
 \text{non}-strict $tt^*$
tamed 

by $\mathcal{P}^\bullet=[\wedge^*(\mathbb{C}^{3}\otimes \mathbb{C}^k)]^{\mathfrak{u}(3)\oplus\mathfrak{s\mspace{-2mu}u}(k)}$ & $\{h^{3,0}=3,h^{2,1}=k\}$ 

$\Gamma^\mathbb{Q}(\mathbb{R})=SU(3,k)$
 & LU, SH\\\hline
$\mathcal{N}=4$ gravity & $\boldsymbol{H}_1$ &$\checkmark$ & graded $\hat c=1$
strict $tt^*$ & $\{h^{1,0}=1\}$ & LU, SH\\\hline
$\mathcal{N}=4$ vector && NO & \textbf{domestic} tamed by

 $\mathcal{P}^\bullet=[\wedge^* (\mathbb{R}^{6}\otimes \mathbb{R}^{k})^{\mathfrak{s\mspace{-2mu}o}(6)\oplus\mathfrak{s\mspace{-2mu}o}(k)}$ & & LU\\\hline
$\mathcal{N}=5$ gravity && $\checkmark$ &
graded $\hat c=1$
strict $tt^*$ tamed

by $\mathcal{P}^\bullet=[\wedge^* (\wedge^4\mathbb{C}^5)^\vee]^{\mathfrak{u}(5)}$
 & $\{h^{1,0}=10\}$
 
$\Gamma^{\mathbb{Q}}(\mathbb{R})=SU(5,1)$ 
& LU, SH \\\hline
$\mathcal{N}=6$ gravity & $\widetilde{\text{SKG}}$ & $\checkmark$ & graded $\hat c=3$
strict $tt^*$
tamed 
by $\mathcal{P}^\bullet=[\wedge^*(\wedge^4\mathbb{C}^6)]^{\mathfrak{u}(6)}$ & $\{h^{3,0}=15, h^{2,1}=1\}$

$\Gamma^{\mathbb{Q}}(\mathbb{R})= SO(6,\mathbb{H})$ & LU, SH\\\hline
$\mathcal{N}=8$ gravity & & NO & \textbf{domestic} tamed by

$\mathcal{P}^\bullet=[\wedge^*(\wedge^4_+\mathbb{C}^8)]^{\mathfrak{s\mspace{-2mu}u}(8)}$ & & LU\\\hline\hline
\end{tabular}
$$
 \end{tiny}\vskip-0.3cm
 \caption{\label{tttq1zaqX}\small
 4d SUGRAs as domestic geometries. In second column SKG stands for special K\"ahler geometry, QK for quaternionic-K\"ahler,
$\widetilde{\text{SKG}}$ for \emph{twisted} special K\"ahler (with $e^{i\pi Q}\to -e^{i\pi Q}$), and $\boldsymbol{H}_1$ for the upper half-plane.
In third column $\checkmark$ means that the manifold has a natural complex structure. Forth column specifies
the class of tamed geometries and its taming algebra $\cp^\bullet$. In fifth column we specify the data for domestic geometries which
are in fact VHS. In the last column LU means that the geometry is locally unique (so $\widetilde{M}_{(i)}$ is a symmetric space)
and SH that, under the assumption that the domestic geometry is \emph{arithmetic}
$M_{(i)}\equiv \Gamma_{(i)}\backslash \tilde M_{(i)}$ is a \textit{Shimura variety}.
}
 \end{table} 
 
 \begin{rem} For the benefit of the reader we recall on which grounds  
 each statement in \textbf{Fact \ref{fffac}} rests. Items 1. and 3.  are a rephrasing of well-known
 facts about 4d SUGRA (see e.g.\! the textbook \cite{book}); they are perfectly rigorous in the sense of classical Lagrangian field theory. Item 2 follows from 1. and 3. together with 
 the original Ooguri-Vafa swampland conjectures (which we take as ``facts'' for the purpose of the present paper). In other words: it follows (rigorously) from the above standard facts in supergravity together with the assumption that the scalars' manifold is of the OV class. More in detail: for $\cn\geq3$ SUGRA item 2. is shown in \S.4.9 of \cite{book} and for
 $\cn=2$ SUGRA in refs.\cite{swampIII,cec-insta}. In $\cn=1$ SUGRA the gauge coupling $\tau_{ij}$
 is a \emph{holomorphic} map between the scalars' K\"ahler manifold and
 the Siegel variety $Sp(2h,\Z)\backslash Sp(2h,\R)/U(h)$. Then, by definition, $\tau_{ij}$ is the
 \emph{formal} period map
 of 
a weight-1 VHS; since item 2 is just the structure theorem for period  maps in VHS,
we need only to check that the formal period map $\tau_{ij}$
satisfies the conditions of the structure theorem.
The VHS structure theorem is a consequence of the theorem of the fixed part in VHS: hence 
it is enough to verify that this last theorem holds for $\tau_{ij}$; going through
the details of Schmid's proof \cite{schm} of the fixed part theorem, we see that its 
statement holds if the scalars' K\"ahler space is Liouvillic for the sub-pluriharmonic functions.
This property holds for K\"ahlerian OV manifolds, and item 2 follows for $\cn=1$. Item 4. is a consequence of item 3:
non-renormalization of a Lagrangian coupling means that it cannot 
be deformed continuously (in a consistent way); traditionally the SUSY non-renormalization
properties are attributed to ``the power of holomorphy'' \cite{power}. However holomorphy is just the
cheaper way of implementing rigidity of the HIVb  which determines the coupling. It is obvious
that rigidity of the branes suffices to rule out corrections to the associated couplings, while rigidity applies to
a larger class of situations than holomorphy. Conversely, if the coupling is undeformable,
the associated brane is rigid. Item 5. summarizes the previous discussion of arithmetic
domestic geometry and its relation to Dirac quantization of charge.  
 \end{rem}
 
 \subsection{Swampland vs.\! brane amplitudes}
 The above \textbf{Fact} allows for a more physical interpretation of the
 Ooguri-Vafa geometric swampland conditions. For $\cn\geq2$
 the couplings are expressed in terms of brane amplitudes, and
 the swampland conjectures just say that these brane amplitudes have the properties we expect on physical grounds for 
 \emph{actual} extended objects.
 
 The brane viewpoint makes the swampland story a lot less mysterious. Tautologically, a theory is consistent iff it leads to 
 physically sound predictions for all observables. In, say, $\cn=2$ supergravity
 the brane amplitudes are important physical quantities: if they don't behave in the 
 correct way, the theory is doomed to be inconsistent. This is what (typically) happens when 
 the swampland conjectures of \cite{OoV} are not obeyed.

\subsection{Supergravity as moduli-space gravitational instantons}\label{s:answe}

From table \ref{tttq1zaqX} we see that in all 4d SUGRAs the gauge coupling maps 
\be
\mu_{(i)}\colon \Gamma_{\!(i)}\backslash\widetilde{M}_{(i)}\to \Gamma_{\!(i)}\backslash G_{(i)}/K_{(i)}
\ee
are harmonic, hence solutions to the $\sigma$-model with target space 
$\Gamma_{\!(i)}\backslash G_{(i)}/K_{(i)}
$ and action 
\be
S[\mu_{(i)}]= \frac{1}{2}\int_{\Gamma_{\!(i)}\backslash\widetilde{M}_{(i)}} d^{\mspace{1.5mu}n_{(i)}}x\;\sqrt{\det g_{(i)}}\, g_{(i)}^{kl}\,h(\mu)_{ab}\,\partial_k \mu^a_{(i)}\,\partial_l \mu^b_{(i)}
\ee
where $g_{(i)\,kl}$ is the kinetic-term metric on the $i$-th
factor space $\widetilde{M}_{(i)}$.

If our SUGRA model is not in the swampland, the solutions $\mu_{(i)}$
 have finite energy (action) $S[\mu_{(i)}]<\infty$.
Thus a partial answer to 
the \textbf{Simpler Question} on page \pageref{que???}  is that, in the SUSY case, the tension field of the gauge couplings should vanish while their energy must be finite.
\medskip

Although the above statements are fully correct, they looks a bit unsatisfactory.
We are treating the couplings $g_{(i)}$ and $\mu_{(i)}$
asymmetrically -- the first one as a background metric on $\widetilde{M}_{(i)}$ and the second one as a classical dynamical field -- while
the two couplings are on the same footing in the swampland story, in facts 
geometrically unified in the 3d scalars' manifold $\mathscr{M}_3$
(cfr.\! the ``total space'' viewpoint in \S.\ref{3differeentt}).
Then
it is natural to treat also the
moduli metrics $g_{(i)}$ as classical dynamical fields.

A naive proposal will be to replace the $i$-th $\sigma$-model action by its
minimal coupling to gravity (allowing for a cosmological constant),
that is, to consider the following theory living on the moduli space $M_{(i)}$
\be\label{poqw1z}
S[g_{(i)},\mu_{(i)}]= \int_{\Gamma_i\backslash \widetilde{M}_i} d^{\mspace{1.5mu}n_i}\varphi\;\sqrt{\det g_{(i)}}\!\left(-\frac{1}{2\kappa_{(i)}^2}\,R_{(i)}+\Lambda_{(i)}+\frac{1}{2} g_{(i)}^{kl}\,h(\mu)_{ab}\,\partial_k \mu_{(i)}^a\,\partial_l \mu_{(i)}^b\right)
\ee
However the $\mathscr{M}_3$ ``total space'' viewpoint of \S.\ref{3differeentt}
 suggests that additional KK fields must live on $\mathscr{M}$,
so the proposal \eqref{poqw1z} looks a bit naive, and we should not expect it to
work in full generality.
If we are lucky, $S[g_{(i)},\mu_{(i)}]$ may at best be a consistent truncation of the moduli-space gravity theory
(if it exists!).

\begin{cla}
In 4d $\cn\geq2$ SUGRA all couplings $g_{(i)}$, $\mu_{(i)}$ are solutions to the classical equations of motion
following from the action $S[g_{(i)}, \mu_{(i)}]$ for appropriate constants $\kappa_{(i)}$, $\Lambda_{(i)}$. If the SUGRA arises as the low-energy description of
a consistent quantum theory of gravity, the solutions have finite action, i.e.\! are gravitational instantons
on $M_{(i)}\equiv \Gamma_{(i)}\backslash\widetilde{M}_{(i)}$.
\end{cla}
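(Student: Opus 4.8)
The plan is to read off the two Euler--Lagrange equations of the action \eqref{poqw1z} and match each against structural facts already in hand. First I would vary $S[g_{(i)},\mu_{(i)}]$ with respect to $\mu_{(i)}$: this yields the harmonic-map (sigma-model) equation, i.e. the vanishing of the tension field $T[\mu_{(i)}]=0$. But this is precisely item 1 of \textbf{Fact \ref{fffac}}: the gauge coupling is tamed, and \emph{tamed} $\Rightarrow$ \emph{harmonic}. So the $\mu$-equation holds with no further work. Varying with respect to $g_{(i)}^{kl}$ gives the Einstein equation sourced by the sigma-model stress tensor,
\be\label{einsteinq}
R_{kl}-\tfrac{1}{2}R\,g_{kl}+\kappa_{(i)}^2\Lambda_{(i)}\,g_{kl}=\kappa_{(i)}^2\Big(K_{kl}-\tfrac12 g_{kl}\,g^{mn}K_{mn}\Big),
\ee
where $K_{kl}\equiv h(\mu_{(i)})_{ab}\,\partial_k\mu^a_{(i)}\partial_l\mu^b_{(i)}$ is the pulled-back target metric, which is exactly the Hodge metric $\mathrm{tr}(\Phi_k\Phi_l)$ of the domestic geometry, while $g_{kl}$ denotes the physical kinetic-term metric and $m\equiv\dim M_{(i)}$. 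The Bianchi consistency of \eqref{einsteinq} is automatic, since the right-hand side is the conserved tensor noted earlier whose divergence vanishes on the harmonic $\mu_{(i)}$.

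The heart of the matter is the linear relation between the Hodge metric, the physical metric, and its Ricci curvature,
\be\label{HRrel}
K_{kl}=\alpha\,g_{kl}+\beta\,R_{kl},
\ee
with constants $\alpha,\beta$ depending only on the multiplet type. For the $\cn=2$ vector sector (special K\"ahler) this is exactly \eqref{kkaq123} with $\beta=1$; for the symmetric-space factors occurring for $\cn\geq3$ (and for symmetric hypermultiplet spaces) the physical metric is Einstein, $R_{kl}=-c\,g_{kl}$, and $\mu_{(i)}$ is a totally geodesic isometric embedding so that $K_{kl}\propto g_{kl}$, whence \eqref{HRrel} holds for a suitable pair $(\alpha,\beta)$. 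Substituting \eqref{HRrel} into \eqref{einsteinq}, the coefficient of $R_{kl}$ forces $\kappa_{(i)}^2=1/\beta$; the coefficient of $R\,g_{kl}$ then matches \emph{automatically} (both sides equal $-\tfrac12$), and the remaining pure-$g_{kl}$ term fixes $\Lambda_{(i)}=\alpha\,(1-m/2)$. Thus appropriate constants exist and the Einstein equation holds. This is the main step, and the only nontrivial ingredient is establishing \eqref{HRrel} uniformly across the multiplet types.

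Finally I would check finiteness. Feeding \eqref{HRrel} and $\kappa_{(i)}^2=1/\beta$ back into the on-shell Lagrangian density, the scalar-curvature contributions cancel identically,
\be
-\tfrac{1}{2\kappa_{(i)}^2}R+\Lambda_{(i)}+\tfrac12\,g^{mn}K_{mn}=-\tfrac{\beta}{2}R+\Lambda_{(i)}+\tfrac12(\alpha m+\beta R)=\Lambda_{(i)}+\tfrac{\alpha m}{2}=\alpha,
\ee
so the on-shell action collapses to $S[g_{(i)},\mu_{(i)}]=\alpha\cdot\mathsf{Vol}(M_{(i)})$, which is finite precisely because $M_{(i)}$ has finite volume --- the volume conjecture \textbf{OV1}. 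Equivalently, without the cancellation one bounds \eqref{poqw1z} termwise: the cosmological term is $\Lambda_{(i)}\,\mathsf{Vol}(M_{(i)})<\infty$ by \textbf{OV1}, the sigma-model term equals the Dirichlet energy $E[\mu_{(i)}]<\infty$ by arithmeticity (the finite-energy, non-swampland hypothesis), and the Einstein--Hilbert term reduces to these two via \eqref{HRrel}. Hence, whenever the SUGRA is the low-energy limit of a consistent quantum gravity, the pair $(g_{(i)},\mu_{(i)})$ is a finite-action solution, i.e. a gravitational instanton on $M_{(i)}$. The hard part is the curvature identity \eqref{HRrel}: once it is in hand, both equations of motion and the finiteness of the action follow with essentially no extra effort.
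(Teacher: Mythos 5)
Your overall strategy is the same as the paper's: the $\mu$-equation holds because the gauge coupling is harmonic, the Einstein equation reduces to a linear-dependence statement among $R_{kl}-\tfrac12 R\,g_{kl}$, $g_{kl}$ and the $\sigma$-model stress tensor, and finiteness follows because the on-shell Lagrangian density is a constant multiple of the volume form, so finite action is equivalent to finite volume (\textbf{OV1}). Your explicit use of the Hodge-metric identity \eqref{kkaq123} for the special K\"ahler (vector-multiplet) sector is precisely the content the paper outsources to \cite{cec-insta}, and your algebra fixing $\kappa_{(i)}^2=1/\beta$ and $\Lambda_{(i)}=\alpha(1-m/2)$, as well as the on-shell collapse of the density to the constant $\alpha$, is correct.

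There is, however, a genuine defect in your treatment of the $\cn=2$ hypermultiplet sector. You assert that for ``symmetric hypermultiplet spaces'' the map $\mu_{(i)}$ is a totally geodesic \emph{isometric embedding}, so that $K_{kl}\propto g_{kl}$. This is wrong on two counts. First, in ungauged $\cn=2$ SUGRA the hypermultiplet scalars do not enter the vector kinetic terms at all: $\widetilde{\mu}_\text{hyper}$ is the \emph{constant} map, hence $T_{\text{hyper}\,kl}\equiv 0$ (an isometric embedding would contradict $\cn=2$ supersymmetry, since it would make the gauge couplings depend on the hypermultiplet scalars). Second, and more importantly for the logic, generic quaternionic-K\"ahler hypermultiplet spaces are \emph{not} symmetric, so your case analysis simply does not cover them: as written, your proof establishes the Claim only for a subclass of $\cn=2$ theories, whereas the Claim is stated for all of them. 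The fix is exactly the paper's: a negative quaternionic-K\"ahler manifold is automatically Einstein, so with $T_{kl}=0$ equation \eqref{EE} holds on the hyper factor for a suitable choice of $\Lambda_{(i)}$. A related, milder point: in all the Einstein cases ($\cn\geq3$ and the hyper sector) the tensors $g_{kl}$ and $R_{kl}$ are proportional, so your ``coefficient matching'' does not literally \emph{force} $\kappa^2_{(i)}=1/\beta$; the relation $K_{kl}=\alpha\,g_{kl}+\beta\,R_{kl}$ is degenerate there and one should only claim the \emph{existence} of some pair of constants, which is all the Claim requires.
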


Indeed, the equation of motion for the moduli-scalars are satisfied since the $\mu_{(i)}$ are harmonic.
One has only to check that the Einstein equations
\be\label{EE}
R_{(i) kl}-\tfrac{1}{2}g_{(i)kl}\,R_{(i)}+\kappa_{(i)}^2\,\Lambda_{(i)}\, g_{(i)kl}=\kappa_{(i)}^2\,T_{(i)kl}
\ee
hold for some constants $\kappa^2_{(i)}$, $\Lambda_{(i)}$.
Equivalently, it is enough to show that the three symmetric tensors $R_{(i)kl}-\tfrac{1}{2}g_{(i)kl}R_{(i)}$, $g_{(i)kl}$, and $T_{(i)kl}$
are not linearly independent. For $\mathcal{N}\geq 3$ all factor spaces $M_{(i)}$ are locally symmetric,
hence Einstein $R_{(i)kl}=-\lambda_{(i)}\, g_{(i)kl}$, while the gauge coupling $\widetilde{\mu}_{(i)}\colon \widetilde{M}_{(i)}\to G_{(i)}/K_{(i)}$ is an isometry\footnote{\ More precisely, we may choose $G_{(i)}$ as small as possible and then $\widetilde{\mu}_{(i)}$ is an isometry; if we choose $G_{(i)}$ non-minimal, $\widetilde{\mu}_{(i)}$ is just a totally geodesic
embedding.} (up to overall normalization), so that the three tensors are equal up to an overall constant
and \textbf{Claim} holds. For $\cn=2$ one has two factor spaces $\widetilde{M}_\text{hyper}$
and $\widetilde{M}_\text{vector}$. $\widetilde{\mu}_\text{hyper}$ is the contant map, so $T_{\text{hyper}\,kl}\equiv0$, while 
$\widetilde{M}_\text{hyper}$, being negative quaternionic-K\"ahler, is Einstein, so eqn.\eqref{EE} holds. The tricky case is
 $\widetilde{M}_\text{vector}$, which is a special K\"ahler manifold. That the Einstein equations \eqref{EE}
 hold in this case was shown in \cite{cec-insta}.
 Finally the last statement follows from the fact that evaluated on the appropriate classical solution the action density is proportional to the volume form \cite{cec-insta}, so that finite action is equivalent to finite volume of $M_{(i)}$,
 which is one of the swampland conjectures.

\smallskip

The \textbf{Claim} may be regarded as a general geometric rigidity theorem, \emph{alias} a general SUSY non-renormalization theorem.
E.g.\! the $\mathcal{N}=2$ case yields the two non-renormalization theorems of $\mathcal{N}=2$ SUGRA.

\subparagraph{$\cn=1$ SUGRA.}
In the  $\cn=1$ case we have 
weaker non-renormalization theorems, so we cannot expect that the 
story is as simple as for $\cn\geq2$. We have a non-renormalization theorem
for $F$-term couplings, so we expect 
the gauge coupling $\mu$ to  be still a tamed map; this is of course correct, since
the gauge coupling is holomorphic and \emph{a fortiori} pluri-harmonic. 
The moduli metric, however, is not expected to be a solution of the Einstein equations following from
a simple action of the form \eqref{poqw1z} since we don't have the corresponding non-renormalization
theorem.
One may speculate about more complicated ``dynamical'' equations for the $\cn=1$ moduli metric $g(\phi)_{ij}$ with additional degrees of freedom propagating on the scalars' moduli space $\mathscr{M}$.
Some proposal will be discussed elsewhere.

\section{Domestic geometry and the swampland}

In the supersymmetric context 
the geometric swampland conditions may be conveniently
rephrased as the requirement that the underlying domestic geometry 
 is arithmetic, i.e.\! as the statement that quantum-consistent
 effective models have formal brane amplitudes with the
right properties to correspond to actual physical branes.
Since domestic geometry and its brane amplitudes $\Psi_\pm$
 make sense on \emph{all} Riemannian manifolds,  one is naturally
  led to ask whether the statement is true for all 
quantum-consistent effective theories of gravity, supersymmetric or not.

We have no quantitative control on the quantum-consistency of \emph{non}-supersymmetric effective theories, so
the question is really a matter for speculation.
There are, however, several reasons to believe that domestic geometry
is somehow on the right track even in the \emph{non}-SUSY case:
\begin{itemize}
\item evidence from the SUSY examples;
\item the elegant and deep connection between domestic geometry and the 
Ooguri-Vafa geometric swampland conjectures which apply in general,
not just in the SUSY context;
\item domestic geometry implies physically desirable properties of the entropy functions;
\item physical ``naturalness'' considerations, see the discussion in \S\S.\,\ref{exceppttt} and \ref{addgrav}.
\end{itemize}

\subsection{Domestic geometry vs.\! OV manifolds}\label{jjjjasz}

OV manifolds are the natural arena for tamed maps. Indeed

\begin{tam}\label{yywe} $\mathscr{M}$ a OV manifold and $G(\R)/K$ a symmetric space of non-compact type. 
All maps $\mu\colon \mathscr{M}\to \Lambda\backslash G(\R)/K$ of \emph{finite energy} (action)
may be continuously deformed into a unique tamed map $\mathring{\mu}$ which is the map of minimum energy in its homotopy class.
In particular,
all harmonic maps
  $\mu\colon\mathscr{M}\to \Lambda\backslash G(\R)/K$ which have finite energy
  (action)
  are automatically tamed.
  \end{tam}
  
  We defer the argument to the end of \S.\,\ref{caseOB:sb}.
  
  \medskip
  
  We assume as our working hypothesis that, in a quantum-consistent theory of gravity, the low-energy
  gauge coupling $\mu\colon\mathscr{M}\to Sp(2h,\Z)\backslash Sp(2h,\R)/U(h)$ has finite energy (action).
  This holds in all known (supersymmetric) examples constructed from the superstring, where this condition is strictly related to various swampland considerations \cite{swampIII}, and looks quite reasonable in general.
Under this hypothesis,   
  the actual gauge coupling
   $\mu$ is a continuous deformation of a well-defined tamed map $\mathring{\mu}$, so
  domestic geometry is at least ``qualitatively correct''.
  The ``correction'' $\mu-\mathring{\mu}$ vanishes in the SUSY case, and 
  we shall speculate that it should be small (or even zero) in general.

We stress that, for $\mathscr{M}$ an OV manifold, the tamed map $\mathring{\mu}$
is uniquely determined by the action on the scalars of
the continuous and discrete (bosonic) gauge symmetries. Under our working assumption, the action of the
gauge symmetries on $\mathscr{M}$ is then restricted by the condition that
a finite-energy tamed map $\mathring{\mu}$ does exist. This yields a severe condition on $\pi_1(\mathscr{M})$
which may be regarded as a stringent refinement  of the swampland $\pi_1$-conjecture of \cite{OoV}. 
% In particular, the precise form of the gauge couplings required by supersymmetry is an automatic consequence of the swampland conjectures together with the specific action of the bosonic gauge symmetries on the scalars. This goes in the direction of demystifing the role of 
%supersymmetry for consistency of a low energy effective theory.

\subsection{Tension flow and naturalness of the gauge couplings}\label{s:tension} 

When the source is an OV manifold and the target is an arithmetic quotient of
a symmetric space without compact factors, the conditions of  being
\emph{arithmetic tamed} and being \emph{harmonic} are equivalent for maps of finite energy by the \textbf{Tamed property}. Then, to construct an arithmetic domestic geometry on an OV space $\mathscr{M}$, it suffices to impose that the relevant map $\mu\colon\mathscr{M}\to\Lambda\backslash G(\R)/K$ is harmonic of finite energy.
This last condition has a simple interpretation which we now review.

\subsubsection{Tension flow}

Let $M$, $N$ be Riemannian manifolds. In order to construct a harmonic map $\phi\colon M\to N$ in a given homotopy class (equivalently, a covering harmonic map $\tilde\phi\colon \widetilde{M}\to\widetilde{N}$ twisted by a given monodromy representation $\rho$ of $\pi_1(M)$)
one may think of starting with an arbitrary smooth map $\phi_0$ in that class which has finite energy $E(\phi_0)<\infty$ (if it exists\,!),
and then continuously deform it to decrease its energy until we reach a minimum value.
A convenient way of implementing this variational strategy, pionereed by Eells and Sampson \cite{eells}, is the \emph{tension flow.} One considers a family of maps $\phi_t\colon M\to N$, parametrized by 
$t\in \R$, which satisfies the differential equation\footnote{\ $\mathsf{grad}$ stands for the gradient in the Banach manifold of
smooth maps from $M$ to $N$. Cfr.\! eqn.\eqref{tensionq}.}
\be\label{torflowa}
\frac{d\phi_t}{dt}=D\ast d\phi_t\equiv T(\phi_t)\equiv - \mathsf{grad}\, E(\phi_t)
\ee
with initial value the finite energy map $\phi_0$. If the solution to the PDE \eqref{torflowa} exists and its limit as $t\to+\infty$ is smooth,
$\phi_\infty$ is a finite-energy harmonic map in the homotopy class of $\phi_0$. The existence problem for harmonic maps
is then reduced to showing that the solution to the gradient flow  \eqref{torflowa}
 exists and its limit is regular. 
\medskip

For a family of $\rho$-twisted maps $\widetilde{\mu}_t\colon\widetilde{\mathscr{M}}\to Sp(2h,\R)/U(h)$ the flow equation takes the elegant form (cfr.\! eqn.\eqref{tzero})
\be\label{flowa}
\cs_t^{-1}\frac{d \cs_t}{dt}= D^i(\cs_t^{-1}\,\partial_i \cs_t)
\ee
where $\cs_t=(\cs_t)^t>0$ is the composition of $\widetilde{\mu}_t$
with the Cartan diffeomorphism \eqref{wSb}\footnote{\ $P(2h,\R)$ stands for the 
space of positive-definite $2h\times 2h$ real symmetric matrices.} 
\be
\iota\colon Sp(2h,\R)/U(h)\equiv \boldsymbol{H}_h\overset{\sim}{\to} Sp(2h,\R)\cap P(2h,\R).
\ee
The derivative $D^i$ in \eqref{flowa}  is covariant only with respect to the Levi-Civita connection on the source space $\widetilde{\mathscr{M}}$.
From the form eqn.\eqref{flowa}, it is obvious that along the flow one has $\cs_t\in Sp(2h,\R)$
and $(\cs_t)^t=\cs_t>0$ for all $t$.
%The energy $E(\cs_t)$ is one-eight\footnote{\ See eqn.\eqref{principal}.} the ``norm''\footnote{\ We write ``norm'' between quotes for reasons described in footnote \ref{uuuuff}.} $\int_\cm \mathrm{tr}(\ca_t\wedge \ast \ca_t)$ of the 1-form
%$\ca_t\equiv \cs_t^{-1} d\cs_t$ which we may see as a one-parameter family of flat $Sp(2h,\R)$ connections on $\cm$ with monodromy
%representation $\odot^2\rho$. $\odot^2\rho$ specifies the gauge equivalence class of $\ca_t$, which is $t$ independent,
%and the harmonic map $\cs_\infty$
%realizes the minimal connect
%is the symmetric matrix which yields the connection
%$\cs_\infty^{-1} d\cs_\infty$ with smallest ``norm'' in the gauge-equivalence class.
%

\medskip

The tension flow has many analogies with the
well known Ricci flow on a manifold $M$ (for a review see \cite{ricciflow})
which we may roughly see as the RG flow of the 2d $\sigma$-model with target $M$.
The analogy is not accidental; indeed the tension flow is a special instance of Ricci flow as
we are going to show.\footnote{\  A different application of the Ricci flow to the swampland program has been discussed in \cite{RRRicc}.} 

%\medskip

\subsubsection{Relation to Ricci flow} Before addressing the question of existence of harmonic maps, let us explain the relation of the gradient flow \eqref{torflowa} 
 with the Ricci flow in a context
where the map $\mu\colon\mathscr{M}\to \Lambda\backslash \boldsymbol{H}_h$ is 
the gauge coupling of a 4d field theory. For simplicity we first consider  an effective model
with only scalars and Abelian vectors (no gravity or fermions)
\begin{equation}\label{lag22}
\mathscr{L}_\text{eff}= -\frac{1}{2}F_\pi^2\,G(\phi)_{ij}\,\partial^\mu \phi^i\partial_\mu \phi^j -
\frac{i}{16\pi}\tau(\phi)_{ab}F^a_+F^b_++\frac{i}{16\pi}\bar\tau(\phi)_{ab}F^a_-F^b_-
\end{equation} 
which we interpret as a field theory with an \emph{explicit} UV cut-off $\Lambda_\text{eff}$
at the energy scale where the IR description breaks down.  The scalar fields $\phi^i$
are seen as adimensional local coordinates on $\mathscr{M}$, and $F_\pi$ is the overall
mass scale of their kinetic terms. Except in \S.\,\ref{exceppttt} we set $F_\pi=1$. 

We compactify this 4d model to 3d on a circle of radius $R$.
Each 4d vector yields two real scalars in 3d: one from the internal component $A_4^a$ and one from the dual to
$A^a_\mu$. The  $2h$ scalars arising from the 4d vectors are periodic since they correspond to the  electric and magnetic $U(1)^h$ holonomies along the circle; we parametrize
the $2h$ holonomies as $\exp(2\pi iy^A)$ ($h=1,\dots, 2h$).

The resulting 3d effective theory is the $\sigma$-model with scalars' manifold the total space of the fibration  
$\mathscr{X}\to\mathscr{M}$ mentioned in \S.\,\ref{3differeentt}:
the fiber $\mathscr{X}_\phi$ is a $h$-dimensional Abelian variety (over $\C$) with periods $\tau(\phi)_{ab}$ and fixed
principal polarization (which we identify with its K\"ahler class).
The total space $\mathscr{X}$ is equipped with the metric
\be\label{3dmeta}
ds^2_\text{3d}= R\, G(\phi)_{ij}\,d\phi^i d\phi^j+ \frac{1}{R}\, \cs(\phi)_{AB}\, dy^A\,dy^B +\big(\text{exponentially small as $R\to\infty$}\big)
\ee 
where $\cs_{AB}\equiv (\ce\ce^t)_{AB}$ is the inverse of the Cartan coupling $\cs^{AB}$. The exponentially small corrections are due to 4d massive particles, carrying electro-magnetic charges, whose world-line wrap the circle; such corrections are well studied in the context of 3d compactifications of 4d $\cn=2$ QFT \cite{GMN}. We shall take $R$ large and ignore the exponential corrections.

Compactifying further down to two dimensions, we get a 2d $\sigma$-model with target space metric proportional to 
\eqref{3dmeta} whose RG flow is given (in the one-loop approximation) by the Ricci flow. The flow preserves the structure of 
the metric \eqref{3dmeta} so it decomposes into a pair of equations of the form
\be\label{2flowsa}
R\,\frac{d}{dt}G(\phi)_{ij}=-2\,R(\phi)_{ij},\qquad \cs^{AC}\mspace{1mu}\frac{d}{dt}\cs(\phi)_{CB}=-2\,{R(\phi)^A}_{B}
\ee 
The Ricci curvature in the fiber directions is easily computed to be
\be
{R^A}_B= -\frac{1}{2}  D^i\mspace{-1mu}{\big(\cs^{-1} \partial_i\mspace{2mu}\cs\big)^{\!A}}_B.
\ee
Comparing with eqn.\eqref{flowa}, we see that the Ricci flow restricted to the fibers has the same form as
the original tension flow. However, in general, the two flows in \eqref{2flowsa}
are coupled together because the covariant derivative $D^i$ depends on the evolving metric $G_{ij}$ on the base. When the base $\mathscr{M}$ is Ricci-flat or Einstein, the two flows are identical.

Note that ${R^A}_A=0$, so that the Ricci flow preserves the volume of the
Abelian fibers. In facts, the flow preserves the fiber's K\"ahler form ($\equiv$ polarization) 
while changing its complex structure. To see this,
we introduce the orthonormal co-frame 
\be
e^m \equiv {\ce_A}^m\,dy^A\qquad \ce\in Sp(2h,\R)
\ee
where $\ce$ is the vielbein in \eqref{wSb}.
The isotropy group $U(h)$ acts on the ``flat'' index $m$ in the representation $\boldsymbol{h}\oplus\boldsymbol{\bar h}$,
defining on the fiber a torsion-less flat $U(h)$-structure hence an integrable complex structure and a closed K\"ahler form which is  given in ``flat'' indices by the constant symplectic matrix $\Omega_{mn}$.
Then the fiber K\"ahler form is
\be\label{kaf}
\omega_\text{fiber}\overset{\rm def}{=} \Omega_{mn}\, e^m\wedge e^n =(\ce \Omega \ce^t)_{AB}\, dy^A\wedge dy^B\equiv 
\Omega_{AB}\,dy^A\wedge dy^B
\ee  
which is preserved by the Ricci flow of the fiber metric
\be\label{flowMCz}
\cs^{-1}\mspace{1mu}\partial_t\mspace{1mu}\cs=  D^i\mspace{-0.5mu}(\cs^{-1}\mspace{1mu}\partial_i\mspace{1mu} \cs).
\ee
The Ricci flow preserves the form \eqref{3dmeta} of the 3d target metric with $\cs_{AB}$ a positive symmetric matrix in
$Sp(2h,\R)$. 
Thus the Ricci flow for the metric \eqref{3dmeta}
evolves the complex moduli of the Abelian fibers of $\mathscr{X}\to\mathscr{M}$ but not their K\"ahler moduli.
At a weakly-coupled fixed-point of the 2d RG flow the tension $D^i(\cs^{-1}\partial_i \cs)$ vanishes and hence the fixed-point
map $\widetilde{\mu}$ is harmonic.
 %\medskip

\subsubsection{Naturalness of the effective Lagrangian $\mathscr{L}_\text{eff}$} \label{exceppttt}

In terms of the original 4d effective theory \eqref{lag22} the vanishing of the tension $T(\mu)$ means that the two\footnote{\ There is a third one-loop graph (the tadpole graph) proportional to the Christoffel symbols $\gamma^i_{jk}$.
The tadpole graph may be set to zero by using normal coordinates in the perturbative expansion.}
 (quadratically divergent)
one-loop Feymann graphs in figure \ref{figX} cancel.  Indeed the leading
correction to gauge couplings is proportional to the tension of the gauge coupling
\be\label{correctionx}
\delta\tau_{ab}\propto \frac{\Lambda^2_\text{eff}}{F_\pi^2}\; T(\mu)_{ab}\equiv \frac{\Lambda^2_\text{eff}}{F_\pi^2}\; D^i\partial_i \tau_{ab}. 
\ee
In order for the 4d Lagrangian \eqref{lag22} to be meaningful as a weakly-coupled  effective description
of the low-energy dynamics, the correction \eqref{correctionx} should be rather small. If $F_\pi \lesssim \Lambda_\text{eff}$, the 
tension should be approximately zero $T(\mu)_{ab}\approx0$.  
Thus the vanishing of the tension may be seen as 
a ``naturalness'' requirement for the weakly-coupled 4d effective theory $\mathscr{L}_\text{eff}$. This is equivalent to the statement that the gauge coupling $\mu$ is (approximately) harmonic, i.e.\! that its energy is near the minimum value\footnote{\ Assuming our working hypothesis that the energy of $\mu$ is finite.}  consistent with the monodromy representation $\rho$.

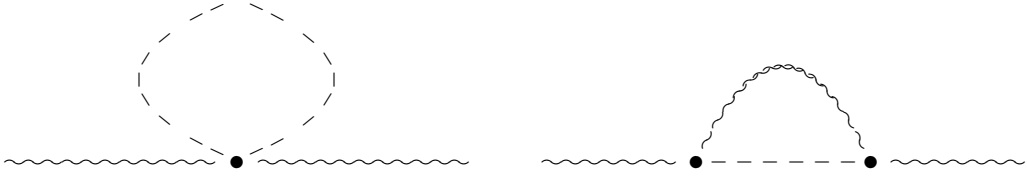
\begin{figure}
$$
\begin{gathered}\xymatrix{&&&
\\
\\
\ar@{~}[rrr]&&&\bullet\ar@{--}@<-0.3ex>@/^3.2pc/[uu]\ar@{--}@<0.3ex>@/_3.2pc/[uu]\ar@{~}[rrr]&&&}
\end{gathered}
\quad
\begin{gathered}
\xymatrix{\\
\\
\ar@{~}[rr]&&\bullet\ar@{--}[rr]\ar@{~}@/^3pc/[rr]&&\bullet \ar@{~}[rr]&&}
\end{gathered}
$$
\caption{\label{figX} One-loop corrections to the photon propagator arising from the scalar-vector couplings in the vectors' kinetic terms. Wavy lines are photons, dashed ones scalars.}
\end{figure}

The interpretation of $T(\mu)_{ab}=0$ as a naturalness condition on $\mathscr{L}_\text{eff}$ is reflected in the fact that
the gauge coupling tension vanishes for \emph{all} 4d supersymmetric theories -- whether they are rigid SUSY QFTs or
supergravities.
This is one of the many ways in which SUSY improves naturalness. 
At weak coupling, naturalness of the gauge couplings does not require
 $\tau(\phi)_{ab}$ to satisfy all the detailed constraints of supersymmetry: the much weaker condition that the gauge coupling map $\mu$
is harmonic suffices. When $\mathscr{M}$ is an OV manifold
and $\mu$ has finite energy, the weaker condition
implies that the gauge coupling $\mu$ is tamed and arithmetic;
whenever the holonomy of the OV manifold is not generic,
we either get an arithmetic $tt^*$ geometry or an arithmetic domestic geometry 
of the kind we have in extended supergravity.\footnote{\ That is, a totally geodesic embedding into the locally symmetric space $\Lambda\backslash G(\R)/K$.} In other words, when $\mathscr{M}$ is an OV manifold with a sufficiently large algebra
$\cp^\bullet$, the weaker naturalness condition on $\mu$ is equivalent to the detailed constraints from supersymmetry.
\medskip

Let $\mu\colon\mathscr{M}\to \Lambda\backslash Sp(2h,\R)/U(h)$ be the gauge coupling of 
an effective theory $\mathscr{L}_\text{eff}$, with $\cm$ an OV space. Under our working
hypothesis that $\mu$ has finite energy, 
the homotopy class $[\mu]$ contains a unique tensionless map $\mathring{\mu}$,
which is automatically tamed. The naturalness argument above suggests that the 
torsion $T(\mu)$, and hence the correction
 $\mu-\mathring{\mu}$,  is small, that is, that domestic geometry is (at least)
 approximatively correct.

\medskip

\subsubsection{Naturalness: Adding gravity}\label{addgrav}

We add gravity to the model \eqref{lag22}, and return to the original Lagrangian \eqref{lag} coupled to gravity, and again compactify the model on $S^1$. In a gravity theory this is just a topological sector of the 4d theory, and the 3d effective low-energy physics ought to be quantum-consistent if the 4d theory is.
In presence of gravity the 3d effective theory has two additional light scalars: $\rho$ corresponding to $g_{44}$ and the dual $z$ of the KK vector $g_{\mu4}$.
The 3d scalars' manifold $\mathscr{M}_3$ is now a fibration over $\mathscr{M}\times \R_\rho$ with fiber a copy of the locally homogeneous  
space $H(\Z)\backslash H(\R)$ where $H(\R)$ is the real Heisenberg Lie group
\be
H(\R)=\{ (z,y^A)\in \R^{2h+1}\} \qquad
(z,y^A)\cdot (w, u^A)= (z+w +\Omega_{AB}\, y^A u^B, y^A+u^A).
\ee
Ignoring quantum corrections (which are supressed as $\rho\to\infty$), the Einstein frame 3d scalars' metric takes the  form
\be\label{THFmett}
\begin{split}
ds^2&=ds^2_\text{base}+ds^2_\text{fiber}=\\
&= \left(G_{ij}\,dx^i dx^j+\frac{1}{2 \rho^2} d\rho^2\right)+\frac{1}{2\rho^2}\bigg(2\rho\, \cs_{AB}\,dy^A dy^B + \big(dz+\Omega_{AB}\, y^A dy^B\big)^2\bigg),\end{split}
\ee
with $G_{ij}$, $\cs_{AB}$ and $\Omega_{AB}$ as before.
The Killing vector $\partial_z$ defines on each fiber a transversely holomorphic foliation (THF)\footnote{\ Probably the best reference for  
transversely holomorphic foliations (THF) is the physics paper \cite{THF} which discuss them in the context of supersymmetry on curved 3d manifolds. They discuss the 3d case; the extension to an arbitrary odd number of dimensions is straightforward.}
endowed with a THF-compatible transversely K\"ahler metric: the normal bundle to the leaves of the transversely holomorphic foliation carries a complex structure as well as the transverse Hermitian metric $\rho^{-1}\cs_{AB}\,dy^A dy^B$ which is K\"ahler when restricted to a fiber with K\"ahler form $\rho^{-1}\,\Omega_{AB}\,dy^A\wedge dy^B$ (cfr.\! eqn.\eqref{kaf}). The complex structure of the normal bundle to the THF foliation
on each fiber is again specified by the uni-modular matrix $\cs_{AB}$. 

In view of the Heisenberg symmetry and the scaling invariance $(\rho, z, y^A)\to (\lambda^2 \rho, \lambda^2 z, \lambda\mspace{1mu}y^A)$, the components of the Ricci tensor along the fiber must have the form
\be
R_{AB}\, dy^A dy^B+ R_{zz}\, dz^2= \frac{1}{\rho} A_{AB}\, dy^A dy^B + \frac{1}{\rho^2} B\big(dz^2 +\Omega_{AB}\, y^A dy^B\big)^2
\ee
where $A_{AB}$ and $B$ are functions only of the coordinates $x^i$ of $\mathscr{M}$. In particular the Ricci flow preserves the THF of each fiber, while
evolves its transverse complex structure and K\"ahler form
\be\label{pqw12}
\frac{d}{dt}\!\left(\frac{1}{\rho}\, \cs_{AB}\right)=-\frac{2}{\rho}\, A_{AB}
\ee
where
\be\label{pzaaac}
A=-\frac{1}{2}\,\cs\big(D^i(\cs^{-1}\partial_i \cs)-c\,\boldsymbol{1}\big)
\ee
for some constant $c$ which depends only on $h$ and $m$. Writing $\cs= f \cs^\prime$, with $\det \cs^\prime=1$, we get for the uni-modular matrix $\cs^\prime$ the same flow equation as before, eqn.\eqref{flowMCz}, so the evolution of the transverse complex structure of the fiber-wise 
THF is again given by the 4d tension flow.
The new aspect is that now the transverse K\"ahler form gets rescaled by the factor $e^{ct}$. This is due to the fact that gravity modifies the Ricci flow equation so that
the appropriate 3d ``fixed point condition'' requires $\mathscr{M}_3$ to be Einstein rather than Ricci-flat;
the easiest way to see this is to compare 3d $\cn\geq3$
rigid SUSY field theories, which have Ricci-flat target spaces, with the $\cn\geq3$ supergravities which have Einstein target spaces with $R_{ij}=-\lambda\mspace{1mu} g_{ij}$,
where $\lambda>0$ is a universal constant which
depends only on the field content. In ``natural'' effective theories the additional term proportional to
$\cs_{AB}/\rho$ in the \textsc{rhs} of eqn.\eqref{pqw12} should be cancelled
by diagrams of the form in figure \ref{figX} with the scalar internal lines replaced by graviton propagators whose contribution is $\propto \cs_{AB}/\rho$.   
Thus the absence of torsion of $\mu$ can again be interpreted as a naturalness
 condition on the 4d effective gravity theory.
All weakly-coupled infrared fixed points then have tensionless gauge couplings $\mu$.

In presence of fermions, the quadratic divergence of the photon propagators, in addition to the diagrams in figure \ref{figX}, gets contributions from 
one-loop of fermions\vskip4pt
\be\label{1loopx}
\xymatrix{\ar@{~}[r]& \mspace{-8mu}\circ\mspace{-8mu} \ar@/^1.5pc/[rr]&& \mspace{-8mu}\circ\mspace{-8mu} \ar@/^1.5pc/[ll]\ar@{~}[r]&}
\ee\vskip12pt
\noindent where $\circ$ stands for Pauli coupling. In the \textsc{susy} case these new contributions would not change our conclusions because of the
 relation of Pauli couplings to the gauge couplings implied by supersymmetry:
 the net effect is just a further contribution to the constant $c$ in \eqref{pzaaac}, that is,
 the Pauli couplings will not modify the flow of the fiber's transverse complex structure. 
 It seems plausible that this conclusion remains valid in all consistent effective theories, possibly with more complicate flows for the fiber's transverse K\"ahler moduli, but 
without affecting the flow of the fiber's transverse complex moduli.
If this is the case, the ``natural'' value of the tension of the gauge coupling is still zero 
(or very small). We expect the fiber's flow to have this property for the following reason:
according to the definition of OV manifold, 
$\pi_1(\mathscr{M})$ is ``big'' and -- unless $\mu$ is the constant map\footnote{\ This exceptional case does happen in quantum-consistent theories: consider the 4d $\cn=2$ SUGRA which describes Type IIB on a \emph{rigid} Calabi-Yau 3-fold.} -- its image $\Gamma$ in the Siegel modular group should also be ``big''.\footnote{\ Assuming $\mu$ not to be the constant map
(which is a special case of harmonic map), 
the technical statement is that $\Gamma$ is Zariski dense in some
 \emph{non-compact} Lie subgroup $G(\R)\subset Sp(2h,\R)$: see \S.\ref{s:stru1}. This means that all algebraic invariants for the group $\Gamma$ are invariants for the full group $G(\R)$,
 which generically is the full group $Sp(2h,\R)$. }
Since $\Gamma$ is a gauge symmetry, the Pauli couplings should be exactly invariant under it. 
In the extreme IR this implies invariance under the continuous group $Sp(2h,\R)$.
Defining (as it is usual) the fermi fields to be invariant
under $\Gamma$, the electro-magnetic field strengths\footnote{\ The dual field strengths $G_a$ are defined as $G_a=\ast\mspace{1mu} \partial\mathscr{L}/\partial F^a$.} $\cf\equiv (G_a, F^a)$ can enter the Pauli interactions only through the $\Gamma$-invariant combination $\ce^{-1}\cf$: these couplings
are automatically $Sp(2h,\R)$-invariant and hence cannot spoil the tension flow of the uni-modular matrix $\cs_{ab}^\prime$ which is a flow in the group $Sp(2h,\R)$. 
In particular, when $\mu$ is harmonic, hence tamed, the Pauli coupling must be a domestic
brane amplitude, as expected.

More generally, the tension field in the \textsc{rhs} of eqn.\eqref{flowa} is the only
$Sp(2h,\R)$-covariant tensor we can write to second order in the scalar field derivatives which flows 
 the  transverse complex structure of the Heisenberg fiber of $\mathscr{M}_3$; all possible covariant corrections (to this order)
will  affect the  fiber's transverse K\"ahler moduli but not its transverse  complex moduli.  

Therefore the condition that $\mu$ is harmonic seems to be a rather plausible ``naturalness'' 
requirement, at least under certain circumstances, especially when the   scalars' manifold $\mathscr{M}$ is an OV space. If the couplings flow in the extreme IR to a ``weak-coupling'' regime (meaning a regime in which the geometric interpretation of the couplings is sound), at the fixed point the gauge coupling $\mu$ should be tensionless, i.e.\! $\mu$ harmonic.  
  As already stressed, we know no reliable example of consistent effective theory where the tension of $\mu$ is not zero.

  \subsubsection{A word of caution}

In the heuristics above we neglected the issues related to finite-distance singularities in scalars' space.  At such points finitely many \emph{charged} states become
  light, and we have additional effective one-loop contributions from diagrams of the form
  \eqref{1loopx} where the internal lines correspond to the new light charged particles and
  the vertices $\circ$ are minimal gauge couplings. In terms of the dependence on the effective
  cut-off $\Lambda_\text{eff}$, these contributions are suppressed by a factor $\Lambda_\text{eff}^{-2}\log \Lambda_\text{eff}$, but there is no reason to expect that -- as functions of the background value of the scalars -- they are proportional to the tension of the gauge couplings $\mu$.
 This suggests that near the finite-distance singular points in $\mathscr{M}$ the effective tension $(\Lambda_\text{eff}/F_\pi)^2\;T(\mu)_{ab}$, or more precisely its invariant norm $(\Lambda_\text{eff}/F_\pi)^2\,\|T(\mu)_{ab}\|$, is small but non-zero. Morally speaking, we expect the gauge couplings $\tau(\phi)_{ab}$ to satisfy  some kind of ``wave-equation with sources'' on $\mathscr{M}$ roughly of the form
 \be
 T(\mu)_{ab}\equiv D^i\partial_i\tau(\phi)_{ab}= \sum_s f_s(\phi;\phi_s)
 \ee  
 where $\phi_s\in\mathscr{M}$ are the finite-distance singular points at which finitely-many
 new degrees of freedom become massless, and the $f_s(\phi;\phi_s)$'s are functions (or distributions) with support in some small region centered at $\phi_s$
 which  
capture the local physics at these special points in moduli space.

\subsection{
Existence, uniqueness, and structure of tamed maps}\label{s:stru1}

In an $\cn=2$ supergravity consistent with the swampland conjectures,
the gauge coupling map
\be\label{hasq1a}
\mu\colon \mathscr{M}\to \Lambda\backslash Sp(2h,\R)/U(h),\qquad \Lambda\subseteq Sp(2h,\Z),
\ee
has a very restricted form \cite{swampIII} as a consequence of the structure theorem
for the underlying VHS period map \cite{reva,revb,periods}. It is natural to ask whether
this fundamental result (and its physical consequences \cite{swampIII})
holds in the more general set-up of domestic geometry, that is, for finite-energy
tamed maps $\mu$ as in eqn.\eqref{hasq1a} whose source $\mathscr{M}$ is an OV manifold.
\medskip

The aim of this subsection is to argue for a positive answer.
For didactical reasons we split the discussion in two parts: first we consider the 
well studied case where the source of the map $\mu$ is a \emph{compact}
Riemannian manifold (whose holonomy algebra may or may not be special). This situation 
has little interest for physics, however it sheds light on the basic structures of domestic
geometry. Then we proceed to the physically relevant set-up of non-compact OV manifolds,
and argue that the good properties of the compact case extends to domestic geometry defined 
over the ``magic''  OV spaces.
\medskip 
  
We apologize for being technical if not pedantic. 
The reader may prefer to jump directly to \S.\,\ref{s:stru2} in a first reading.
However some intermediate result may be of independent interest.

\subsubsection{Compact source space}\label{caseOCC:scc}

The argument goes through several steps. 
First we show that twisted tamed maps \eqref{hasq1a}
with the required properties exist under the assumptions:
\begin{itemize}
\item[\textit{(i)}] the source manifold is compact;
\item[\textit{(ii)}] the monodromy representation has the physically expected properties.
\end{itemize}
Since a tamed map is in particular harmonic, this step requires first to
show that twisted harmonic maps exist, and then that they are tamed.
The second step concerns the 
uniqueness properties of twisted tamed maps with a given
monodromy representation. In the third step we use this information to
get the structural factorization of the domestic geometry.
This last property coincides with the VHS structure theorem whenever the
source manifold is K\"ahler.

\subparagraph{Existence of harmonic maps.}
The most classical existence theorem for harmonic maps is due to Eells and Sampson \cite{eells}; their strategy was to show
 existence and regularity of the tension flow \eqref{torflowa}: let $M$ and $N$ be \emph{compact} Riemannian manifolds and $N$ have non-positive sectional curvatures. Then every smooth map $f\colon M\to N$ is homotopic to a harmonic map. The harmonic map is essentially unique in its homotopy class\footnote{\ For the precise uniqueness statement see e.g.\! \cite{AL2}.} and it is the map of minimal energy in its class.
 
For the gauge coupling map 
\eqref{hasq1a}
the condition on the sectional curvatures of the target space
is satisfied, but  both the source and the target manifolds are non-compact. 
However they are expected to behave ``almost as they were compact'' because they have finite volume and enjoy other good properties. Thanks to these special properties, we may invoke other, more powerful, existence theorems.
\medskip

Our target space is locally symmetric; in this situation one has: 

\begin{thm}[Corlette \cite{corlette1}]
Let $G$ be a real Lie group, $K\subset G$ a maximal compact subgroup, and
$\Lambda\subset G$ any discrete subgroup. Assume the Riemann manifold $M$ is \emph{compact}. A harmonic map 
\be\label{hharmm}
\phi\colon M\to \Lambda\backslash G/K
\ee 
 exists if and only if the monodromy group
$\rho(\pi_1(M))\equiv\Gamma\subseteq \Lambda$ has reductive Zariski closure\footnote{\ We see the Lie group as an algebraic group over $\R$ through its adjoint representation.} $\overline{\Gamma}^{\mspace{2mu}\R}$ in $G$.
\end{thm}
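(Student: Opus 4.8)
The plan is to read a twisted harmonic map \eqref{hharmm} as a \emph{harmonic metric} on the flat $G$-bundle $E_\rho=\widetilde M\times_\rho G$ determined by the monodromy $\rho\colon\pi_1(M)\to\Lambda\subset G$. Concretely, a $\rho$-twisted map $\widetilde\phi\colon\widetilde M\to G/K$ (equivalently a $\rho$-equivariant map, cfr.\ the twisting condition \eqref{twisted}) is the same datum as a reduction of the structure group of $E_\rho$ to the maximal compact $K$, and $\widetilde\phi$ is harmonic precisely when this reduction is harmonic, i.e.\ when its tension field \eqref{torflowa} vanishes. Since $M$ is compact the energy $E(\widetilde\phi)$ is finite and descends to a functional on equivariant maps. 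I would then prove the two implications separately, the substantive one being existence from reductivity.

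For the existence direction I would run the tension (harmonic-map heat) flow \eqref{torflowa}, in its matrix form \eqref{flowa}, starting from an arbitrary smooth $\rho$-equivariant map of finite energy. Because the target $G/K$ is a Hadamard manifold (non-positive sectional curvature), the Eells--Sampson theory gives long-time existence and smoothness of the flow, while the energy is monotonically non-increasing, $dE/dt=-\|T(\widetilde\phi_t)\|^2\le 0$ (using $T=-\mathsf{grad}\,E$). The whole problem is thus reduced to \emph{convergence} as $t\to+\infty$: if the flow sub-converges in $C^\infty$ to a map $\widetilde\phi_\infty$, then $T(\widetilde\phi_\infty)=0$ and we are done.

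The heart of the argument is to show that reductivity of $\overline{\Gamma}^{\,\R}$ is exactly what rules out loss of convergence by escape to the boundary at infinity $\partial_\infty(G/K)$. Here I would argue by contradiction: if no subsequence converges, a standard concentration/blow-up analysis (non-positive curvature makes the distance and Busemann functions convex) extracts a geodesic ray along which the images drift, hence a point $\xi\in\partial_\infty(G/K)$ toward which the maps run off. Equivariance forces $\Gamma=\rho(\pi_1(M))$ to preserve the asymptotic data of this ray; since stabilizers of boundary points of a symmetric space of non-compact type are proper parabolic subgroups $P$, this yields a $\Gamma$-stable proper parabolic $P$ with the image accumulating in the direction of its unipotent radical. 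But then $\overline{\Gamma}^{\,\R}\subset P$ has non-trivial unipotent part and is non-reductive, contradicting the hypothesis. Hence the flow converges and the harmonic map exists.

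For the converse (existence of a harmonic map $\Rightarrow$ reductivity) I would argue contrapositively. If $\overline{\Gamma}^{\,\R}$ is not reductive it has a non-trivial unipotent radical, and by the Borel--Tits theorem its normalizer is a proper parabolic $P$ containing $\Gamma$; flowing the reduction toward the boundary stratum attached to $P$ strictly decreases $E$, so the functional has no critical point and \eqref{hharmm} admits no harmonic representative. The main obstacle in the whole scheme is precisely the convergence step: proving a priori compactness of the flow modulo the parabolic alternative---that is, that non-convergence \emph{necessarily} produces a $\Gamma$-stable parabolic---requires careful control of the asymptotic geometry of $G/K$ at infinity and is where all the hard analysis is concentrated; the algebraic Borel--Tits input and the heat-flow long-time existence are comparatively routine.
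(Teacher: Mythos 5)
The paper contains no proof of this statement: it is imported verbatim as Corlette's theorem, with \cite{corlette1} cited in lieu of an argument, so your attempt has to be judged against Corlette's proof rather than anything internal to the paper. Your general framework is the correct one --- twisted harmonic maps read as harmonic metrics on the flat bundle $E_\rho$, energy minimization or the Eells--Sampson flow (the latter is Donaldson's route in the $SL(2,\C)$ case), and an escape-to-infinity analysis producing a point $\xi\in\partial_\infty(G/K)$ fixed by $\Gamma$, whose stabilizer is a proper parabolic $P_\xi$.

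But the decisive step of your existence argument is false. From $\Gamma\subset P_\xi$ you infer that $\overline{\Gamma}^{\,\R}$ ``has non-trivial unipotent part and is non-reductive'', contradicting the hypothesis. Containment in a proper parabolic does not imply non-reductivity: the trivial group, any torus, and the Levi factors of $P_\xi$ itself are reductive subgroups of $P_\xi$. Moreover, escape to infinity genuinely occurs for reductive monodromy --- take $\rho$ trivial, where a minimizing sequence of constant maps may wander off to infinity --- so no contradiction is available at this point and the existence proof collapses. Reductivity enters Corlette's argument constructively, not by contradiction: once $\Gamma\subset P_\xi$, reductivity of $\overline{\Gamma}^{\,\R}$ implies (after conjugating by an element of the unipotent radical, via the Levi--Mostow decomposition) that $\Gamma$ lies in a Levi factor $L\subset P_\xi$; the symmetric space of $L$ sits totally geodesically in $G/K$ as a parallel set, the nearest-point projection onto that convex set is distance non-increasing and $\Gamma$-equivariant, hence does not increase energy, and one restarts the minimization there, inducting on the dimension of the group, while drift along the central (Euclidean) directions of $L$ is absorbed by the centralizer acting through energy-preserving isometries. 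The true dichotomy is ``converge, or reduce to a strictly smaller reductive problem'', never ``converge, or contradict reductivity''. Your converse direction is closer to the mark (Borel--Tits supplies the parabolic, and one shows the energy infimum is not attained, using that for non-positively curved targets harmonic maps minimize energy in their twisted class, cf.\ the convexity identity \eqref{poiqw12}); but the assertion that pushing toward the boundary ``strictly decreases $E$'' below every critical value is precisely what must be proved there, and you assert it rather than prove it.
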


\medskip

In our problem, eqn.\eqref{hasq1a}, the target space has the required form with
\be
G= Sp(2h,\R),\quad K=U(h),
\ee 
 while we may take $\Lambda\equiv \Gamma$ and neat with no loss.
 In the applications we have in mind
 the real Lie group\footnote{\ More correctly: the real Lie group underlying the $\R$-algebraic group $\overline{\Gamma}^{\mspace{2mu}\R}$ \cite{milneG}.} $G^\prime \equiv\overline{\Gamma}^{\mspace{2mu}\R}\subset G$ -- if non-trivial -- is semi-simple and non-compact: these properties follow from the swampland conditions on the monodromy.\footnote{\ By construction, in the relevant applications the group $\Gamma$ is a finite-index, neat, normal subgroup of the physical monodromy group.}
In view of the above \textbf{Theorem}, these ``swampy'' properties of $\Gamma$ guarantee the \emph{existence} of a twisted harmonic map $\phi$. 

The maps of domestic geometry are not just harmonic, they are tamed.
We turn to the question of \emph{tameness} of the harmonic map $\phi$.

\subparagraph{The map $\phi$ is automatically tamed.}
When the compact manifold $M$ has special holonomy 
$\mathfrak{hol}(M)\neq\mathfrak{so}(m)$ -- and hence a non-trivial algebra
 of parallel forms $\cp^\bullet$ --
 the twisted harmonic map $\phi$ is automatically tamed, that is,
\be\label{ttamea}
D\ast(d\phi\wedge\Omega)=0\quad\text{for all}\quad \Omega\in\cp^\bullet
\ee
(see \cite{corlette2}, \textbf{Theorem 3.1}). In particular
\begin{itemize}
\item[(a)] ${M}$ is K\"ahler $\Rightarrow$ $\phi$ is pluriharmonic
$\Rightarrow$ $\phi$ defines a $tt^*$ geometry (cfr.\! \S.\eqref{s:tt*});
\item[(b)] ${M}$ quaternionic K\"ahler of dimension $\geq8$, or symmetric not of the form $SO(n,1)/SO(n)$ or $SU(m,1)/U(m)$ $\Rightarrow$ $\phi$ is totally geodesic,
so its image is either a point or a locally symmetric space of the form $\Gamma^\prime\backslash L/K^\prime$ for $L \subset G$ a Lie subgroup \cite{helga}.
\end{itemize}

The proof of eqn.\eqref{ttamea} is \emph{via} a Bochner argument \cite{corlette2}
which generalizes the one based on eqn.\eqref{hasqwert}
which was valid under the assumption that $M$ is K\"ahler. 
 If $\phi$ is harmonic and the target space has non-positive curvature operators
(as in our case) one has an identity of the form \cite{corlette2}
\be\label{weitza}
\begin{split}
&d \Big(\!\ast \big(d\phi^i \wedge \Omega\big)\wedge D\ast \big(d\phi^j\wedge \ast\mspace{3mu} \Omega\big)\,g_{ij}\Big)=\\
&\quad=
(-1)^{m-1}\Big( |D\ast (d\phi \wedge \Omega)|^2 +\text{non-negative}\Big) d\mspace{1.2mu}\mathsf{vol}.
\end{split}
\ee
For $M$ compact the integral of the \textsc{lhs} vanishes, so $D\ast (d\phi \wedge \Omega)=0$ and 
the map is tamed.

{\subparagraph{Uniqueness properties I.}
One expects that the very same ``swampy'' conditions on $\Gamma$
which guarantee the existence of the tamed map $\phi$
also play the crucial role in the question of \emph{essential uniqueness} of the tamed map $\phi$ in its homotopy class (i.e.\! in the family of twisted maps with the given monodromy representation $\rho$). Here the main theme is to qualify
the adjective ``essential'', that is, to specify under which equivalence relation(s) all homotopic 
tamed maps get identified: the coarser the relation, the weaker the uniqueness property.
The weakest result
is:
\begin{quote}\it The ``swampy'' properties of $\Gamma$ imply that the 
space of tamed maps in the class of $\phi$ is homotopic to a single point.\footnote{\ \label{morse}
We sketch the argument.
Let $N$ be any Riemannian manifold (not necessarily compact) with $\pi_1(N)\equiv \pi$
and let
$\mathfrak{M}(N,\Gamma\backslash G/K)_f$ be the space of $C^2$-maps $N\to \Gamma\backslash G/K$ homotopic to the map $f$ ($\mathfrak{M}(N,\Lambda\backslash G/K)_f$ is endowed with the
$C^2$-topology). By Gottlieb lemma \cite{Got1,Got2} $\mathfrak{M}(N,\Gamma\backslash G/K)_f$ is a $K(C_{\pi,f},1)$
space with $C_{\pi,f}$
 the centralizer of $f_*(\pi)$ in $\Gamma$.  Under our ``swampy'' assumptions $C_{\pi,f}$ is trivial,
and $\mathfrak{M}(N,\Gamma\backslash G/K)_f$ has the homotopy type of a point.
We can compute the homotopy type of $\mathfrak{M}(N,\Gamma\backslash G/K)_f$ by Morse cobordisim
applied to the gradient flow of the energy $E(\phi)$ (i.e.\! to the tension flow). Under the present assumptions the energy $E(\phi)$ is a ``perfect Morse function'' in the sense that its Hessian at a critical point is a non-negative operator: see e.g.\! \textbf{Corollary 9.2.2} of \cite{JJost}. Then we conclude that the space of harmonic functions homotopic to $f$ is contractible.}
\footnote{\ If the condition of non-positive sectional curvatures is replaced by the stronger one of \emph{strictly negative} sectional curvatures, the space of tamed maps is a single point and not just
homotopic to a point \cite{AL2}. In the physical set-up this applies only when we have a single light photon i.e.\! $h=1$.}
\end{quote}}
\medskip

We look for a much stronger result: uniqueness up to the physically natural 
equivalence relation (see \textbf{Definition \ref{uuniess}} below). 
 We start by playing the devil's advocate, and see
 what  happens when the harmonic map is \emph{not} unique.
The following argument is modelled on the classical papers \cite{AL1,AL2};
in facts it reproduces the main points in the proof of \textbf{Theorem 9.7.2} 
of ref.\!\!\cite{JJost} whose hypotheses are identical to ours:
 \textit{(i)} compact source space,
 and \textit{(ii)} target space a complete Riemannian manifold with non-positive sectional curvatures. 
 (We go through the proof because its single steps are more useful for our purposes than the theorem itself).

Let $\phi_0$, $\phi_1$ be two distinct homotopic harmonic maps; one can construct a homotopy
\be\label{terqww}
\phi(x,t)\colon M\times [0,1]\to \Gamma\backslash G/K,\qquad
\left[\begin{aligned}
\phi(x,0)&=\phi_0(x)\\
\phi(x,1)&=\phi_1(x)
\end{aligned}\right. 
\ee
which is \emph{geodesic,} i.e.\! for fixed $x\in M$ the map 
\be
\phi(x,-)\colon [0,1]\to \Gamma\backslash G/K
\ee
is a geodesic. We write $E(t)$ for the energy of the map $\phi(-,t)\colon M\to \Gamma\backslash G/K$.
 A simple computation yields\footnote{\ \label{iiiiiuy} See \textbf{Corollary 9.2.1} in \cite{JJost}.
 For later reference we stress that  equation \eqref{forrrm} holds under the assumption that $\phi(t)$
 is a geodesic family of finite-energy maps (not necessarily harmonic) \emph{independently}
 of the assumption that the source space $M$ is compact (cfr.\! ref.\!\cite{JJost}).}
\be\label{forrrm}
\frac{\partial^2 E(t)}{\partial t^2}=\int_M d\,\mathsf{vol}\Big(\|\nabla_{\partial/\partial t} d\phi(t)\|^2-
g^{\alpha\beta} G_{ij}\, R_{ijkl}\,\partial_t\phi^i\partial_\alpha\phi^j \partial_t \phi^k\partial_\beta\phi^l\Big)
\ee
which, together with the vanishing of the first variation at a harmonic map
\be
\frac{\partial E(t)}{\partial t}\Big|_{t=0}=0,
\ee
yields 
\be\label{poiqw12}
\begin{split}
&E(\phi(t))-E(\phi_0)=\\
&=\int_0^t ds\int_0^s du 
\int_{M}d\,\mathsf{vol}\Big(\|\nabla_{\partial/\partial u} d\phi(u)\|^2-
g^{\alpha\beta} G_{ij}\, R_{ijkl}\,\partial_u\phi^i\partial_\alpha\phi^j \partial_u \phi^k\partial_\beta\phi^l\Big)\geq0,
\end{split}
\ee
where the inequality holds because the target space sectional curvatures are non-positive. Setting $t=1$
we get $E(\phi_1)\geq E(\phi_0)$;
inverting the r\^ole of the two maps we get $E(\phi_1)=E(\phi_0)$, so \emph{all} homotopic harmonic maps have the same energy $E(\phi_0)$, and hence they all realize the \emph{absolute minimum} of the energy in their class. Then each of the two terms inside the inner integral in the \textsc{rhs} of \eqref{poiqw12} is point-wise zero for all $u$'s. We conclude that 
\be
E(\phi(t))-E(\phi_0)=0\quad\text{for all }t,
\ee
and thus all maps $\phi(t)$ ($t\in[0,1]$) have the same minimal value $E(\phi_0)$ of the energy.
From the variational charaterization of harmonic maps we get:
\begin{pro}[\textbf{Theorem 9.7.2} \label{pppouer}
of ref.\!\cite{JJost}] The maps $\phi(-,t)\colon M\to \Gamma\backslash G/K$
form a one-parameter family of homotopic harmonic (hence tamed) maps.
\end{pro}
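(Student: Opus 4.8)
The plan is to deduce harmonicity of every slice $\phi(\cdot,t)$ from the variational characterization of harmonic maps, exactly as the phrasing preceding the statement suggests. The computation leading up to \eqref{poiqw12} has already delivered the two facts I need. First, $E(\phi(\cdot,t))=E(\phi_0)$ is \emph{constant} along the geodesic homotopy \eqref{terqww}: the inner integrand in \eqref{poiqw12} is pointwise non-negative (non-positive target curvature plus a squared norm), and its total integral vanishes at $t=1$ since $E(\phi_1)=E(\phi_0)$, forcing it to vanish for every $u\in[0,1]$ and hence $E(t)\equiv E(\phi_0)$. Second, we have already established that $E(\phi_0)$ is the \emph{absolute} minimum of the energy over the homotopy class $[\phi_0]$ (all homotopic harmonic maps share this energy, and by Eells--Sampson \cite{eells}/Corlette \cite{corlette1} the class contains an energy-minimizing harmonic map). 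Since each $\phi(\cdot,t)$ lies in $[\phi_0]$ by construction and attains the value $E(\phi_0)=\inf_{[\phi_0]}E$, every slice is an absolute energy minimizer in its homotopy class.

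The core step is then immediate: an absolute minimizer of the (Fr\'echet-differentiable) energy functional is a critical point, so its first variation vanishes against \emph{all} admissible variations, not merely against the $t$-direction. By the first-variation formula this reads $T[\phi(\cdot,t)]=0$, i.e.\ each $\phi(\cdot,t)$ is harmonic. Here I would also record the corroborating pointwise identities extracted from the vanishing integrand, in particular $\|\nabla_{\partial/\partial u}\,d\phi(u)\|^2\equiv 0$, which says that the variation field $\partial_t\phi$ is parallel along $M$ for each fixed $t$; this rigidity makes transparent why no energy is gained or lost as $t$ varies and why the interpolating maps cannot fail to be critical.

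Two regularity points must be dispatched to make the argument rigorous. First, the geodesic homotopy \eqref{terqww} itself: lifting $\phi_0,\phi_1$ to the universal cover as $\rho$-twisted maps into $G/K$ and using that $G/K$ is Hadamard (unique geodesics between points, depending smoothly on their endpoints), the slice $\phi(x,-)$ is well-defined, the whole homotopy is smooth and $\Gamma$-equivariant, and therefore descends to $\Gamma\backslash G/K$. Second, once a slice is known to minimize energy, interior elliptic regularity for harmonic maps into non-positively curved targets upgrades it to a smooth map, so $\{\phi(\cdot,t)\}_{t\in[0,1]}$ is a genuine smooth one-parameter family of harmonic maps. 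Finally, tameness is not an extra input: since the source $M$ is compact and carries its algebra $\cp^\bullet$ of parallel forms, the automatic-tameness result \cite{corlette2}, namely \eqref{ttamea}, applies verbatim to each harmonic slice, so every $\phi(\cdot,t)$ is tamed.

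I expect the main obstacle to be the regularity bookkeeping rather than the conceptual core: ensuring that the geodesic interpolation is smooth enough for the energy functional to be differentiable along \emph{all} variations (so that ``minimizer $\Rightarrow$ critical point'' is legitimate in the relevant Sobolev/$C^2$ setting) and verifying $\Gamma$-equivariance of the lifted homotopy. The harmonicity of the interior slices, by contrast, is a one-line consequence of the minimization property once the constancy of $E(t)$ and the identification of $E(\phi_0)$ with the global minimum are in hand.
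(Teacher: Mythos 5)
Your proposal is correct and takes essentially the same route as the paper: constancy of $E(t)$ along the geodesic homotopy forced by the pointwise vanishing of the integrand in \eqref{poiqw12}, identification of $E(\phi_0)$ with the absolute minimum of the energy in the homotopy class, and the conclusion that each slice, being a smooth absolute minimizer, is a critical point of the energy functional and hence harmonic, with tameness supplied by the compact-source automatic-tameness result \eqref{ttamea}. Your added regularity bookkeeping (smoothness and $\Gamma$-equivariance of the geodesic interpolation in the Hadamard cover) only makes explicit what the paper's one-line appeal to the variational characterization leaves implicit.
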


A harmonic map $\phi$ is a stationary classical soliton of
 the  $\sigma$-model with target $\Gamma\backslash G/K$
 defined in the ``space-time'' $\R_\text{time}\times M$
whose mass is $E(\phi)$.
\textbf{Proposition \ref{pppouer}} says that whenever the twisted tamed map $\phi$ is not unique in its class,
 we have a continuous family of $\sigma$-model solitons
  which are degenerated in mass and carry the same topological charges.
As physicists we do not expect such a huge degeneracy, unless it is a consequence of a symmetry of the $\sigma$-model. 
We make this physical intuition into a math definition: 
\begin{defn}\label{uuniess} We say that the tamed map $\phi\colon M\to\Gamma\backslash G/K$
is \emph{essentially unique} in its homotopy class
 iff all tamed maps homotopic to it are obtained from $\phi$
by the action of a symmetry of the associated $\sigma$-model. 
\end{defn}

We expect the tamed map to be essentially unique in this precise sense.
This physical intuition turns out to be fully correct: see \textbf{Lemma \ref{LLLem}} below.
As a preliminary step we need to discuss the way symmetry acts.

 \subparagraph{Action of symmetry.} The subgroup of the $\sigma$-model symmetry  which leaves invariant the monodromy representation $\rho$ -- that is, which commutes with the
 topological charges --
  is the centralizer $C\subset G$ of $\Gamma$. $C$ is a real Lie group\footnote{\ Indeed an algebraic group over $\R$.}
identified with the centralizer of $G^\prime\subset G$
(where, as before, $G^\prime\equiv \overline{\Gamma}^{\,\R}$). When $G^\prime$ is reductive (as required for $\phi$ to exists) but not necessarily semi-simple
\be
G^\prime = Z(G^\prime)\times G^\prime_{ss},\qquad G^\prime_{ss}\ \ \text{semi-simple}
\ee
 $C$ is the center $Z(G^\prime)$ of $G^\prime$ times the Lie group $G^{\prime\prime}$
 such that $G^\prime\times G^{\prime\prime}$ is a maximal subgroup of $G$: 
  \be
C=Z(G^\prime)\times G^{\prime\prime},\qquad  G^\prime\times G^{\prime\prime}\hookrightarrow
G\ \text{maximal}.
 \ee 
 Let $K^\prime\subset G^\prime$, $K^{\prime\prime}\subset G^{\prime\prime}$ be maximal compact subgroups, and $K\subset G$ a maximal compact subgroup containing $K^\prime\times K^{\prime\prime}$.
 We have a chain of totally geodesic embeddings
 \be\label{poiunz}
 \Gamma\backslash G^\prime/K^\prime \xrightarrow{\ \;\iota_2\ } \Gamma\backslash G^\prime/K^\prime\;
 \text{\begin{LARGE}$\times$\end{LARGE}}\; G^{\prime\prime}/K^{\prime\prime}
 \xrightarrow{\ \;\iota_1\ } \Gamma\backslash G/K.
 \ee
Since $\Gamma$ is neat, by the Cartan-Hadamard theorem all three locally symmetric manifolds in
eqn.\eqref{poiunz}
are $K(\Gamma,1)$ spaces; then by the Whitehead theorem \cite{altop} there exists 
a chain of deformation retractions 
\be
\Gamma\backslash G/K\xrightarrow{\ \;r_1\ }
\Gamma\backslash G^\prime/K^\prime\;
 \text{\begin{LARGE}$\times$\end{LARGE}}\; G^{\prime\prime}/K^{\prime\prime}
 \xrightarrow{\ \;r_2\ }
\Gamma\backslash G^\prime/K^\prime.
\ee
Let
\be\label{arg1}
\phi\colon M\to \Gamma\backslash G/K
\ee
be a twisted tamed map with monodromy representation $\rho$.
We consider the two maps
\be
f_1\equiv r_1\circ \phi\colon M\to \Gamma\backslash G^\prime/K^\prime\;
 \text{\begin{LARGE}$\times$\end{LARGE}}\; G^{\prime\prime}/K^{\prime\prime},
 \qquad f_2\equiv r_2\circ r_1\circ \phi\colon M\to  \Gamma\backslash G^\prime/K^\prime,
\ee 
which (by construction) are twisted by the same $\rho$. By \textbf{Theorem} \eqref{hharmm} there exist tamed maps
$\phi_1$ and $\phi_2$ homotopic (respectively) to $f_1$, $f_2$. Since $\iota_1$, $\iota_2$ are totally geodesic,
we obtain three tamed maps $M\to \Gamma\backslash G/K$ twisted by the same $\rho$
\be
\phi,\qquad \iota_1\phi_1,\qquad \iota_1\iota_2\phi_2,
\ee
where the second (third) is a tamed map whose image is fully contained in the submanifold
 $\Gamma\backslash G^\prime/K^\prime\;
\times\; G^{\prime\prime}/K^{\prime\prime}$ (resp.\! $\Gamma\backslash G^\prime/K^\prime$)
of $\Gamma\backslash G/K$. By its very definition the third tamed map factorizes as
\be\label{iiuyqq}
M\xrightarrow{\ \iota_2\phi_2\ } \Gamma\backslash G^\prime/K^\prime\;
 \text{\begin{LARGE}$\times$\end{LARGE}}\; G^{\prime\prime}/K^{\prime\prime}\xrightarrow{\ \iota_1\ } \Gamma\backslash G/K,
\ee
while the projection of the first arrow $\iota_2\phi_2$ in the factor space $G^{\prime\prime}/K^{\prime\prime}$
is a constant map. The other projection is just the tamed map
\be
M\xrightarrow {\ \phi_2\ } G^\prime/K^\prime \equiv Z(G^\prime)\big/[K^\prime \cap Z(G^\prime)]
 \text{\begin{LARGE}$\times$\end{LARGE}}\, G^\prime_{ss}\big/[K^\prime\cap G^\prime_{ss}]
\ee 
which is identified with a pair of solitons for the two $\sigma$-models with respective target spaces the two
factors in the \textsc{rhs}. The first factor space is  a locally flat Abelian group $A$. The ``swampy'' conditions on $\Gamma$ say that this Abelian factor is absent.
For $\Gamma$ just reductive, the $\sigma$-model soliton \eqref{iiuyqq} decomposes in three
items: \textit{(i)} a constant
map into $G^{\prime\prime}/K^{\prime\prime}$, \textit{(ii)} a soliton of the
$\sigma$-model with target $A$, and \textit{(iii)} a soliton of the $\sigma$-model 
with target the semi-simple part of the double coset $\Gamma\backslash G^\prime/K^\prime$.

The symmetry $Z(G^\prime)\times G^{\prime\prime}$ acts
on the $\sigma$-model soliton \eqref{iiuyqq} by translations of the image of
the constant map in $G^{\prime\prime}/K^{\prime\prime}$ and shift symmetries 
of the Abelian soliton.\footnote{\
The fundamental group of the family of solitons produced by acting on \eqref{iiuyqq}
with the symmetry group is $\Gamma\cap Z(G^\prime)$, so that this family
is homotopic to the space of all solitons with target space $\Gamma\backslash G/K$, cfr.\!
footnote \ref{morse}.} In the ``swampy'' case the Abelian sector is absent,
and the symmetry produces out of the solution \eqref{iiuyqq}
a $G^{\prime\prime}/K^{\prime\prime}$-family of harmonic solitons. \textbf{Proposition \ref{pppouer}}
requires 
this family to decompose into geodesic sub-families; they have the form
\be
M\times [0,1]\xrightarrow{\ \;\phi_2\,\times\, \lambda\ \;} \Gamma\backslash G^\prime/K^\prime
\text{\begin{LARGE}$\times$\end{LARGE}}\, G^{\prime\prime}/K^{\prime\prime}\xrightarrow{\ \;\iota_1\ }
\Gamma\backslash G/K
\ee
with $\lambda\colon[0,1]\to G^{\prime\prime}/K^{\prime\prime}$ a geodesic arc connecting two points in $G^{\prime\prime}/K^{\prime\prime}$.

 \subparagraph{Uniqueness properties II.}
 We have
 \begin{lem}\label{LLLem}$M$ compact. Suppose $\Gamma$ satisfies the ``swampy'' conditions, i.e.\!
 $G^\prime\equiv\overline{\Gamma}^{\,\R}$ is a non-compact semi-simple algebraic subgroup of $G$. Let 
 \be
 \phi_0,\phi_1\colon M\to \Gamma \backslash G/K
 \ee
 be two twisted harmonic (hence tamed) maps in the same homotopy class ($\equiv$
 twisted by the same monodromy representation $\rho$). Then there is an element
 $g\in G^{\prime\prime}$ of the $\sigma$-model symmetry group such that
 \be
 \phi_1= g\cdot \phi_0.
 \ee
 \end{lem}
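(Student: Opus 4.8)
The plan is to feed the two given harmonic maps into \textbf{Proposition \ref{pppouer}} and then to recognize the resulting geodesic variation as an infinitesimal isometry coming from the centralizer $G''$. First I would connect $\phi_0$ and $\phi_1$ by a geodesic homotopy $\phi(x,t)$ as in \eqref{terqww}; this is legitimate because the target $\Gamma\backslash G/K$ is complete of nonpositive curvature, so geodesics between homotopic configurations exist and are unique (Cartan--Hadamard). Since $\phi_0,\phi_1$ are harmonic and homotopic, the computation leading to \eqref{poiqw12} shows that $E(\phi(-,t))$ is constant and equals the absolute minimum of the energy in the class, whence by \textbf{Proposition \ref{pppouer}} every $\phi(-,t)$ is itself harmonic (and tamed). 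Constancy of the energy forces the manifestly nonnegative integrand of \eqref{poiqw12} to vanish pointwise, which yields the two conditions $\nabla_{\partial_t}\,d\phi(t)\equiv 0$ and $g^{\alpha\beta}G_{ij}R_{ijkl}\,\partial_t\phi^i\partial_\alpha\phi^j\partial_t\phi^k\partial_\beta\phi^l\equiv 0$ for all $t$.

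Writing $V_t\equiv\partial_t\phi(-,t)$, torsion-freeness of the pulled-back connection converts the first condition into $\nabla_{\partial_\alpha}V_t=0$ for every direction $\alpha$ tangent to $M$, while the geodesic property of the homotopy gives $\nabla_{\partial_t}V_t=0$; thus $V_t$ is a \emph{parallel} variation field. I would then lift to the universal covers, $\widetilde\phi\colon\widetilde M\times[0,1]\to G/K$, so that for each $x$ the curve $t\mapsto\widetilde\phi(x,t)$ is a geodesic $\exp\!\big(t\,\xi(x)\big)\cdot\widetilde\phi(x,0)$ for a suitable transvection $\xi(x)\in\mathfrak g$. The parallelism $\nabla_\alpha V=0$ together with the vanishing of the curvature term in the integrand of \eqref{poiqw12}---which says that $V$ and the image of $d\phi$ span flat $2$-planes---is exactly the statement that $V$ is the restriction to $\widetilde\phi(\widetilde M)$ of a Killing field $\widehat\xi$ of $G/K$ generated by a \emph{fixed} $\xi\in\mathfrak g$: on a symmetric space of noncompact type a parallel, curvature-isotropic Jacobi field along a submanifold extends to such a transvection field. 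Because the entire homotopy is twisted by the single representation $\rho$ (the common monodromy), $V$ is $\Gamma$-equivariant, so $\widehat\xi$ commutes with the $\Gamma$-action, i.e.\ $\xi$ lies in the Lie algebra of the centralizer $C$ of $G'\equiv\overline\Gamma^{\,\R}$ discussed in the paragraph on the action of symmetry.

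Finally I would invoke the swampy hypothesis. Since $G'$ is semisimple, $Z(G')$ is finite and $C=G''$, with $G'\times G''\hookrightarrow G$ maximal; hence $\xi\in\mathfrak g''\equiv\mathrm{Lie}(G'')$ and $\widehat\xi$ integrates to the one-parameter subgroup $\exp(t\xi)\subset G''$, acting on $\Gamma\backslash G/K$ as a monodromy-preserving $\sigma$-model symmetry. As $V_t$ is the velocity of $\phi(-,t)$ and is generated by this fixed Killing field, the whole family is $\phi(-,t)=\exp(t\xi)\cdot\phi_0$; evaluating at $t=1$ gives $\phi_1=g\cdot\phi_0$ with $g=\exp(\xi)\in G''$, as claimed. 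The main obstacle is the middle step: rigorously upgrading the abstract parallel Jacobi field $V$ to the restriction of a \emph{global} Killing field lying in the centralizer. This is where the noncompact symmetric-space geometry (flats and transvections) and the $\rho$-equivariance must be combined carefully, and it is precisely the point at which the semisimplicity of $G'$ removes any residual continuous freedom (the $Z(G')$ factor) and pins the variation to $G''$.
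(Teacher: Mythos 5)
Your architecture is the same as the paper's: geodesic homotopy, constancy of the energy via \textbf{Proposition \ref{pppouer}}, pointwise vanishing of the two non-negative terms in \eqref{poiqw12}, and identification of the resulting variation field with a Killing field commuting with $\Gamma$. The problem is that the step you yourself flag as ``the main obstacle'' is exactly where the entire content of the proof lies, and you leave it as an unproven assertion. The claim that a variation field which is parallel along $M$ and spans flat $2$-planes with the image of $d\phi$ ``extends to the restriction of a global Killing field with generator constant along $\widetilde M$'' is not a quotable textbook fact about symmetric spaces; it is precisely what must be computed. The paper does it as follows: choose a global section $s\colon G/K\to G$ (it exists since $G/K$ is contractible), set $\Phi_t=s\circ\tilde\phi_t$ and $\delta_t=(\Phi_t^{-1}\partial_t\Phi_t)_{\mathfrak p}$; in this trivialization the Levi-Civita connection is $d+(s^{-1}ds)_{\mathfrak k}$ (eqn.\,\eqref{exp1}) and the relevant curvature term is $\bigl\|\bigl[\iota_v(s^{-1}ds)_{\mathfrak p},\iota_w(s^{-1}ds)_{\mathfrak p}\bigr]\bigr\|^2$ (eqn.\,\eqref{exp2}), so the two pointwise conditions become
\begin{equation}
d\delta_t+\bigl[(\Phi_t^{-1}d\Phi_t)_{\mathfrak k},\delta_t\bigr]=0,
\qquad
\bigl[(\Phi_t^{-1}d\Phi_t)_{\mathfrak p},\delta_t\bigr]=0 ,
\end{equation}
whose sum is $d\bigl(\Phi_t\delta_t\Phi_t^{-1}\bigr)=0$. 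Hence $\Phi_t\delta_t\Phi_t^{-1}=v_t\in\mathfrak g$ is constant along $\widetilde M$, and deck equivariance $\xi^*\Phi_t=\rho(\xi)\Phi_t$ forces $\mathrm{Ad}(\rho(\xi))\,v_t=v_t$ for all $\xi$, i.e.\ $v_t\in\mathfrak g''$. Without this (short but essential) Maurer--Cartan computation your argument restates the Lemma rather than deriving it.

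There is also a smaller but genuine slip at the end: you assert that a single fixed $\xi$ generates the variation for \emph{all} $t$, so that $\phi(-,t)=\exp(t\xi)\cdot\phi_0$ and $g=\exp(\xi)$. Nothing in the argument excludes a $t$-dependence of the generator; the paper keeps it, uses a $K$-gauge transformation to bring the flow to the form $\partial_t\Phi_t=v_t\,\Phi_t$, and integrates to
\begin{equation}
\Phi_1=P\exp\!\left(\int_0^1 v_t\,dt\right)\cdot\Phi_0,
\end{equation}
a path-ordered exponential which lies in $G''$ because each $v_t\in\mathfrak g''$. Your conclusion $g\in G''$ survives this correction, but the stronger one-parameter-orbit statement is unjustified (and unnecessary for the Lemma). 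Relatedly, calling the generators ``transvections'' is imprecise: the pointwise geodesics are generated by conjugates of elements of $\mathfrak p$, but the global Killing field you need is generated by $v_t\in\mathfrak g''$, which in general is not a transvection generator.
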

 
Since we have already proven that in the homotopy class of $\phi_0,\phi_1$
 there is at least one $G^{\prime\prime}/K^{\prime\prime}$-family
of tamed maps generated by a factorized tamed map 
\be
M\xrightarrow{\ \;\phi_\ast\ } \Gamma\backslash G^\prime/K^\prime
\xrightarrow{\ \;\iota_1\iota_2\ } \Gamma\backslash G/K,
\ee
while this family form a \emph{full} $G^{\prime\prime}$-orbit, we conclude
\begin{corl} Under the assumptions of the {\rm\textbf{Lemma},}
the set of tamed maps 
\be
\phi\colon M\to \Gamma\backslash G/K
\ee
twisted by $\rho$ is given by 
a \emph{unique} $G^{\prime\prime}/K^{\prime\prime}$-family of  maps
factorized as in eqn.\eqref{iiuyqq}.
In particular $\phi$ is \emph{essentially unique} in its homotopy class in the sense of
{\bf Definition \ref{uuniess}}. 
\end{corl}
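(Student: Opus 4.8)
The plan is to deduce the Corollary directly from \textbf{Lemma \ref{LLLem}} together with the structural analysis carried out in the preceding ``Action of symmetry'' discussion. First I would fix notation: write $\mathcal{T}_\rho$ for the set of $\rho$-twisted tamed maps $M\to\Gamma\backslash G/K$. The relevant $\sigma$-model symmetry is the centralizer $C$ of $\Gamma$ in $G$; under the ``swampy'' hypotheses $G^\prime\equiv\overline{\Gamma}^{\,\R}$ is semi-simple, so its center is finite and $C$ reduces, modulo finite groups, to the complementary factor $G^{\prime\prime}$ with $G^\prime\times G^{\prime\prime}\hookrightarrow G$ maximal. Because $G^{\prime\prime}$ commutes with $\Gamma$, its left action on $\Gamma\backslash G/K$ preserves the monodromy $\rho$, and being realized by isometries it carries tamed maps to tamed maps; hence $\mathcal{T}_\rho$ is stable under $G^{\prime\prime}$.

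Second, I would take as a distinguished member of $\mathcal{T}_\rho$ the factorized tamed map $\iota_1\iota_2\phi_\ast$ of eqn.\eqref{iiuyqq}, with $\phi_\ast\colon M\to\Gamma\backslash G^\prime/K^\prime$, whose $G^{\prime\prime}/K^{\prime\prime}$-component is the constant map at the base point $eK^{\prime\prime}$. By \textbf{Lemma \ref{LLLem}} every $\phi\in\mathcal{T}_\rho$ equals $g\cdot(\iota_1\iota_2\phi_\ast)$ for some $g\in G^{\prime\prime}$, so $G^{\prime\prime}$ acts transitively on $\mathcal{T}_\rho$. I would then compute the stabilizer: an element $g\in G^{\prime\prime}$ fixes $\iota_1\iota_2\phi_\ast$ iff it fixes the base point of the constant $G^{\prime\prime}/K^{\prime\prime}$-component, i.e.\ iff $g\in K^{\prime\prime}$; the semi-simple component $\phi_\ast$ is untouched by $G^{\prime\prime}$ and is itself rigid, since $\Gamma$ is Zariski-dense in $G^\prime$ and hence has finite centralizer there. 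Consequently $\mathcal{T}_\rho\cong G^{\prime\prime}/K^{\prime\prime}$ as a homogeneous $G^{\prime\prime}$-space, each member factorizing as in \eqref{iiuyqq}. Essential uniqueness in the sense of \textbf{Definition \ref{uuniess}} is then immediate, because all elements of $\mathcal{T}_\rho$ are obtained from the single map $\iota_1\iota_2\phi_\ast$ by the $\sigma$-model symmetry $G^{\prime\prime}$.

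The main obstacle I anticipate is not the orbit--stabilizer bookkeeping but the transitivity input itself, namely that $\mathcal{T}_\rho$ is a \emph{single} $G^{\prime\prime}$-orbit and not a union of several. This is precisely the content of \textbf{Lemma \ref{LLLem}}, which rests on the second-variation identity \eqref{forrrm} for a geodesic homotopy joining two harmonic maps, combined with the non-positivity of the target curvature; the delicate point there is that \eqref{poiqw12} forces \emph{every} intermediate map of the geodesic family to be harmonic and of minimal energy, after which the symmetry analysis confines the allowed deformation to the flat $G^{\prime\prime}/K^{\prime\prime}$ direction. A secondary subtlety worth isolating explicitly is the rigidity of the $\Gamma\backslash G^\prime/K^\prime$-component $\phi_\ast$: since $\Gamma$ is Zariski-dense in the semi-simple group $G^\prime$, its centralizer in $G^\prime$ is the finite center, so no continuous family of tamed maps into $\Gamma\backslash G^\prime/K^\prime$ exists — this is what guarantees that $\mathcal{T}_\rho$ is parametrized by $G^{\prime\prime}/K^{\prime\prime}$ alone and carries no further moduli.
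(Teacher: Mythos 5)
Your proof is correct and follows essentially the same route as the paper: it combines the previously constructed factorized tamed map of eqn.\eqref{iiuyqq} (existence) with \textbf{Lemma \ref{LLLem}} (transitivity of the $G^{\prime\prime}$-action) to conclude that the set of $\rho$-twisted tamed maps is a single $G^{\prime\prime}$-orbit of factorized maps. Your explicit orbit--stabilizer computation identifying the stabilizer with $K^{\prime\prime}$, and the Zariski-density argument for the rigidity of the component $\phi_\ast$, merely make explicit what the paper leaves implicit in calling this orbit a $G^{\prime\prime}/K^{\prime\prime}$-family.
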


\begin{proof}[Proof of the {\bf Lemma}]
Preliminary we recall from the previous sections some facts about the non-compact
symmetric space $G/K$. Let $\mathfrak{g}=\mathfrak{k}\oplus \mathfrak{p}$
be the orthogonal decomposition of the Lie algebra of $G$ (with $\mathfrak{k}\equiv \mathfrak{Lie}(K)$).
The tangent bundle $T(G/K)$ is the homogeneous bundle associated to
the $K$-module $\mathfrak{p}$. We choose a global section $s\colon G/K \to G$
(it exists since $G/K$ is contractible). The Levi-Civita connection of $G/K$
(in the trivialization of $T(G/K)$ defined by the chosen $s$) is
\be\label{exp1}
\nabla=d+(s^{-1}ds)_{\mathfrak{k}}
\ee 
where as before the subscript means orthogonal projection to the subspace
$\mathfrak{k}\subset\mathfrak{g}$. Then
\be\label{exp2}
-R_{ijkl}\mspace{1mu} v^i w^j v^k w^l= \Big\| \big[\iota_v(s^{-1}ds)_{\mathfrak{p}}, \iota_w(s^{-1}ds)_{\mathfrak{p}}\big]\Big\|^2\geq0.
\ee

Now let $\phi_0,\phi_1\colon M\to \Gamma\backslash G/K$ be 
two tamed maps twisted by the same $\rho$. We write $\phi_t\equiv \phi(-,t)$ 
for the geodesic family of interpolating tamed maps in {\bf Proposition
\ref{pppouer}}, parametrized by $t\in[0,1]$, and consider its lift to the universal cover $\tilde M$ of $M$
\be
\tilde\phi_t\colon \tilde M\times [0,1]\to G/K.
\ee
$\tilde\phi_t$ is $\rho$-twisted, that is, if $\pi_1(M)\ni \xi\colon \tilde M\to \tilde M$ is an element of the deck group
of the cover $\tilde M\to M$, we have
\be
\xi^*\mspace{2mu}\tilde\phi_t= \rho(\xi)\cdot \tilde\phi_t.
\ee
For each fixed $t\in [0,1]$ we define the $\rho$-twisted map 
\begin{gather}
\Phi_t\equiv s\circ \tilde\phi_t\colon \tilde M\to G \\
\xi^*\mspace{2mu}\Phi_t= \rho(\xi)\cdot \Phi_t.\label{rrhtwis}
\end{gather}

We study how the $\rho$-twisted map
$\Phi_t$ varies under an infinitesimal deformation of the parameter $t$.
The infinitesimal deformation is given by the vector field
\be\label{uuuytqw}
\delta_t\equiv (\Phi^{-1}_t\partial_t \Phi_t)_{\mathfrak{p}}
\ee
which is a section of the pulled back tangent bundle
\be
\tilde\phi_t^*\mspace{2mu} T(G/K)\to \tilde M
\ee
equipped with the natural pulled back Levi-Civita connection on $G/K$. 

We know from the proof of {\bf Proposition
\ref{pppouer}} that for all $t\in[0,1]$ each of the two non-negative
terms in the integrand in the \textsc{rhs} of \eqref{poiqw12}
vanishes point-wise for all $t$. Using the explicit expressions
for the connection and curvature on $G/K$, eqns.\eqref{exp1} and \eqref{exp2},
 these two
conditions become
\be
d\delta_t+\big[(\Phi_t^{-1}d\Phi_t)_{\mathfrak{k}},\delta_t\big]=
\Big\|\big[(\Phi_t^{-1}d\Phi_t)_{\mathfrak{p}},\delta_t\big]\Big\|^2=0,
\ee
which may be combined in the single equation
\be
d\delta_t+\big[\Phi_t^{-1}d\Phi_t,\delta_t\big]=0\quad\Rightarrow
\quad d\big(\Phi_t \delta_t \Phi_t^{-1}\big)=0,
\ee
whose solution is
\be\label{009aa1}
\Phi_t \delta_t \Phi_t^{-1}= v_t\in \mathfrak{g} \ \ \text{constant along $\tilde M$}.
\ee

Consistency of eqn.\eqref{009aa1}  with the
action of the deck group \eqref{rrhtwis}
requires that the element $v_t\in\mathfrak{g}$
belongs to the kernel of the adjoint action of $\Gamma\equiv \rho(\pi_1(M))$. By definition,
this kernel is nothing else than the Lie algebra $\mathfrak{g}^{\prime\prime}$ of
$G^{\prime\prime}\subset G$; then
\be
v_t\in\mathfrak{g}^{\prime\prime}\quad\text{for all $t\in[0,1]$.}
\ee 
Comparing eqns.\eqref{uuuytqw} and 
\eqref{009aa1} we get
\be\label{jjjjas1v}
\begin{aligned}
\delta_t\equiv \Phi_t^{-1}v_t\mspace{1mu}\Phi_t\in\mathfrak{p},\quad&\Rightarrow\quad
\Big(\Phi_t^{-1}\partial_t \Phi_t -\Phi_t^{-1} v_t\mspace{1mu} \Phi_t\Big)_{\!\mathfrak{p}}=0\\
&\Rightarrow\quad \Phi_t^{-1}\partial_t \Phi_t -\Phi_t^{-1} v_t\mspace{1mu} \Phi_t\in \mathfrak{k}.
\end{aligned}
\ee
Therefore by a $K$-gauge transformation (i.e.\! by a $t$-dependent change of trivialization $s$ of the pulled back
tangent bundle) of the form
\be
\Phi_t\to \Phi_t\,U_t,\qquad U_t\in K,
\ee 
we may set to zero the expression inside the big parenthesis in eqn.\eqref{jjjjas1v}; then 
\be
\frac{\partial}{\partial t}\Phi_t= v_t\,\Phi_t,\quad\text{with}\quad v_t\in\mathfrak{g}^{\prime\prime}
\ \text{constant along $\tilde M$ (but depending on $t$)},
\ee
so that the infinitesimal deformation in the geodesic family of tamed maps
is produced by the action of an infinitesimal $G^{\prime\prime}$-symmetry.
Integrating in $t$ we get
\be
\Phi_1 = P\mspace{-1mu}\exp\!\left(\int_0^1 v_t\, dt\right)\cdot\Phi_0,\quad\text{with}\quad
P\mspace{-1mu}\exp\!\left(\int_0^1 v_t\, dt\right)\in G^{\prime\prime},
\ee 
which is equivalent to the \textbf{Lemma}.\end{proof}

\subparagraph{Structure and rigidity of tamed maps.}
The two main results we are after are now easy consequences of the \textbf{Corollary}.

 \begin{stru} $M$ \emph{compact.} Let $G^\prime\equiv \overline{\rho(\pi_1(M))}^{\mspace{1.5mu}\R}\subset G$ be the $\R$-Zariski closure of the monodromy group
 which we assume to be either trivial or semi-simple. Let $G^{\prime\prime}\subset G$
be the centralizer of $G^\prime$ in $G$, $K^\prime \subset G^\prime$, $K^{\prime\prime} \subset G^{\prime\prime}$ maximal compact subgroups, and $K\subset G$ a maximal compact subgroup
 containing $K^\prime\times K^{\prime\prime}$. 
Then the harmonic (in fact tamed) map $\phi\colon M\to \Lambda\backslash G/K$ (resp.\! the covering $\rho$-twisted tamed map $\tilde\phi$)
 factors as 
\be\label{oop}
\begin{aligned}
&\phi\colon &&M\xrightarrow{\ \phi^\prime\,\times\,\phi^{\prime\prime}\ } \Gamma\backslash G^\prime/K^\prime
\,\text{\begin{LARGE}$\times$\end{LARGE}}\, G^{\prime\prime}/K^{\prime\prime}\xrightarrow{\ \;\iota_1\ } \Gamma \backslash G/K,\\ 
\mathit{resp.}\quad &\tilde\phi\colon &&\widetilde{M}\xrightarrow{\ \tilde\phi^\prime\,\times\,\phi^{\prime\prime}\ } G^\prime/K^\prime\,\text{\begin{LARGE}$\times$\end{LARGE}}\, G^{\prime\prime}/K^{\prime\prime}\xrightarrow{\ \;\tilde\iota_1\ } G/K,
\end{aligned}
\ee
where the $\phi^{\prime\prime}\colon M\to G^{\prime\prime}/K^{\prime\prime}$ is a \emph{constant} map.
\end{stru}

Were it not for the assumption that $M$ is compact, the above statement would be identical in form
to the structure theorem 
for the period map $p$ in Hodge theory \cite{periods,reva,revb,MT4} which, in general, factorizes as in the commutative diagram
\be\label{permapa}
\begin{gathered}
\xymatrix{
\mathscr{M}\ar[r]\ar@/_6.3pc/[drr]^(0.6){\mu}\ar@/^2pc/[rr]^{p}\ar[dr]^(0.6){\phi^\prime}& \Gamma\backslash G^\prime/[H\cap G^\prime] \times G^{\prime\prime}/[H\cap G^{\prime\prime}]\ar@{->>}[d]\mspace{3mu} \ar@{^{(}->}[r] & \Gamma\backslash G/H\ar@{->>}[d]\\
&\Gamma\backslash G^\prime/[K\cap G^\prime]\times G^{\prime\prime}/[H\cap G^{\prime\prime}] \mspace{2mu}\ar@{^{(}->}[r] & \Gamma\backslash G/K}
\end{gathered}
\ee
where again the maps into the factor space $G^{\prime\prime}/[H\cap G^{\prime\prime}]$
are constant.
Here for a Hodge structure of \emph{odd} weight $k$ and Hodge numbers $\{h^{p,q}\}$
\cite{Gbook,periods,reva,revb}
\be\label{perdomains}
H=\prod_{p+q=k\atop
p<k/2}U(h^{p,q})\subset U(h)\equiv K,\quad \text{and}\quad G=Sp(2h,\R),\quad \text{with}\quad 2h=\sum_{p+q=k}h^{p,q}.
\ee 

The Hodge-theoretic period map $p$ satisfies, in addition to structural factorization, 
the infinitesimal period relations \eqref{pqw12bbb}  \cite{Gbook,periods,reva,revb} -- these relations are akin to the 
restrictions on the gauge coupling $\mu$ coming from supersymmetry: e.g.\! the $k=3$ Hodge structure of Calabi-Yau 3-folds are equivalent to the relations of special K\"ahler geometry in the sense of $\cn=2$ SUGRA \cite{cec,stro,swampIII}. Most of these relations follow from the condition that the gauge coupling map is tame (e.g.\! if $M$ is K\"ahler the map is pluriharmonic
and the domestic geometry reduces to $tt^*$), 
hence they are automatically satisfied in the present set-up.

\begin{rigi} Under the assumptions above, the tamed map
$\phi^\prime\colon M\to \Gamma\backslash G^\prime/K^\prime$ is \emph{rigid},
that is, unique in its homotopy class. 
\end{rigi}

\subsubsection{The case of OV manifolds}\label{caseOB:sb}

In the previous subsection we got all the desired properties of domestic geometry
 in case $M$ is compact,
which unfortunately is \emph{not} a natural property in our physical applications. 
Our next goal is to show that in the arguments of \S.\ref{caseOCC:scc}
we
 may drop the assumption that $M$ is compact and replace it by the two physically natural hypotheses:
 \begin{itemize}
\item[\textit{(i)}] the source space is an OV manifold $\mathscr{M}$;
\item[\textit{(ii)}] there exists a $\rho$-twisted smooth map $\tilde\phi_0$ which, when seen as a map $\phi_0\colon \mathscr{M}\to
\Gamma\backslash G/K$, has finite energy $E(\phi_0)<\infty$.
\end{itemize}
The fact that all good properties remain true shows that of OV manifolds are really ``magic''.
The rough idea is that a finite-energy harmonic map $\mu$ whose source is an OV space
$\mathscr{M}$ behaves similarly to a map $\phi$ with a compact source space 
because the map $\mu$ must be ``trivial at infinity'' in $\mathscr{M}$. Our task is to make this idea precise. We consider the various aspects (existence, tameness, uniqueness, structural factorization, and rigidity)
one by one.

\subparagraph{Existence.} In this paragraph $M$ is any complete Riemannian manifold,
compact or otherwise.
Under the hypothesis \textit{(ii)}, it makes perfect sense to talk of continuous deformations of
$\phi_0$ which decrease its energy, so the variational strategy for the existence/regularity problem
 is still meaningful: we may think of deforming
continuously the map until we reach the absolute minimum value of the energy in
the homotopy class defined by the monodromy representation $\rho$. The tension-flow is an efficient way of implementing the deformation
in the direction of steepest descent of energy, cfr.\! eqn.\eqref{torflowa}.
Pursuing this strategy one gets:\footnote{\ \textbf{Warning.} In \S.\,\ref{caseOB:sb} we adopt the terminology of Corlette \cite{corlette2}:
a $\rho$-equivariant map $f\colon \tilde M\to G/K$ is called a \emph{$\rho$-twisted map $M\to G/K$}
(instead of a $\rho$-twisted map $\tilde M\to G/K$ as it is more natural in the physical parlance).
A $\rho$-twisted map $M\to G/K$ can also be defined as a section of the bundle
$\tilde M\times_\rho G/K$ \cite{corlette2}. }

\begin{thm}[Corlette \cite{corlette2}]  Suppose $\rho\colon \pi_1(M)\to G$ is a homomorphism with Zariski dense image and there exists a $\rho$-twisted map $\mu$ from $M$ to $G/K$ \emph{with finite energy.} Then there is a $\rho$-twisted harmonic map with finite energy from $M$ to $G/K$. 
\end{thm}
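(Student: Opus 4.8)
The plan is to realize the harmonic map as the limit of the tension flow \eqref{torflowa} started at the given finite-energy map $\mu$, following the equivariant adaptation of the Eells--Sampson method \cite{eells}. Because the flow equation $\partial_t\phi_t=T(\phi_t)$ is geometrically natural, it preserves $\rho$-equivariance: if $\phi_0=\mu$ is $\rho$-twisted then so is $\phi_t$ for every $t$ for which the flow exists. The energy is monotone along the flow,
\be
\frac{d}{dt}E(\phi_t)=-\int_M |T(\phi_t)|^2\, d\,\mathsf{vol}\leq 0,
\ee
so that $E(\phi_t)\leq E(\mu)<\infty$ stays finite and, integrating in $t$, $\int_0^\infty\!\!\int_M |T(\phi_t)|^2\,d\,\mathsf{vol}\,dt<\infty$. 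Hence there is a sequence $t_n\to\infty$ along which $\|T(\phi_{t_n})\|_{L^2}\to0$; any subsequential limit $\phi_\infty$ of $\{\phi_{t_n}\}$ is a finite-energy $\rho$-twisted harmonic map, which is exactly what we seek.

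First I would establish long-time existence and interior regularity of the flow. On the non-compact complete source this is the delicate analytic point, but it is controlled by the non-positivity of the target curvature: the Bochner--Eells--Sampson identity yields a differential inequality of the form $(\partial_t-\Delta)\mspace{1mu}e(\phi_t)\leq C\,e(\phi_t)$ for the energy density $e(\phi_t)$ (the curvature term has the favorable sign since $G/K$ has non-positive curvature operators), so no energy concentration or finite-time blow-up can occur, and parabolic Schauder estimates upgrade the flow to a smooth solution for all $t>0$. Finite initial energy together with completeness of $M$ keeps all the integrals above convergent.

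The main obstacle -- and the place where the hypotheses genuinely enter -- is to prevent the maps $\phi_{t_n}$ from sliding off to infinity in the non-compact target $G/K$, which would destroy subsequential convergence. This is precisely where Zariski-density of $\rho(\pi_1(M))$ is used. Convexity of the distance function in the non-positively curved space $G/K$ shows that an escape of $\{\phi_{t_n}\}$ to infinity would single out a point at infinity $\xi\in\partial(G/K)$, equivalently a proper parabolic subgroup $P\subset G$ (its stabilizer); $\rho$-equivariance would then force $\rho(\pi_1(M))\subseteq P$. But a Zariski-dense subgroup of the semisimple group $G$ cannot lie in any proper parabolic, since a parabolic is a proper $\R$-algebraic subgroup. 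This contradiction -- the reductivity supplied by Zariski-density -- rules out escape, so after passing to a subsequence the $\phi_{t_n}$ converge (in $C^\infty_{\mathrm{loc}}$, equivariantly) to a finite-energy harmonic map $\phi_\infty$ twisted by $\rho$.

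I expect the escape-prevention step to be the crux: making precise that the flow cannot drift to the boundary requires either a properness/Palais--Smale argument for the energy functional on the space of $\rho$-equivariant maps of bounded energy, or a direct Busemann-function estimate along the flow, and in both cases the decisive input is that Zariski-density forbids a reduction to a proper parabolic (Corlette's reductivity criterion \cite{corlette1,corlette2}). The remaining items -- preservation of equivariance, energy monotonicity, and the lower semicontinuity of energy used to conclude $E(\phi_\infty)<\infty$ -- are routine once convergence is in hand.
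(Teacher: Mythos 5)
The paper never proves this statement: it is imported verbatim as Corlette's theorem from \cite{corlette2}, and the only in-text justification is the heuristic paragraph immediately preceding it, namely that the tension flow \eqref{torflowa} implements steepest descent of the energy within the class of $\rho$-twisted finite-energy maps. Your proposal is a faithful expansion of exactly that strategy: preservation of equivariance along the flow, monotonicity $\frac{d}{dt}E(\phi_t)=-\int_M|T(\phi_t)|^2\,d\,\mathsf{vol}$ giving a sequence $t_n$ with $\|T(\phi_{t_n})\|_{L^2}\to0$, and -- the genuinely non-trivial step -- exclusion of drift to infinity in the non-compact target via the fixed-point-at-infinity argument: equivariance plus local gradient bounds would force $\rho(\pi_1(M))$ to fix a boundary point, hence to lie in a proper parabolic, contradicting Zariski density. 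Your identification of that step as the place where the hypothesis on $\rho$ does the work is correct, and it is the same mechanism that powers the cited literature.

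There is, however, one claim that is not right as stated and hides the real analytic hypotheses. The inequality $(\partial_t-\Delta)\,e(\phi_t)\le C\,e(\phi_t)$ does \emph{not} follow from non-positivity of the target curvature alone: in the Eells--Sampson Bochner formula the target curvature term is indeed favorable when $G/K$ is non-positively curved, but the constant $C$ comes from the \emph{source} Ricci term, so you need $\mathrm{Ric}_M\ge -K\,g$ (plus completeness of $M$) -- an assumption absent from the bare statement of the theorem, though it is exactly what the paper's OV axioms supply, cfr.\! eqn.\eqref{eeeeecq}. Moreover, on a non-compact source both the energy-monotonicity identity and any maximum-principle use of the Bochner inequality require justifying the boundary terms discarded in the integrations by parts, given only finite total energy rather than pointwise control; this is the same issue the paper addresses with the cutoff functions of Condition $\boldsymbol{(\ast)}$, eqn.\eqref{xxxqwrt56}, and the argument of appendix \ref{tamhol}. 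So your blueprint is the right one, but it becomes a proof only under bounded-geometry-type hypotheses on $M$ -- which is the setting (OV manifolds) in which the paper actually invokes the theorem, and which you should state explicitly rather than attribute to the sign of the target curvature.
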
 

\begin{rem} More generally, we may consider the case where the image $\Gamma\subset G$ of $\rho$
is Zariski dense in a non-compact, semi-simple subgroup $G^\prime\subset G$ (as in \S.\ref{caseOCC:scc} with ``swampy'' $\Gamma$). Then the above \textbf{Theorem}
shows the existence of a finite-energy, $\rho$-twisted harmonic map $\mu$
which factorizes through a $\rho$-twisted harmonic map $\mu^\prime$ 
as in the commutative diagram
\be\label{09nnnaq}
\begin{gathered}
\xymatrix{M \ar@/^2pc/[rr]^\mu \ar[r]_(0.33){\mu^\prime} & \Gamma\backslash G^\prime/K^\prime\;
\ar@{>>->}[r]_(0.54){\iota} & \Gamma\backslash G/K}
\end{gathered}\qquad \iota:\ \begin{smallmatrix}\text{\bf totally geodesic embedding}\end{smallmatrix}.  
\ee
We stress that when $G^\prime\subsetneq G$ the above \textbf{Theorem}, by itself,
 says nothing
about the possible existence of \emph{non}-factorized harmonic maps, i.e.\! it refers only to the existence of harmonic maps enjoying the structural factorization \eqref{09nnnaq}. Below we shall see that all
finite-action $\rho$-twisted harmonic maps are factorized as in \eqref{09nnnaq}.
\end{rem}

Roughly speaking, finite energy corresponds to finite volume of the image $\mu(\mathscr{M})$:
thus if the scalars' manifold $\mathscr{M}$ satisfies (our slightly stronger version of)
the standard swampland conjectures,
\emph{any} gauge-coupling map $\mu\colon \mathscr{M}\to \Gamma\backslash Sp(2h,\Z)/U(g)$
 is homotopic to a harmonic one, namely the fixed point $\mu_\text{harm}$ of the tension flow with initial condition $\mu$.
 The heuristic physical arguments of \S.\ref{s:tension} suggest that, in the extreme IR limit of a consistent quantum gravity, the physical coupling $\mu_\text{phys}$ actually coincides with $\mu_\text{harm}$ at least approximately and away from finite-distance singularities. 
 We stress that this statement is literally  true in all known examples
 of reliable quantum-consistent effective theories of gravity. 
 
 \medskip

\subparagraph{Tameness.} The Bochner argument around eqn.\eqref{weitza} still works
in the non-compact case \emph{provided}
we can show that the surface term in the integration over $\mathscr{M}$ of the \textsc{lhs} of
 \eqref{weitza} vanishes; in this case
 we conclude that the finite-action $\rho$-twisted harmonic map $\mu$ is actually tamed. 
 Thus to show the \textbf{Tamed property} for OV manifolds stated at the beginning of
 \S.\,\ref{jjjjasz},
 we have only to justify the dropping of the boundary term in the integration by parts of the
 \textsc{lhs} in
 eqn.\eqref{weitza} under our two assumptions that $\mathscr{M}$ is OV and $\mu$ has finite energy. 
 We defer this technicality to
appendix \ref{tamhol}.

\subparagraph{Essential uniqueness, structural factorization, and rigidity.}
As explained in the footnote \ref{iiiiiuy}, the crucial equation \eqref{forrrm} holds for any geodesic family of finite-energy maps whether the manifold $\mathscr{M}$ is compact or not. The only way
compactness enters in the game is that it guarantees the finite-energy condition for all smooth maps,
whereas in the non-compact case one should add the finite-energy condition as an independent
hypothesis and hence the results apply only to a small sub-class of harmonic maps. 
Therefore, when the finite-energy condition is satisfied, all formal consequences of \eqref{forrrm}
follow. In particular all finite-energy tamed maps, if non-rigid, belong to
one-parameter families of maps with the same energy. In turn this leads
to essential uniqueness of the finite-energy tamed maps (in their homotopy class)
 in the sense of \textbf{Definition \ref{uuniess}}.
In particular, the argument at the end of \S.\ref{caseOCC:scc}
yield
\begin{stru}
All finite-energy harmonic (hence tamed) $\rho$-twisted maps 
\be
\mu\colon\mathscr{M}\to \Gamma\backslash G/K
\ee
factorize as in eqn.\eqref{09nnnaq}. They form a $G^{\prime\prime}/K^{\prime\prime}$-family
on which the $\sigma$-model symmetry group $G^{\prime\prime}$ acts transitively. \emph{(Concretely $G^{\prime\prime}$
acts by varying the geodesic embedding $\iota$ in \eqref{09nnnaq}).}
\end{stru} 

The same argument shows that the factor map $\mu^\prime$
in \eqref{09nnnaq} is rigid.
\medskip

Nothing is said about the harmonic/tamed maps of \emph{infinite} energy
(which are expected to be the large majority).

\subsubsection{Structure of the gauge coupling $\mu$}\label{s:stru2}

We have seen above that in (arithmetic) domestic geometry the crucial    
\textbf{structural factorization} holds for a gauge coupling $\mu$.
Here we write this property in a more detailed (and convenient) form.
To simplify the formulae, we omit writing the boring factor space
$G^{\prime\prime}/K^{\prime\prime}$ and the trivial constant map into it.\footnote{\ For instance, in $\cn=2$ supergravity we write the gauge coupling as a function of the vector-multiplet scalars,
instead of a function of all scalars which is constant in the hypermultiplet scalars.}
With this convention, $\mu$
 factorizes as in the commutative diagram\vskip-3pt
\be\label{oop2a}
\xymatrix{\mathscr{M}\ar[r]_(0.27){\phi^\prime}\ar@/^2.5pc/[rr]^\mu & \Gamma\backslash \overline{\Gamma}^{\mspace{1mu}\R}/[ \overline{\Gamma}^{\mspace{1mu}\R}\cap U(h)] \mspace{4mu}\ar@{^{(}->}[r]_(0.475){\iota} & {\Gamma}\backslash Sp(2h,\R)/U(h)}
\ee
If $\mu$ is harmonic, the real Lie group (or, rather, algebraic group over $\R$) $\overline{\Gamma}^{\mspace{2mu}\R}\subset Sp(2h,\R)$
must be reductive. Hence, modulo finite groups, it has the form
\be
\overline{\Gamma}^{\mspace{1mu}\R}\cong A\times G_1\times \cdots \times G_s
\ee
with $A$ Abelian and $G_\ell$ simple. Correspondingly (up to commensurability \cite{morris})
\be
\Gamma\cong \Gamma_A\times \Gamma_1\times\cdots\times \Gamma_s
\ee with $\Gamma_A\subset A$ and
$\Gamma_\ell\subset G_\ell$. Since we are assuming $\Gamma$ to be neat and generated by unipotents, $\Gamma_A$ must be trivial. 
Then the real Lie group $\overline{\Gamma}^{\mspace{1mu}\R}\subset Sp(2h,\R)$ is either trivial or semi-simple.
In the fist case the gauge couplings $\mu$ are field-independent numerical constants. An example of this situation is given by 
the compactification of Type IIB on a rigid Calabi-Yau \cite{Cecotti:2018ufg}. 
 
\medskip

If $\Gamma$ is not trivial, the harmonic map $\phi^\prime$ decomposes into a $s$-tuple of partial maps
\be
 \phi_\ell\colon \cm\to \Gamma_\ell\backslash G_\ell/[G_\ell\cap U(h)],\quad \ell=1,\cdots,s.
\ee
We stress that all spaces through which the gauge coupling $\mu$ factorizes, i.e.\! $ \Gamma\backslash \overline{\Gamma}^{\mspace{1mu}\R}/[K\cap \overline{\Gamma}^{\mspace{1mu}\R}]$ and the $\Gamma_\ell\backslash G_\ell/[G_\ell\cap K]$ are OV manifolds.

\medskip

As already mentioned, the ``structural factorization'' \eqref{oop2a} is identical in form\footnote{\ The Griffiths period map satisfies in addition
the IPR, so we have the more detailed factorization of $\mu$
in the diagram \eqref{permapa}. } to the structure theorem for Griffiths period maps in modern Hodge theory \cite{reva,revb,MT4,periods},
which is satisfied by the low-energy effective theories of Type II compactified on a geometric family of Calabi-Yau,
whose couplings are determined by the Griffiths period map \cite{cec,stro}, and which is the main condition discriminating the 
quantum consistent $\cn=2$ supergravities from the ones belonging to the swampland \cite{swampIII}. In particular, whenever the
$\cn=2$ SUGRA satisfies the structure theorem automatically satisfies all the relevant swampland conjectures, see \cite{swampIII}.  
\medskip

It is remarkable that the very same structural properties hold in full generality -- even when the moduli space $\mathscr{M}$ has general holonomy $\mathfrak{so}(m)$ and \emph{no} natural complex structure -- by virtue of the ``magical'' properties of the OV manifolds, provided we add to the list of the swampland conjectures the statement that the IR gauge coupling $\mu$ is harmonic of finite-energy. 
\medskip

We take the above state of affairs as evidence that our working hypothesis is somehow on the right track. 

\subsubsection{Swampland conditions for $\cn=2$ SUGRA}

This paragraph is a comment about reference \cite{swampIII}.
As we know, on SUSY grounds, the vector-multiplet couplings of
a 4d $\cn=2$ SUGRA are dictated by special K\"ahler geometry
which is equivalent to a variation of Hodge structure of weight 3
with $h^{3,0}=1$. In \cite{swampIII} it was observed that a deep 
problem in math is to determine which VHS arise from geometry,
i.e.\! describes an actual family of Calabi-Yau manifolds. 
A condition mathematicians have proven to be necessary is that
the period map $p$ of the VHS satisfies the structure theorem
\cite{reva,revb,MT4,periods}.
It was proposed in \cite{swampIII} that this same condition is also
a swampland criterion for 4d $\cn=2$ effective theories.
Then  the question was the logic relation between this 
new criterion and the Ooguri-Vafa geometric swampland conjectures
\cite{OoV}. The fact that the validity of the structure theorem of \cite{MT4}
implies the swampland conjecture is easy to see \cite{swampIII}.
It was initially believed that the structure theorem was a {stronger}
requirement than the Ooguri-Vafa ones: the naive feeling was that the structure
theorem is a stringent condition with lots of Number Theoretical and Algebro-Geometric
aspects whereas the OV statements looked like simple qualitative properties of
the relevant geometries. However, now we see that the OV properties are strong enough
(when supplemented by the conditions following from $\cn=2$ supersymmetry and
some mild regularity assumption) to actually imply the structure theorem of 
\cite{MT4}, so that the two set of conditions are essentially equivalent.
%
%We stress again that the structure theorem may be seen as the statement that
%the brane amplitudes have the expected physical properties.

\subsection{First applications}

In the context of 4d $\cn=2$ supergravity, the Hodge-theoretic structure theorem -- which holds only in a tiny subset of the space of all formal $\cn=2$
\textsc{sugra} which includes all the ones arising from string theory -- has a lot of interesting implications \cite{swampIII}, which include the completeness of
instanton corrections \cite{vvvaf}.   

If our working hypothesis is correct, a structure theorem of exactly the same form holds for all consistent effective theories of quantum gravity.
However the powerful results in the $\cn=2$ situation arise from the interplay between two pieces of information: the structure theorem and special K\"ahler geometry.
In the non-SUSY case, where $\mathscr{M}$ has generic holonomy, the second ingredient is lacking,
and we are able to extract from the structure theorem much weaker physical consequences -- which however have the merit of being (conjecturally) true in full generality.
\medskip

\subparagraph{The $\cn=2$ case.} 
Let us briefly recall the situation in the $\cn=2$ context.\footnote{\ Our summary below is rather rough and the statements are not meant to be technically precise; see \cite{swampIII} for a more precise discussion.}  In this case the gauge coupling $\mu$, seen as a fibration over its image
$\mathscr{B}$,
\be
\mu\colon \mathscr{M}\to \mathscr{B}\equiv \mu(\mathscr{M})\subset \boldsymbol{\Gamma}\backslash Sp(2h,\R)/U(h)
\ee
is essentially trivial in the sense that $\mathscr{M}=\mathscr{M}_\text{hyper}\times \mathscr{M}_\text{vector}$, 
and the Griffiths' infinitesimal period relations \cite{Gbook,periods,reva,revb}, together with the Torelli theorem \cite{torelli1,torelli2,torelli3}, say
that the period map $p$ is a Griffiths-horizontal, holomorphic embedding of the universal cover of $\mathscr{M}_\text{vector}$
into the Griffiths period domain $\boldsymbol{D}_h$  (cfr.\! eqn.\eqref{perdomains})
\be
p\colon \widetilde{\mathscr{M}}_\text{vector}\to \boldsymbol{D}_h\overset{\rm def}{=}Sp(2h,\R)\big/[U(1)\times U(h-1)].
\ee
Composing with the canonical projection $\boldsymbol{D}_h\twoheadrightarrow\boldsymbol{H}_h$,
we see that the non-holomorphic smooth map $\widetilde{\mathscr{M}}_\text{vector}\to \widetilde{\mathscr{B}}\equiv \widetilde{\mu}(\widetilde{\mathscr{M}})$
 is a local isomorphism by horizontality of $\tilde p$. The K\"ahler form on $\widetilde{\mathscr{M}}_\text{vector}$ is pull back
 $\tilde p^*F$ of the curvature 2-form $F$ of the Hodge line bundle $\boldsymbol{\cl}\to\boldsymbol{D}_h$ (i.e.\! the homogeneous bundle over $\boldsymbol{D}_h$ defined by the fundamental character of the $U(1)$ factor in $H=U(1)\times U(h-1)$). Then in the $\cn=2$ case
  there is a simple relation between the gauge couplings $\tau(\phi)_{ab}$ and the scalar metric $G(\phi)_{ij}$ expressed by the moduli space Einstein equation \eqref{EE}.
   In particular, all isometries of $\widetilde{\mathscr{M}}_\text{vector}\hookrightarrow \boldsymbol{D}_h$ is the restriction of an $Sp(2h,\R)$ symmetry of the ambient space
  $\boldsymbol{D}_h$.

\subparagraph{The general case.} 
In the case of a general quantum-consistent effective theory -- with NO supersymmetry --  we do not expect a simple relation between
the couplings
$\tau(\phi)_{ab}$ and $G(\phi)_{ij}$.
However, to the extend that $\mu$ is harmonic, quantum consistency still implies subtle relations between the two couplings. In particular the scalars' metric $G(\phi)_{ij}$ is constrained by the condition that
the gauge coupling $\mu$ is harmonic for $G(\phi)_{ij}$. 
This severely restricts the allowed scalar metric.
E.g., when
 $\mathscr{M}$ is a complex OV manifold 
and $\mu$ is pluri-harmonic, it  requires $G(\phi)_{ij}$ to be K\"ahler. The scalars metric satisfies also other strong constraints: \textit{(i)}
the infinite group $\cg$ acts by isometries on $G(\phi)_{ij}$, \textit{(ii)} the volume is finite,
and \textit{(iii)} Ricci curvature satisfies the required bounds. Thus, even if the effective theory has no \textsc{susy},
 for a given gauge coupling $\mu$ there is not that much freedom in the choice of the scalars' metric $G(\phi)_{ij}$ if we wish to avoid ending in the swampland.

\medskip

Unfortunately, for a non-SUSY theory the relation between $\mu$ and the consistent scalar's metric $G(\phi)_{ij}$ is rather implicit.
For this reason, in absence of \textsc{susy} it is hard to rephrase the structure theorem for $\mu$ in terms of
geometric proprieties of $G(\phi)_{ij}$.

The structure theorem refers to properties of $\mathscr{B}\equiv \mu(\mathscr{M})$, seen as a submanifold of
the Siegel variety $Sp(2h,\Z)\backslash \boldsymbol{H}_h$, rather than directly to the intrinsic geometry of  $\mathscr{M}$. In the $\cn=2$ case $\mathscr{B}\cong\mathscr{M}_\text{vector}$ so this is not a limitation,
but in general the two spaces are quite different. 
Anyhow the $\cn=2$ statements of  \cite{swampIII} hold for general domestic geometries when referred to $\mathscr{B}$. 
In particular we have the \textit{dycothomy:}
\begin{itemize}
\item[(a)] either $\widetilde{\mathscr{B}}\subset\boldsymbol{H}_h$ is a totally geodesic submanifold, hence symmetric, and the fibers $\tilde\mu^{-1}(b)\subset \widetilde{\mathscr{M}}$ are minimal submanifolds;
\item[(b)] 
or no continuous symmetry of the ambient space $\boldsymbol{H}_h$ leaves $\widetilde{\mathscr{B}}$ fixed (as a set).
\end{itemize}
Possibility (a) corresponds to very special effective theories which look like consistent truncations of 
some $\cn\geq 3$ supegravity. The second possibility is the generic case. 
In the $\cn=2$ case this implies completeness of instanton corrections (which is expected on physical grounds \cite{vvvaf}), and this implication is likely to extend to more general situations.  

\appendix

\section{SUGRA spaces: tamed maps vs.\! special holonomy}\label{ttamed}

We want to show that if $X$ is a symmetric space
relevant for 4d SUGRA not of the form $SO(m,1)/SO(m)$ or $SU(m,1)/U(m)$
all tamed maps $f\colon X\to Y$ are totally geodesic.

We consider the symmetric Riemannian manifolds of type III, i.e. of the form $G/K$ where $G$ is some
real Lie algebra and $K$ its maximal compact subgroup. By general theory its holonomy Lie algebra is $\mathfrak{k}\equiv\mathfrak{Lie}(K)$. The space $SO(m,1)/SO(m)$ has dimension $m$ and strictly generic holonomy algebra $\mathfrak{so}(m)$, so 
has no non-trivial parallel forms, and hence in this case \emph{tamed} $\equiv$ \emph{harmonic}. The complex hyperbolic space $SU(m,1)/U(m)$ has complex dimension $m$
and holonomy Lie algebra $\mathfrak{su}(m)$, so it is a strict K\"ahler manifold and hence
for the complex hyperbolic spaces \emph{tamed} $\equiv$ \emph{pluri-harmonic}.
We shall call $SO(m,1)/SO(m)$ and $SU(m,1)/U(m)$ the \emph{strict} cases.
For all other type III symmetric spaces the holonomy algebra $\mathfrak{hol}(G/K)$ is neither generic nor strict K\"ahler, so for these spaces \emph{tamed} is strictly stronger than being harmonic or pluri-harmonic. We consider the cosets $G/K$ relevant for 4d SUGRA.

We write  $T\cong T^\vee$ for the (irreducible) holonomy representation of the symmetric space $G/K$.
The corresponding Lie algebras decomposes as $\mathfrak{g}= \mathfrak{k}\oplus T$ and the
holonomy representation on $T$ is induced by the adjoint action of $\mathfrak{g}$ on itself.
$G/K$ has a non-trivial algebra $\cp^\bullet$ of parallel forms $\Omega^{(s)}\in \wedge^{k_s}T$. We consider their 
annihilator algebra
\be
\mathfrak{a}\overset{\rm def}{=}\Big\{ a_{ij} \in \otimes^2 T\cong \mathrm{End}(T) \colon a_{[i_1 j}\,\Omega^{(s)}_{j i_2\cdots i_{k_s}]}=0\ \ 
\forall\; \Omega^{(s)}\in \cp^\bullet\Big\}
\ee
A map $f$ is tamed iff $D_i \partial_j f$ is contained in $\mathfrak{a}\cap \odot^2 T$; when this space is zero and $f$ is tamed 
we must have $D_i \partial_j f=0$, that is,
\be
\mathfrak{a}\cap \odot^2 T=0\quad\Longrightarrow\quad \text{all tamed maps are totally geodesic.
}\label{jjja1z}
\ee

$\mathfrak{a}\subset \otimes^2 T$, is a real Lie subalgebra of $\mathfrak{sl}(T)$, contains $\mathfrak{k}$ and is a $\mathfrak{k}$-invariant subspace; hence it has the form
$\mathfrak{a}= \mathfrak{k}\oplus \mathfrak{b}$ with $\mathfrak{b}\subset (\otimes^2T)_\text{traceless}$.
The algorithm goes through the following steps.
\textbf{(1)} we show that $\mathfrak{a}\cap \wedge^2T=\mathfrak{k}$ while all irreducible $\mathfrak{k}$-representations
in $(\odot^2T)_\text{traceless}$ are self-dual, so we infer that $\mathfrak{a}$ is a \emph{reductive} Lie algebra with
maximal compact subalgebra $\mathfrak{k}$. Writing $A$, $K$ for the corresponding group, $A/K$ is a, possibly trivial  or
reducible, symmetric space. \textbf{(2)} One checks in the Cartan table of symmetric space which groups
$A$, $K$ are allowed and reads from them the candidate $\mathfrak{b}$. \textbf{(3)} Finally one checks
that the candidate $\mathfrak{b}\not\subset \odot^2 T$, getting a paradox. \textbf{(4)} We conclude that  
$\mathfrak{a}\cap \odot^2 T=0$ and apply \eqref{jjja1z}.
\medskip

We run the algorithm one space at the time.
\medskip

$\bullet$ $\cn=8$ \textsc{sugra}. The scalars' space is
 $E_{7(7)}/SU(8)$. $K=SU(8)$ and $T$ is the $\boldsymbol{70}$ i.e.\! $T=\wedge^4 F$ ($F$ stands for  the fundamental of $SU(8)$). $\wedge^6 T$ contains a singlet, i.e. on the symmetric space 
 \be
 E_{7(7)}/SU(8)
 \ee
  we have a non-trivial parallel 6-form
 and $\mathfrak{so}(T)\not\subset \mathfrak{a}$. On the other hand, as $\mathfrak{su}(8)$-modules
\be
 \mathfrak{so}(T)= \wedge^2(\wedge^4F)= \mathfrak{su}(8)\oplus\boldsymbol{2352}
\ee
 so $\mathfrak{su}(8)$ is the maximal compact subalgebra of $\mathfrak{a}$. On the other hand
\be
\odot^2 T=\odot^2(\wedge^4 F)= \boldsymbol{720}\oplus \boldsymbol{1764} 
\ee
 and both representations are self-dual. Hence $\mathfrak{a}$ is semi-simple with maximal compact subalgebra
 $\mathfrak{su}(8)$ and $A/K$ is a non-compact symmetric space. Since $E_{7(7)}/SU(8)$ is the only non-trivial non-compact symmetric space of holonomy $SU(8)$, we must have either $\mathfrak{a}\cap\odot^2 T$ equal zero or $T$.
 But $T\not\subset \odot^2 T$ and the second possibility is ruled out. 
 \medskip

$\bullet$ $\cn=6$ \textsc{sugra}. The scalars' space $SO^*(12)/U(6)$ is K\"ahler; $T=(\wedge^2 F\oplus \wedge^2 \overline{F})_\R$ ($F$ is the fundamental of $U(6)$). Then
\be
\wedge^2 T= (\wedge^2 F)\otimes (\wedge^2 \bar F)\oplus \big(\wedge^2(\wedge^2 F)\oplus \wedge^2 (\wedge^2 \bar F)\big)_\R= \boldsymbol{1}\oplus\mathfrak{su}(6)\oplus \boldsymbol{189}\oplus (\boldsymbol{105}\oplus\overline{\boldsymbol{105}})_\R 
\ee
where $\boldsymbol{1}$ is the K\"ahler form. $\wedge^3(\wedge^2 F)\otimes \wedge^3(\wedge^2\bar F)$
contains 2 singlets, so that we have a parallel (3,3) form different from the cube of the K\"ahler form and
$\mathfrak{u}(15)\not\subset \mathfrak{a}$. On the other and 
\be
\mathfrak{u}(15)= (\wedge^2 F)\otimes (\wedge^2\bar F)=\mathfrak{u}(6)\oplus\boldsymbol{189}
\ee
so the maximal compact subalgebra of $\mathfrak{a}$ is $\mathfrak{u}(6)$.
Now we have 3 non-trivial symmetric spaces to consider namely $SU(6,1)/U(6)$, $Sp(12,\R)/U(6)$ and $SO^*(12)/U(6)$
 with would-be $\mathfrak{b}$, $(F\oplus \bar F)_\R$, $(\odot^2 F\oplus \odot^2\bar F)_\R$ and $(\wedge^2 F\oplus \wedge^2 \bar F)_\R$, respectively. Since $\odot^2 T$ contains only the $U(1)$ characters $4,0,-4$ we get a contradiction in all cases.
 We conclude that $\mathfrak{b}=0$, so $\mathfrak{a}\cap \odot^2 T=0$. 
 \medskip 
 
$\bullet$ $\cn=5$ \textsc{sugra} has $G/K=SU(5,1)/U(5)$ which is a strict case.
\medskip
 
$\bullet$ $\cn=4$ \textsc{sugra}. The (universal cover of) the scalars' space is reducible 
\be
SU(1,1)/U(1)\times SO(6,k)/[SO(6)\times SO(k)],
\ee
 the first factor is strict, as it is the second one when $k=1$. We focus on the second factor and assume $k\geq 2$ then $T=V_6\otimes V_k$, where $V_k$ is the vector of $SO(k)$. We have a parallel 6-form and dually a parallel $6(k-1)$ form, hence $\mathfrak{so}(6k)\not\subset\mathfrak{a}$.
 Since
\be
\wedge^2 T= \mathfrak{so}(6)\otimes (\boldsymbol{1}\oplus (\odot^2 V_k)_\text{traceless})\oplus (\boldsymbol{1}\oplus \boldsymbol{20})\otimes \mathfrak{so}(k)
\ee
we have that $\mathfrak{so}(6)\oplus\mathfrak{so}(k)$ is the maximal compact subalgebra of $\mathfrak{a}$.
We have two possible non-trivial symmetric spaces 
\be
SL(6,\R)/SO(6)\times SL(k,\R)/SO(k)\quad\text{and}\quad
SO(6,k)/[SO(6)\times SO(k)]
\ee with would-be $\mathfrak{b}\subset (\odot^2 V_6)_\text{traceless}\oplus
(\odot^2 V_k)_\text{traceless}$ and $\mathfrak{b}\subset V_6\otimes V_k$, respectively. The first one obviously does not preserve the parallel forms,
 and the second one is not contained in $\odot^2 T$. This rules out also $SL(6,\R)/SO(6)$ and $SL(k,\R)/SO(k)$
and one remains with $\mathfrak{a}=\mathfrak{so}(6)\oplus\mathfrak{so}(k)$.
\medskip

$\bullet$ $\cn=3$ \textsc{sugra}. $SU(3,k)/[SU(3)\times U(k)]$ again is K\"ahler and for $k=1$ strict. $T=(F_3\otimes F_k\oplus \bar F_3\otimes \bar F_k)_\R$. We have parallel $(3,3)$ and $(k,k)$ forms, the maximal compact subalgebra is
$\mathfrak{su}(3)\oplus\mathfrak{u}(k)$; going through the various symmetric spaces, one concludes that $\mathfrak{b}=0$.

\section{No boundary term in the Bochner argument}\label{tamhol}

As explained at the end of \S.\ref{s:stru1}, we have to show that if $\mathscr{M}$ is a OV manifold
and $f\colon\mathscr{M}\to \Lambda\backslash G/K$ is a \emph{finite-energy} harmonic map to a locally symmetric space of non-compact type, 
  the surface term 
\be\label{sjowzero}
\int_\mathscr{M} 
d \Big(g_{ij}\,\ast (d f^i \wedge \Omega)\wedge D\ast (d f^j\wedge \ast\mspace{3mu} \Omega)\Big)
\ee
vanishes (cfr.\! eqn.\eqref{weitza}).
We proceed by adapting the argument in \cite{corlette2}.
As in \S.\ref{s:cusps}, for all $R>0$ we write
 $h_R\colon \mathscr{M}\to \R$ for a Lipschitz continuous function
 such that for some fixed constant $k > 0$ \cite{ricciflat}: 
\be
0\leq h_R\leq 1,\qquad h_R=\begin{cases} 1 &\text{for }r\leq R\\
0 &\text{for }r\geq 2R,
\end{cases}\qquad \big|dh_R\big|<\frac{k}{R},
\ee
and assume \textbf{Condition $\boldsymbol{(\ast)}$} 
i.e.\! eqn.\eqref{xxxqwrt56}
\be
|\Delta h_R|<C.
\ee
We then proceed as in reference \cite{corlette2}:
\be
\left|\int_\mathscr{M} h_R\, \Delta |df|^2\, d\mspace{2mu}\mathsf{vol}\right|=\left| \int_\mathscr{M} (\Delta h_R)\,|df|^2\, d\mspace{2mu}\mathsf{vol}\right|\leq \int_\mathscr{M} |\Delta h_R|\,|df|^2\,d\mspace{2mu}\mathsf{vol}\leq C\,E(f)
\ee
and taking $R\to\infty$
\be
\left|\int_\mathscr{M} \Delta |df|^2\,d\mspace{2mu}\mathsf{vol}\right|\leq C\, E(f).
\ee
Now let $f\colon \mathscr{M}\to \Lambda\backslash G/K$ be harmonic of finite energy, $E(f)<\infty$. Since the target space has non-positive sectional curvatures, and the Ricci tensor of $\mathscr{M}$ is bounded below (cfr.\! eqn.\eqref{eeeeecq}),
\be
R_{ij}\geq -K\,g_{ij},
\ee
the Bochner formula of Eells and Sampson for harmonic maps \cite{eells}\footnote{\ $h_{ab}$ and $R_{abcd}^h$ are, respectively, the metric and the Riemann tensor of the target space.}
\be
\frac{1}{2}\Delta|df|^2 =|\nabla d f|^2+R_{ij}\, h_{ab}\,\partial^i f^a\, \partial^j f^b-R_{abcd}^h\, g^{ik} g^{jl}\, \partial_i f^a\, \partial_j f^b\,
\partial_k f^c\,\partial_l f^d
\ee
gives a bound of the form
\be
|D_i\partial_j f|^2\leq  \tfrac{1}{2}\,\Delta|df|^2+K |df|^2\quad\Rightarrow\quad \int_\mathscr{M} |D_i\partial_j f|^2 \,d\mspace{2mu}\mathsf{vol}\leq \big(\tfrac{1}{2}\,C+K\big)E(f).
\ee
This bound implies that both $(df \wedge \Omega^{(s)})$ and $D\ast(df\wedge \ast\mspace{4mu}\Omega^{(s)})$ have finite $L^2$ norms.
Now the boundary term that we have to show to vanish, \eqref{sjowzero},
 is the limit as $R\to\infty$ of
\be
\int_\mathscr{M} h_R\; d\langle \ast(df\wedge \Omega), D\ast(df\wedge \ast\mspace{4mu}\Omega)\rangle =
-\int_\mathscr{M} dh_R\wedge  \langle \ast(df\wedge \Omega), D\ast(df\wedge \ast\mspace{4mu}\Omega)\rangle
\ee
while $|dh_R|\leq k/R$ so that
\be
\left|\int_\mathscr{M} dh_R\wedge  \langle \ast(df\wedge \Omega), D\ast(df\wedge \ast\mspace{4mu}\Omega)\rangle\right| \leq \frac{k}{R}\; \Big\|\,df\wedge \Omega\,\Big\|_{L^2}\!\cdot \Big\|D\ast(df\wedge \ast\mspace{4mu}\Omega)\Big\|_{L^2}
\ee
which goes to zero as $R\to\infty$.

\end{document}